\newcommand{\bpara}[1]		{\bigskip \noindent {\bf #1}}
\definecolor{amber}{rgb}{1.0, 0.49, 0.0}
\definecolor{cadmiumgreen}{rgb}{0.0, 0.42, 0.24}
\newtheoremstyle{styleth}%
{6pt}
{6pt}
{}
{}
{\bfseries\color{black}}
{}
{.5em}
{}
\theoremstyle{styleth}
\newtheorem{theo}{Theorem}
\newtheoremstyle{styledef}%
{5pt}
{5pt}
{}
{}
{\bfseries\color{black}}
{}
{.2em}
{}
\theoremstyle{styledef}
\newtheorem{definition}{Definition}
\newtheorem{prop}{Proposition}
\newtheorem{lem}{Lemma}
\newcommand{\statedefsolid}[2]{
  \par\noindent\tikzstyle{mybox} = [fill=yellow!10,
   thick,rectangle,inner sep=10pt]
  \begin{tikzpicture}
   \node [mybox] (box){%
    \begin{minipage}{#1}{#2}\end{minipage}
   };
  \end{tikzpicture}
}
\newcommand{\statetheoremsolid}[2]{
  \par\noindent\tikzstyle{mybox} = [draw=red,fill=gray!5,
   thick,rectangle,rounded corners,inner sep=6pt]
  \begin{tikzpicture}
    \node [mybox] (box){%
    \begin{minipage}{#1}{#2}\end{minipage}
   };
  \end{tikzpicture}
}
\newcommand{\statelemsolid}[2]{
  \par\noindent\tikzstyle{mybox} = [draw=blue,fill=blue!10,
   thick,rectangle,rounded corners,inner sep=6pt]
  \begin{tikzpicture}
    \node [mybox] (box){%
    \begin{minipage}{#1}{#2}\end{minipage}
   };
  \end{tikzpicture}
}
\newcommand{\statepropsolid}[2]{
  \par\noindent\tikzstyle{mybox} = [fill=red!5,
   thick,rectangle,inner sep=6pt]
  \begin{tikzpicture}
   \node [mybox] (box){%
    \begin{minipage}{#1}{#2}\end{minipage}
   };
  \end{tikzpicture}
}
\begin{document}
\renewcommand{\thefootnote}{\normalsize \arabic{footnote}} 	

\newcommand{\pw}{PW_{\Omega}}
\newcommand{\kz}{k\in\mathbb{Z}}
\newcommand{\akz}{\forall k \in \mathbb{Z}}
\newcommand{\tu}{\mathcal{T}_u}
\newcommand{\db}{\bar{\delta}}
\newcommand{\Z}{\mathbb{Z}}
\newcommand{\N}{\mathbb{N}}
\newcommand{\R}{\mathbb{R}}
\newcommand{\LR}{L^2(\mathbb{R})}
\newcommand{\LO}{L^2([-\Omega,\Omega])}
\newcommand{\re}{\mathbb{R}}
\newcommand{\co}{\mathbb{C}}
\newcommand{\cH}{\mathcal{H}}
\newcommand{\Tu}{\mathcal{T}_u}
\newcommand{\F}{\mathcal{F}}
\newcommand{\bs}{\boldsymbol}
\newcommand{\eq}{\triangleq}
\newcommand{\cond}{S}
\newcommand{\KN}{\mathcal{K}_N}
\newcommand{\Sz}{\mathbb{S}_N}
\newcommand{\s}{ISI}
\newcommand{\MO}[0]{\mathscr{M}_\lambda}
\newcommand{\fig}[1]{Fig.~\ref{#1}}
\newcommand{\MOh}[0]{\mathscr{M}_{\boldsymbol{\mathsf{H}}}}
\newcommand{\IFint}[0]{\mathcal{L}_k}
\newcommand{\IF}[0]{\mathrm{ASDM}_\theta}
\newcommand{\EQc}[1]		{\stackrel{(\ref{#1})}{=}}

\newcommand{\fe}[1]{\left[\kern-0.30em\left[#1  \right]\kern-0.30em\right]}

\newcommand{\FD}[3]{\mathscr{D}_{#1}^{#2}\left[#3\right]}
\newcommand{\NFD}{\mathscr{D}}
\newcommand{\syntharg}[1]{\mathcal{S}_\Omega\sqb{{#1}}}
\newcommand{\synth}{\mathcal{S}_\Omega}

\newcommand{\locav}{\mathcal{L}}

\newcommand{\diff}{d}

\newcommand{\FS}[3]{\mathscr{F}\rb{{#1}_{k}^{#3},#2_{k}}}

\newcommand{\vb}[1]{\left\lvert #1 \right\rvert}
\newcommand{\rb}[1]{\left( #1 \right)}
\newcommand{\sqb}[1]{\left[ #1 \right]}
\newcommand{\cb}[1]{\left\lbrace #1 \right\rbrace}
\newcommand{\floor}[1]{\left\lfloor #1 \right\rfloor}
\newcommand{\ceil}[1]{\left\lceil #1 \right\rceil}

\newcommand\ab[1]			{{\color{blue}#1}}

\newcommand\abb[1]			{{\color{red}#1}}
\newcommand\ETP[1]         {\mathbb{E}_{p}\left(#1\right)}

\newcommand\ETPP[2]         {\mathbb{E}_{#1}\left(#2\right)}

\renewcommand\tilde{\widetilde}

\DeclarePairedDelimiter{\norm}{\Vert}{\Vert}
\DeclarePairedDelimiter{\abs}{\left|}{\right|}
\DeclarePairedDelimiter{\Prod}{\langle}{\rangle}

\newcommand{\PW}[1]{\mathsf{PW}_{#1}}

\newcommand{\M}[3]{\vb{\mu_{#1}^{#2}\sqb{#3}}}
\newcommand{\B}[3]{\vb{\beta_{#1}^{#2}\sqb{#3}}}

\newcommand{\iPW}[2]{#1 \in \mathsf{PW}_{#2}}

\renewcommand\geq\geqslant \renewcommand\leq\leqslant
\newcommand{\const}[1]{#1}

\def\ind{{\color{red}{\pmb{1}}}}

\def\ind{\mathbbmtt{1}}

\def\th{\Psi}

\def\figmode    {1}         
\def\PH         {0}         
\def\stabilization	{0}
\def\True		{1}

\def\LemProofInAppendix {1}
\def\ThProofInAppendix  {1}
\def\PropProofInAppendix  {1}
\def\TR                 {1}
\def\FL                 {0}

\title{Time Encoding via Unlimited Sampling:\\ 
Theory, Algorithms and Hardware Validation}

\author{Dorian~Florescu 
	and~Ayush~Bhandari
	
\thanks{This work is supported by the UK Research and Innovation (UKRI) council's \emph{Future Leaders Fellowship} program ``Sensing Beyond Barriers'' (MRC Fellowship award no.~MR/S034897/1). Project page for (future) release of hardware design, code and data: \href{https://bit.ly/USF-Link}{\texttt{https://bit.ly/USF-Link}}.}	
\thanks{D.~Florescu and A.~Bhandari are with the Department of Electrical and Electronic Engineering, Imperial College London, SW72AZ, UK. E-mails: \{D.Florescu, A.Bhandari\}@imperial.ac.uk or ayush@alum.MIT.edu.}
\thanks{Manuscript received on Feb. XX, 2022; accepted with minor revisions on August XX, 2022.}
}

\markboth{Manuscript, August~2022}
{IEEE TSP}

\maketitle

\begin{abstract}
\normalsize
An alternative to conventional uniform sampling is that of time encoding, which converts continuous-time signals into streams of trigger times. This gives rise to Event-Driven Sampling (EDS) models. The data-driven nature of EDS acquisition is advantageous in terms of power consumption and time resolution and is inspired by the information representation in biological nervous systems. If an analog signal is outside a predefined dynamic range, then EDS generates a low density of trigger times, which in turn leads to recovery distortion due to aliasing. In this paper, inspired by the Unlimited Sensing Framework (USF), we propose a new EDS architecture that incorporates a modulo nonlinearity prior to acquisition that we refer to as the modulo EDS  {or MEDS}. In MEDS, the modulo nonlinearity folds high dynamic range inputs into low dynamic range amplitudes, thus avoiding recovery distortion. In particular, we consider the asynchronous sigma-delta modulator (ASDM), previously used for low power analog-to-digital conversion. 
This novel MEDS based acquisition is enabled by a recent generalization of the modulo nonlinearity called modulo-hysteresis. We design a mathematically guaranteed recovery algorithm for bandlimited inputs based on a sampling rate criterion and provide reconstruction error bounds. We go beyond numerical experiments and also provide a first hardware validation of our approach, thus bridging the gap  {between theory and} practice, while
corroborating the conceptual underpinnings of our work.  

\end{abstract}

\begin{IEEEkeywords} 
Event-driven, nonuniform sampling, modulo sampling, analog-to-digital conversion (ADC).
\end{IEEEkeywords}

\IEEEpeerreviewmaketitle

\newpage
\tableofcontents

\setstretch{1.2}
\section{Introduction}
\label{sec:intro}

In {Shannon's sampling paradigm, acquisition is performed by recording the amplitude of a signal at predefined, uniform time locations. Alternatively, in \emph{time encoding} the signal is converted into a nonuniform sequence of time events, leading to the acquisition paradigm called Event-Driven Sampling (EDS) \cite{Lazar:2004,Lazar:2008,Gontier:2014:J,MartinezNuevo:2019}.

Such time events are induced by \emph{significant} changes in the input signal values, thus enabling a data-driven approach to sampling.} EDS has been adopted in engineering fields such as control engineering and signal processing \cite{Miskowicz:2018} and has a wide range of applications, including neuromorphic vision \cite{Gallego:2020}, machine learning \cite{Maass:1997,Florescu:2019:J} and brain-machine interfaces (BMIs) \cite{Ozols:2012:C}. In biology, EDS is used to model the signal transmission in the nervous system of vertebrates \cite{Gerstner:2014}. When compared to uniform sampling, event-driven sampling schemes are not controlled by an external clock signal and are characterized by low power consumption \cite{Renaudin:2000:J,Wang:2021:C} and increased time resolution \cite{Gallego:2020}.

\begin{figure}[!h]
	\centerline{\includegraphics[trim={0cm 0cm 0cm 0},clip,width=10.5cm]{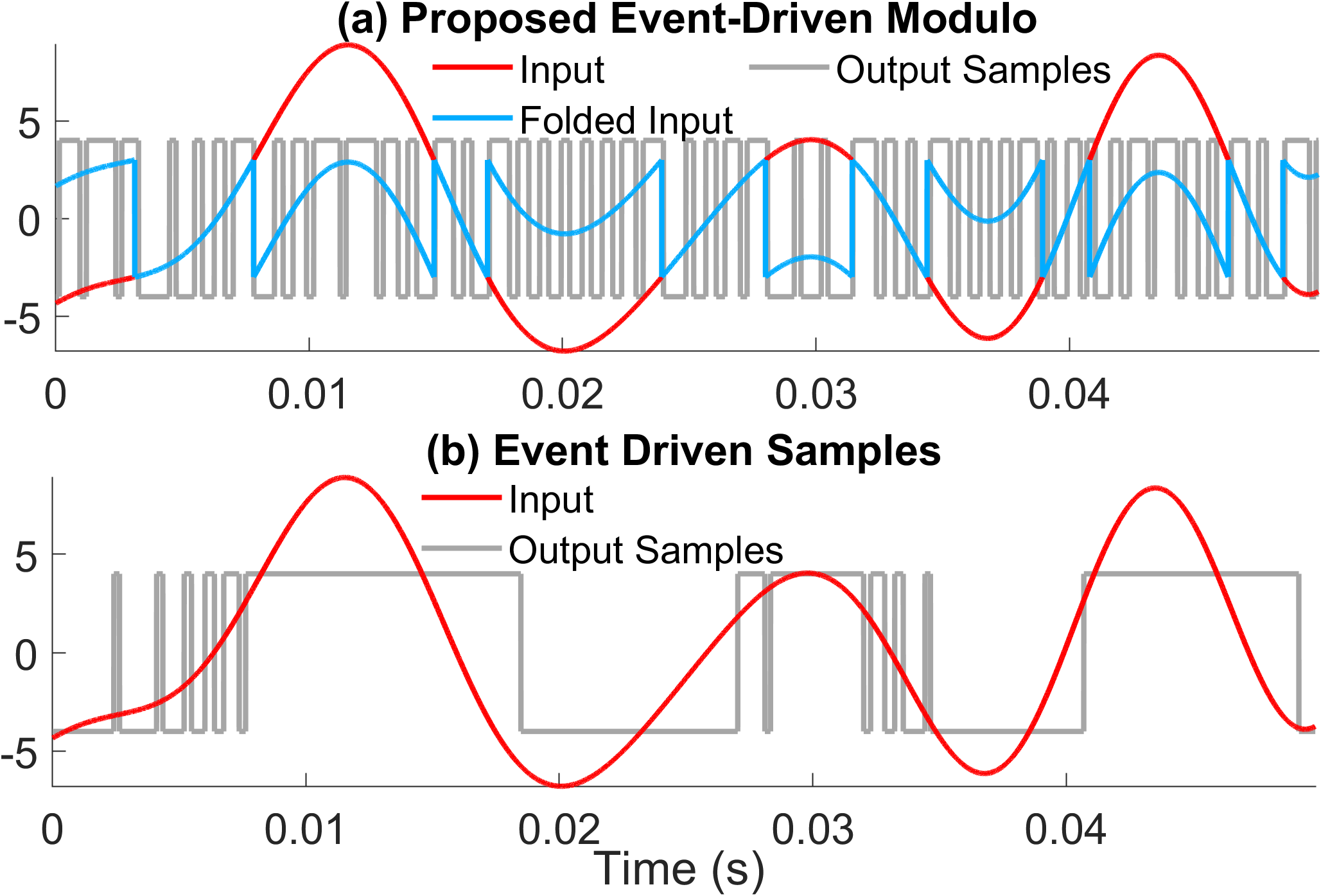}}
	\caption{The output of the proposed modulo event-driven sampling (MEDS) model and the standalone EDS implemented as an asynchronous sigma-delta modulator (ASDM) for a high dynamic range input. (a) The MEDS model folds the input amplitude thus preventing saturation. (b) The standalone ASDM has a suppressed output when the input amplitude is large. Our MEDS hardware prototype is depicted in \fig{fig:hardware}.}
	\label{fig:encoder}
\end{figure}

\bpara{Bottleneck in the Current EDS Acquisition.} In EDS, the input is encoded into an asynchronous sequence of time events or trigger times. The information content in such a sequence can be characterised by the sampling density, a positive quantity measuring the number of trigger instants per unit time. This leads to an encoding barrier: no information is being transmitted for inputs whose amplitudes are outside the range where the sampling density is positive.  {The} barrier is illustrated for the particular case of the asynchronous sigma-delta modulator  {based} EDS in \fig{fig:encoder}.  {Therefore,} the accurate recovery of the input is affected, which requires a minimal sampling density to prevent aliasing \cite{Feichtinger:1994}.  Thus, in  {EDS}, the input is typically constrained to a predefined dynamic range.

The problem of limited dynamic range is common in traditional encoding schemes such as analog-to-digital converters (ADCs), which  {clip or saturate the input} signals exceeding a predefined threshold. 
 {To overcome this fundamental bottleneck}, the Unlimited Sensing Framework (USF) \cite{Bhandari:2017,Bhandari:2019a,Bhandari:2020:J,Bhandari:2020:Pat,Bhandari:2021:J}  {was introduced and recently validated in hardware via modulo ADCs \cite{Bhandari:2021:J}---it was shown that, in practice, signals up to $24$ times larger than ADC's threshold can be recovered using modulo ADCs \cite{Bhandari:2021:J}, despite non-idealities and quantization effects}. In USF, the introduction of modulo nonlinearity converts a signal from high dynamic range (HDR) to low dynamic range. This is done by \emph{folding} the {HDR, continuous-time signal back into the dynamic range of the ADC, whenever the input amplitude} is outside a predefined range \cite{Bhandari:2017,Bhandari:2019a,Bhandari:2020:J,Bhandari:2020:Pat,Bhandari:2021:J}. The follow-up work on USF includes,
\begin{itemize}
  \item Different signal  {classes}, such as sparse signals \cite{Bhandari:2018,Bhandari:2022:J}, sinusoidal mixtures \cite{Bhandari:2018a} and spline spaces \cite{Bhandari:2020:C}.
    \item New sampling architectures \cite{Florescu:2021:J,Graf:2019,Florescu:2021:C,Florescu:2022:C}.
  \item Application frontiers such as array signal processing \cite{FernandezMenduina:2020:C,FernandezMenduina:2021:J}
and computational imaging \cite{Bhandari:2020:C,Bhandari:2020:Ca,Beckmann:2021:J}.
\end{itemize}

A more general model for modulo nonlinearity was introduced recently, namely,  \emph{modulo-hysteresis}\cite{Florescu:2021:C,Florescu:2021:J}, which, compared to ideal modulo, comprises two additional  {parameters that} can tackle non-idealities present in experimental observations. Moreover, modulo-hysteresis provided guarantees for a new class of reconstruction methods based on thresholding.

To address the dynamic range limitation in event-driven sampling schemes, we propose a new encoding model consisting of a modulo-hysteresis nonlinearity in series with an EDS model, which we call modulo EDS or MEDS. In our setting the EDS model is fixed, and the EDS input (modulo output) is guaranteed to satisfy the EDS dynamic range by selecting an appropriate modulo nonlinearity (\fig{fig:encoder}(a)). None of the existing recovery methods for EDS models can address the reconstruction of the modulo output, which is a folded signal. Therefore,  {the use of advanced recovery algorithms enables high dynamic range acquisition via MEDS that is conventionally not possible.}

The EDS model in this paper is an asynchronous sigma-delta modulator (ASDM) \cite{Lazar:2004}. Thanks to its low power consumption \cite{Roza:1997} and modular design \cite{Ozols:2015:C}, the ASDM was  {first implemented as an alternative to conventional ADCs in} \cite{Wei:2006:C} and  {subsequently} included in applications such as brain-machine interfaces \cite{Ozols:2012:C}. Assuming that the input amplitude is in a predefined range of values, the ASDM transmits trigger times measuring changes in the input integral \cite{Lazar:2004}. For inputs of amplitude outside this range, the ASDM transmits no information, as shown in \fig{fig:encoder}(b).

\bpara{Related Work.} The ASDM dynamic range can be extended by changing the model parameters or architecture. This includes increasing the sampling density by adjusting the triggering threshold \cite{Lazar:2004}, or introducing an adaptation mechanism via a new ASDM architecture \cite{Shavelis:2017:C}. Another popular EDS model is the integrate-and-fire (IF) neuron, inspired from the biological neuron, for which input reconstruction was studied for a wide range of input classes \cite{Gontier:2014:J,Alexandru:2019,Rudresh:2020} including  {recovery methods for EDS leveraging uniform sampling theory} \cite{Florescu:2015}. The recovery methods for IF also assume that the input is in a predefined dynamic range. 
It is possible to encode high dynamic range signals with an IF by generating bidirectional event-driven samples \cite{Alexandru:2019,Feichtinger:2012}. In this case, however, recovery guarantees do not exist for low amplitude inputs, for which the model does not generate output. 
Event based cameras, which use models closely related to the IF or ASDM, capture high dynamic range data with logarithmic pixels.  {This, in turn, decreases the pixel resolution for high intensities\cite{Gallego:2020}.} None of the methods above can recover modulo folded inputs from the output of an EDS. A MEDS architecture that extends the IF dynamic range was first numerically validated in \cite{Florescu:2021:C}.

\bpara{Contributions.} Our contributions are as follows,
\begin{enumerate}[leftmargin = 20pt, label = $\bullet$]
	\item We introduce a new event-driven modulo sampling scheme, comprising a modulo-hysteresis model in conjunction with an ASDM, which addresses the dynamic range restriction for an ASDM model without any parameter or architecture alteration. The new scheme is fully compatible with the existing ASDM methodology, but can also accommodate inputs that do not satisfy the dynamic range requirement.
	\item By exploiting the modulo-hysteresis architecture, we provide theoretical conditions for which the input can be recovered from the output trigger times.
	\item Extensive numerical experiments and a validation of the MEDS-based hardware prototype (cf.~\fig{fig:hardware}) show the validity of our approach.
\end{enumerate}

\bpara{Notation.} We use $\mathbb{Z}$ and $\mathbb{R}$ to denote the set of integers and real numbers, respectively, and $\mathbb{N}$ to denote the set of positive integers. Continuous functions are denoted as $f\rb{t}$ and discrete sequences as $f\sqb{k}$. The Hilbert space of square-integrable functions is denoted as $L^2\rb{\mathbb{R}}$. The norm in any space $H$ is denoted as $\norm{f}_H$. The derivative of order $\const{N}$ is denoted as $f^{\rb{\const{N}}}\rb{t}$ and, for sequences, the finite difference of order $\const{N}$ is $\Delta^\const{N} f \sqb{k}$, which is computed by applying recursively $\Delta^{N+1} f \sqb{k}= \Delta^N f\sqb{k+1}- \Delta^N f\sqb{k}$, where $\Delta^{0} f = f$.
The space of square integrable functions bandlimited to $\Omega$ is the Paley-Wiener space denoted $\PW{\Omega}$. The indicator function $\ind_{S}\rb{t}$ is $1$ for $t\in S$ and $0$ otherwise. The floor and ceiling functions are $\floor{\cdot}$ and $\ceil{\cdot}$, respectively, and $[\![ x ]\!]=x- \lfloor x \rfloor$ denotes the fractional part of $x$. Furthermore, let $ \mathrm{supp} \rb{f}$ denote the support of sequence $f\sqb{k}$, let $\vb{S}$ denote the cardinality of a set $S$, and let $\emptyset$ denote the empty set. By $f_\infty$ or $\norm{f}_\infty$ we denote the absolute sequence norm $\norm{f}_{\ell^\infty}$ or the absolute function norm $\norm{f}_{L^\infty}$, respectively. Let $\mathrm{sinc}_\Omega$ be the sinc function defined here as $\mathrm{sinc}_\Omega\rb{t}\triangleq \frac{\sin \rb{\Omega t}}{\pi t}$.

\bpara{Scope and Organization.} The ASDM model and associated reconstruction methods are summarised in Section \ref{sect:ASDM}. The MEDS architecture is introduced in Section \ref{sect:new_encoder}. The new reconstruction method from MEDS samples and associated theoretical guarantees are given in Section \ref{sect:main_results}. The numerical demonstration and MEDS hardware prototype are in Section \ref{sect:NumericalDemo}. Section \ref{sect:Proofs} comprises proofs for some of the theoretical results in Section \ref{sect:main_results}, and the conclusions are in Section \ref{sect:conclusions}.

\section{The Asynchronous Sigma-Delta Modulator}
\label{sect:ASDM}
The ASDM is an event-driven sampling model consisting of a feedback loop containing an adder, an integrator, and a noninverting Schmitt trigger, as illustrated in \fig{fig:asdm}. We introduce the formal definition of the ASDM model in Section \ref{subsect:ASDMdesc} and summarise the recovery method for bandlimited inputs and its shortcomings in Section \ref{subsect:ASDMrec}.

\begin{figure}[!t]
	\centerline{\includegraphics[trim={0cm 0cm 0cm 0},clip,width=9cm]{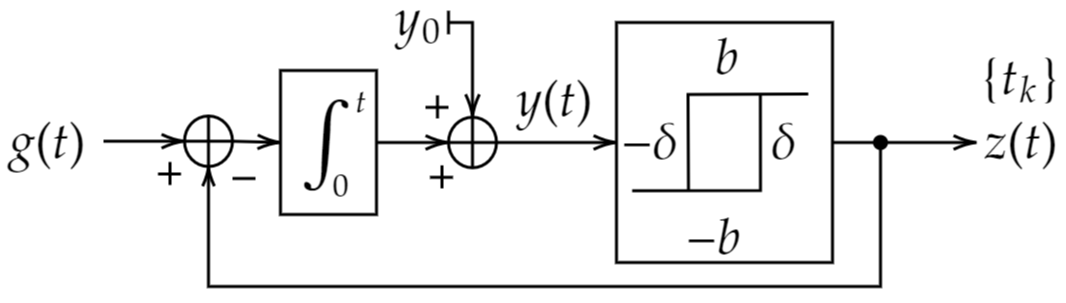}}
	\caption{The asynchronous sigma-delta modulator (ASDM) adds input $g\rb{t}$ with $z\rb{t}$, the current state of the ASDM, and the result is processed with an integrator with initial condition $y_0$ in series with a Schmitt trigger with parameters $\delta,b$.}
	\label{fig:asdm}

\end{figure}

\begin{figure}[!t]
\centerline{\includegraphics[trim={0cm 0cm 0cm 0},clip,width=9cm]{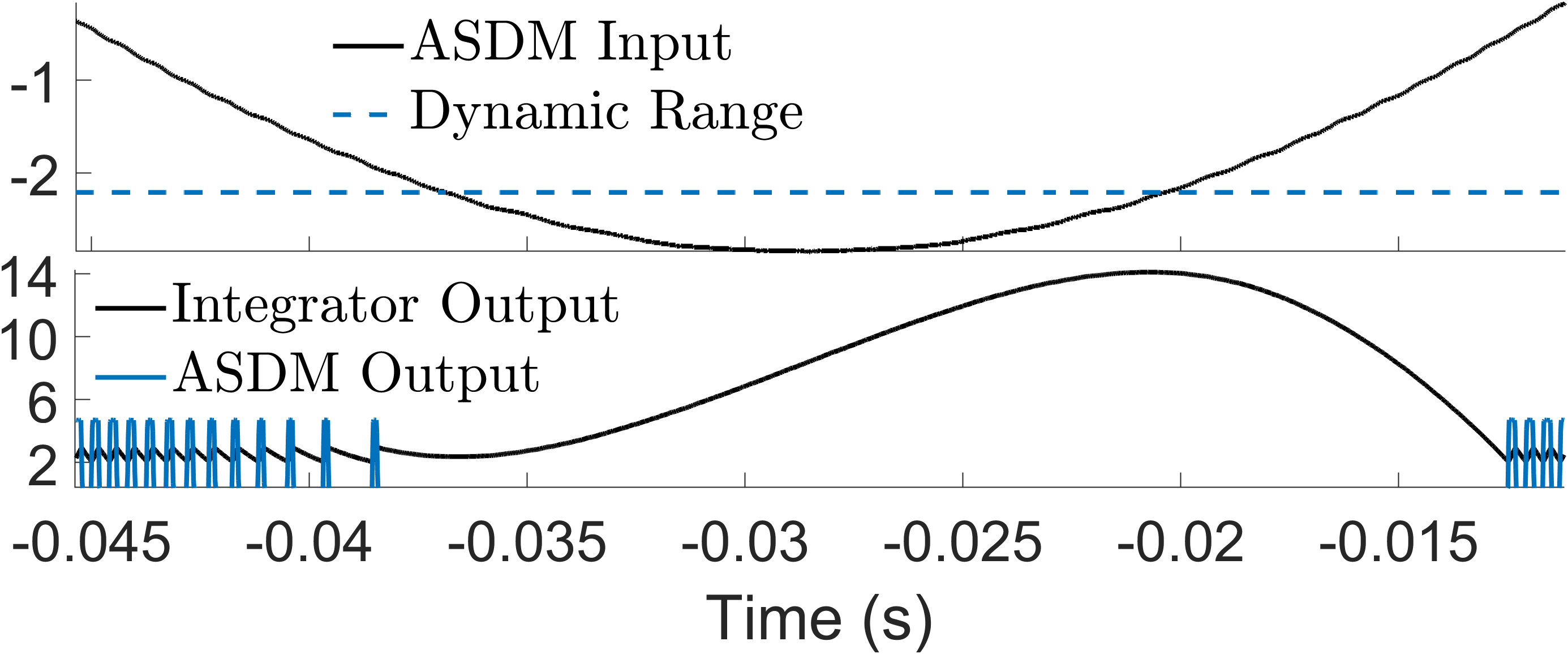}}
\caption{The effect of ASDM saturation observed in hardware measurements (cf.~ \fig{fig:hardware}(b)). Input $g(t)$ goes below the ASDM dynamic range, thus the integrator output $y(t)$ is no longer monotonic. The next trigger time is only generated when $g(t)$ is back in the dynamic range which pushes $y(t)$ to cross threshold $-b$. This illustration is based on the experiment reported in \cite{Florescu:2022:Ca}.}
\label{fig:saturation}
\end{figure}

\subsection{Model Description}
\label{subsect:ASDMdesc}

At the initial time point $t {=t_0}=0$, the output of the Schmitt trigger is set as $z\rb{t}=-b$.  {The signal $z\rb{t}$ retains this state until the first trigger time occurs, as will be explained next.} In a vicinity of the $t=0$ the integrator output $y(t)$ satisfies:
\begin{equation}
	y\rb{t}=y_0+\int_0^{t} \rb{g\rb{s} + b} ds,
	\label{eq:ASDM_init}
\end{equation}
where $g\rb{t}$ is the input,  {$y_0$ is a constant denoting the  {integrator's} initial condition,} and $b,\delta$ are the Schmitt trigger parameters. The ASDM generates output only if $|g\rb{t}|\leq c<b$, which implies that $y\rb{t}$ is strictly increasing \eqref{eq:ASDM_init}.  {When $|g\rb{t}|> c$, $y\rb{t}$ is no longer monotonic, and thus may not cross the threshold $\delta$. This saturates the ASDM, preventing any triggers until the input is back in the ASDM dynamic range (see \fig{fig:saturation}).} The trigger time $t_1$ corresponds to $y\rb{t_1}=\delta$, which flips the output of the Schmitt trigger to $b$. This enforces $y\rb{t}$ to be strictly decreasing, and the next trigger time satisfies $y\rb{t_2}=-\delta$.
The sequence of trigger times $\cb{t_k}$ is directly linked with the input and ASDM parameters via the following recursive equation, also known as the \emph{t-transform} \cite{Lazar:2004}
\begin{equation}
\int_{t_k}^{t_{k+1}} \rb{g\rb{s} + \rb{-1}^k b} ds=\rb{-1}^k 2\delta,\quad  {k\in\Z_+^*}.
\label{eq:ttransform}
\end{equation}

We model the link between the trigger times and input via the operator $\cb{t_k}=\mathrm{ASDM}_{\delta,b} \rb{g}$.
The $t$-transform was also studied in the context of other EDS models such as the IF neuron \cite{Lazar:2008}. By defining $G\rb{t}\triangleq \int_0^t g\rb{s}ds$, it can be shown via induction that $G\rb{t_k}$ can be derived directly from \eqref{eq:ttransform}. This creates a strong link between ASDM sampling and nonuniform sampling, and allows using similar analysis and recovery approaches \cite{Feichtinger:1994}.  {A hardware realisation of an ASDM is in \fig{fig:hardware}(b).}

\subsection{Reconstruction from Asynchronous Sigma-Delta Samples}
\label{subsect:ASDMrec}

The ASDM input $ g(t) $ can be perfectly reconstructed \cite{Lazar:2008} if it is bandlimited $ g \in \PW{\Omega} $ and the maximum distance between the ASDM trigger times satisfies the Nyquist rate condition, i.e., $$ \vb{t_{k+1}-t_k}<\frac{\pi}{\Omega}. $$ This condition is sufficient to guarantee recovery from nonuniform samples $G\rb{t_k}$ \cite{Feichtinger:1994}. 

In the case of the ASDM, if $\vb{g(t)}>b$ the encoder does not generate any output, meaning that $\vb{t_{k+1}-t_k}<\frac{\pi}{\Omega}$ does not always hold. To overcome this problem it is assumed that $g_\infty < b$, which {, via \eqref{eq:ttransform},} guarantees that $t_{k+1}-t_k<\frac{2\delta}{b-g_\infty}$. Therefore, the Nyquist rate condition for recovery is guaranteed if
\begin{equation}
\label{eq:IFdynrange}
g_\infty<b-\frac{2\delta \Omega}{\pi}.
\end{equation}
The input is recovered as \cite{Lazar:2004}
\begin{equation}
\label{eq:rec_local_av}
    g\rb{t}=\sum\limits_{m\in\mathbb{Z}} c_m \mathrm{sinc}_\Omega  \rb{t-\bar{t}_m},
\end{equation}
where $\bar{t}_m=\rb{t_m+t_{m+1}}/{2}$ are the midpoints of the intervals between the trigger times. The coefficients $c_m$ are computed via least squares from $$\int_{t_k}^{t_{k+1}} g\rb{s} ds \EQc{eq:ttransform}\rb{-1}^k 2\delta-\rb{-1}^k b\rb{t_{k+1}-t_k}.$$ In the context of nonuniform sampling, this procedure is also known as {\it reconstruction from local averages} \cite{Feichtinger:1994}.

Intuitively, whenever $\vb{g\rb{t}}>b$, the input does not trigger the ASDM (\fig{fig:encoder}(b)). This is because the integrator output $y\rb{t}$ is no longer monotonic and might not reach the threshold $\pm b$. This locks in the ASDM output until the threshold is reached. This effect is also observed in hardware, as illustrated in \fig{fig:saturation}. As will be shown in Section \ref{sect:NumericalDemo}, the ASDM saturation leads to unstable reconstructions.

\begin{figure}[!t]
	\centerline{\includegraphics[trim={0cm 0cm 0cm 0},clip,width=12cm]{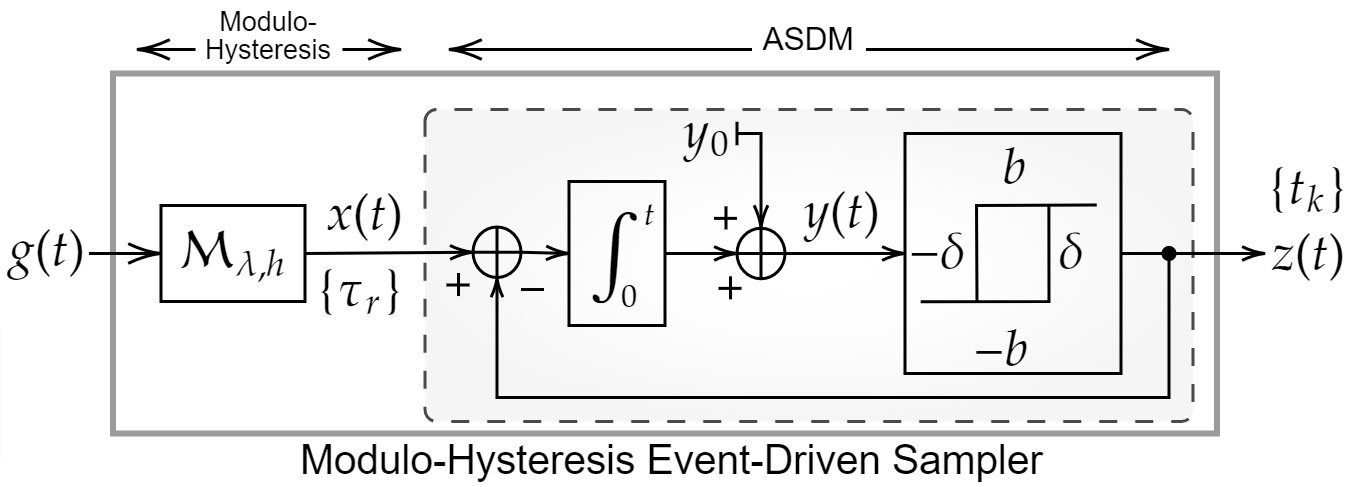}}
	\caption{The proposed encoding model.}
	\label{fig:diagram}
\end{figure}

\section{The Proposed Modulo Event-Driven Encoder}
\label{sect:new_encoder}
Here we introduce a new event-driven modulo encoder model, comprising a modulo encoder with hysteresis and an ASDM model,  {depicted in \fig{fig:diagram}}. We first present the ideal modulo encoder in \ref{subsect:idealmodulo} and explain some of its limitations. The modulo-hysteresis is then described in \ref{subsect:hystmodulo}. The proposed event-driven modulo is introduced in \ref{subsect:new_encoder}.

\subsection{The Ideal Modulo}
\label{subsect:idealmodulo}
Let $g \in \PW{\Omega} $ where $\PW{\Omega}$ is the Paley-Wiener space of $\Omega$--bandlimited functions. As in \cite{Bhandari:2018}, the modulo nonlinearity is defined by function $\MO:\PW{\Omega}\rightarrow L^2\left(\mathbb{R}\right)$
\begin{equation}
\MO \left(g\rb{t}\right) = 2\lambda \left( {\fe{ {\frac{g\rb{t}}{{2\lambda }} + \frac{1}{2} } } - \frac{1}{2} } \right)
\label{eq:fold}
\end{equation}
where $[\![ x ]\!]=x- \lfloor x \rfloor$ is the fractional part of $x$, and $L^2\rb{\mathbb{R}}$ denotes the space of functions with finite energy. 
The instantaneous times when $[\![ x ]\!]$ switches from $1$ to $0$ are called are called \emph{folding times} and denoted $\tau_r, r \in \mathbb{Z}$. The output of the model $x\rb{t}=\MO \rb{g\rb{t}}$ can be expanded as, $$x\rb{t}=g\rb{t}-\varepsilon_g\rb{t} \mbox{ where } \varepsilon_g\rb{t}=2\lambda\sum\limits_{r\in\mathbb{Z}}s_r \ind_{[\tau_r,\infty)}\rb{t},t\in\mathbb{R}, \quad s_r=\pm1.$$
In Unlimited Sampling the data is sampled uniformly with period $T$, which yields
\begin{equation}
\label{eq:ideal_mod_data}    x\rb{kT}=g\rb{kT}-\varepsilon_g\rb{kT},\quad k \in\mathbb{Z}.
\end{equation}
The input $g\rb{kT}$ is recovered by applying a finite difference filter $\Delta^N$ to \eqref{eq:ideal_mod_data}, which annihilates $\Delta^N\varepsilon_g\rb{kT}$  {given} $\Delta^N x\rb{kT}$. This is possible because $\Delta^N$ vanishes gradually $\Delta^N g\rb{kT}$ but always leaves $\Delta^N \varepsilon_g\rb{kT}$ on an equally spaced grid that is then annihilated via a modulo operation. However this approach does not work for experimental modulo data due to various non-idealities \cite{Bhandari:2021:J,Florescu:2021:J}. This  {approach also would not work} when sampling the output with an ASDM, which integrates the input on intervals of various durations. \eqref{eq:ttransform}.

\begin{figure*}[!t]
\includegraphics[width=1\textwidth]{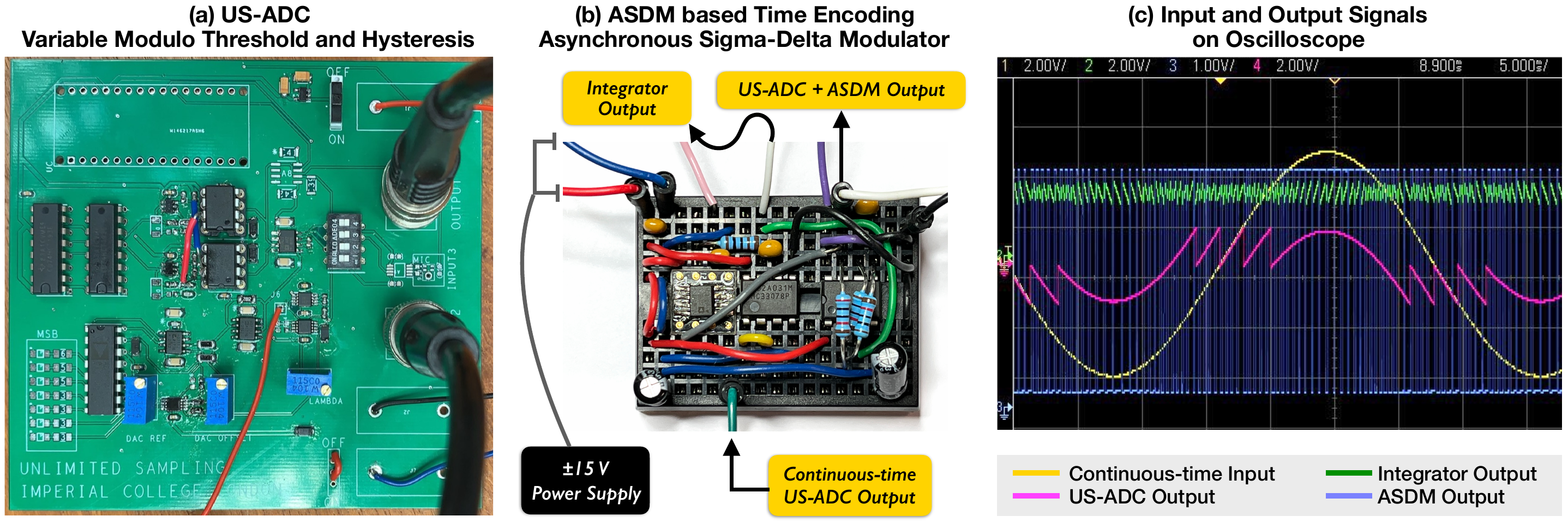}
\caption{Hardware experiment. (a) The US-ADC hardware prototype with variable parameters controlling the threshold $\lambda$ and hysteresis $h$. (b) The ASDM hardware realisation. (c) Oscilloscope screenshot showing the input and outputs of the modulo-hysteresis and ASDM hardware prototypes.}	
\label{fig:hardware}
\end{figure*}

\begin{figure}[!t]
    \centerline{\includegraphics[trim={1cm 0cm 1.5cm 0},clip,width=0.75\textwidth]{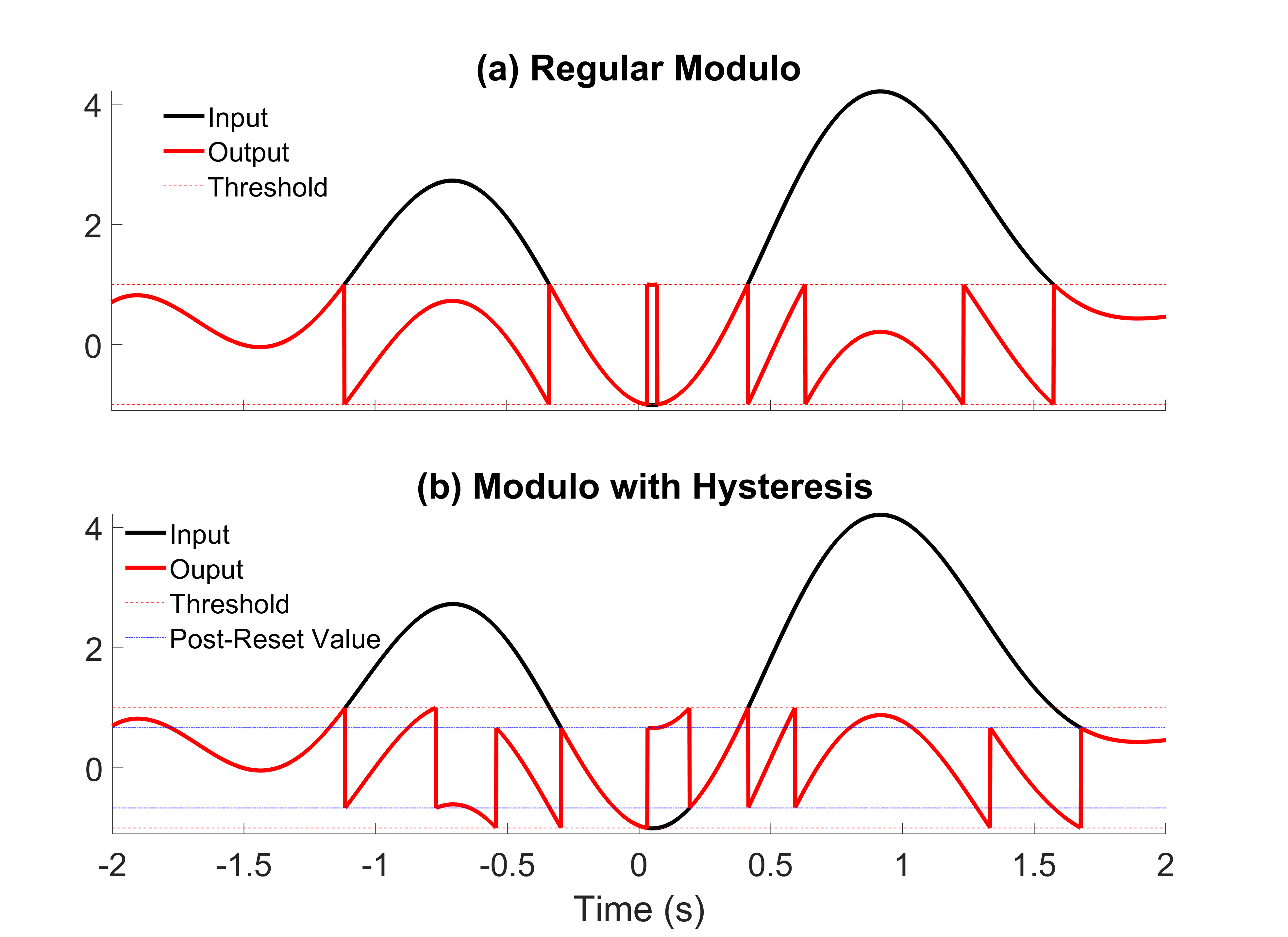} }
    \caption{Encoding a random bandlimited waveform with two types of modulo nonlinearities. (a) The ideal modulo resets the input to the opposite threshold value, and can generate arbitrarily close folding times \cite{Bhandari:2017,Bhandari:2018}. (b) The  modulo-hysteresis nonlinearity separates the thresholds and post-reset values to guarantee a minimal separation between the folding times.}
    \label{hysteresis}
\end{figure}

A different reconstruction approach from uniform samples for the ideal modulo consists of filtering the modulo output with a kernel that amplifies the folding times, which are subsequently detected via \emph{thresholding} \cite{Rudresh:2018,Graf:2019}.
The thresholding approach is particularly desirable when the data is affected by non-idealities, such as transient folding samples, where the Unlimited Sampling approach fails \cite{Florescu:2021:J}. However, the input reconstruction is not guaranteed when the folding times are very close, which can happen for the ideal modulo, as depicted in  \fig{hysteresis}(a). This problem can be solved using the modulo-hysteresis model \cite{Florescu:2021:J}, presented in the next subsection, which will be used in the present work.

\subsection{The Modulo-Hysteresis Model}
\label{subsect:hystmodulo}
The phenomenon of \emph{hysteresis}, present in ADC circuits \cite{Pereira:1999:C}, results from a difference between the quantization threshold values when the value is approached from above or below \cite{IEEEstd:2017}. This effect is also commonly exploited in circuits such as the Schmitt trigger, which avoids noise induced spurious triggering during ADC conversion. Recently,  {that} effect was exploited in a modulo-hysteresis model and associated hardware implementation  \cite{Florescu:2021:J}  {see \fig{fig:hardware}(a)}, which limits the minimum distance between two consecutive folding times.
We give below the formal definition of modulo-hysteresis.

\statedefsolid{0.95\textwidth}{
\begin{definition}[Modulo-Hysteresis]
The modulo-hysteresis with threshold $\lambda$ and hysteresis parameter $h\in\left[0,\lambda\right)$, denoted $\MOh$, $\boldsymbol{\mathsf{H}}=\sqb{\lambda\ h}$, generates an nonuniform sequence $\cb{\tau_r}$ and an analog function $x \rb{t}=\MOh g \rb{t},$ $ t\geq \tau_0$ in response to input  $g \in \PW{\Omega}$. The sequence $\cb{\tau_r}$ is computed recursively as
\begin{align*}
\tau_1&=\min\cb{t > \tau_0 \vert \MO\rb{g\rb{t}+\lambda}= 0}\\
\tau_{r+1}&=\min\cb{t> \tau_r \vert \MO\rb{g\rb{t}-g\rb{\tau_r}+h s_r}=0},
\end{align*}
where $s_r= \mathrm{sign} \rb{g\rb{\tau_r}-g\rb{\tau_{r-1}}}, n\geq1$.
The function $x \rb{t}$ returned by the encoder is then given by  
\begin{equation}
\label{eq:zt}
x \rb{t}=g\rb{t}-\varepsilon_g\rb{t},
\end{equation}
where $\varepsilon_g\rb{t}=2\lambda_h\sum\nolimits_{r\in\mathbb{Z}}s_r \ind_{[\tau_r,\infty)}\rb{t}$ and  $\lambda_h\triangleq \lambda-h/2$.
\label{def:cont_modulo}
\end{definition}
}

Modulo-hysteresis has the following important property.
\begin{lem}[Minimum Separation between Folds]
\label{lem:separation}
Let $\cb{\tau_r}_{r\in\mathbb{Z}}$ be the sequence of folding times corresponding to encoder $\MOh$, $\boldsymbol{\mathsf{H}}=\sqb{\lambda\ h}$, for an input $g\in\PW{\Omega}$. Then
\begin{equation}
\label{eq:deltatau}
\tau_{r+1}-\tau_r \geq \frac{h^*}{\Omega g_\infty}, \quad h^*=\min\cb{h,2\lambda_h}.
\end{equation}
\end{lem}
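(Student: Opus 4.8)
The plan is to track what happens to the input $g$ between two consecutive folding times and to show that $g$ must move by a definite amount, which then forces a minimum elapsed time via the bandlimitedness bound $\norm{g'}_\infty \leq \Omega\, g_\infty$ (a standard Bernstein-type inequality for $\PW{\Omega}$). Concretely, from Definition~\ref{def:cont_modulo}, the folding time $\tau_{r+1}$ is the first instant after $\tau_r$ at which $\MO\rb{g\rb{t}-g\rb{\tau_r}+h s_r}=0$, i.e.\ at which the quantity $g\rb{t}-g\rb{\tau_r}+h s_r$ first hits a nonzero multiple of $2\lambda$ (equivalently, the argument of $\fe{\cdot}$ crosses an integer). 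So the first step is to translate this condition into a statement of the form: $g\rb{\tau_{r+1}}-g\rb{\tau_r}$ equals $2\lambda\,m - h s_r$ for some integer $m$ realizing the crossing, and more usefully, that $\vb{g\rb{\tau_{r+1}}-g\rb{\tau_r}}$ is bounded below by $h^\ast = \min\cb{h,2\lambda_h}$.

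The heart of the argument is this lower bound on the amplitude increment. One does a short case analysis on the integer $m$ and the sign $s_r$. Writing $\Delta \triangleq g\rb{\tau_{r+1}}-g\rb{\tau_r}$, the crossing condition says $\Delta + h s_r$ is the first nonzero multiple of $2\lambda$ reached as $t$ increases from $\tau_r$; at $t=\tau_r$ the argument is exactly $h s_r \in (-\lambda,\lambda)\setminus\{0\}$ (using $h\in[0,\lambda)$), which lies strictly between $-2\lambda$ and $2\lambda$ and is not itself a multiple of $2\lambda$. Hence the first multiple hit is either $0$ or $\pm 2\lambda$. If it is $0$, then $\Delta = -h s_r$, so $\vb{\Delta} = h$. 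If it is $2\lambda\,\mathrm{sign}(s_r)$ (the crossing on the far side), then $\Delta = 2\lambda s_r - h s_r = s_r(2\lambda - h) = 2 s_r \lambda_h$, so $\vb{\Delta} = 2\lambda_h$. In either case $\vb{\Delta}\geq \min\cb{h,2\lambda_h} = h^\ast$. I should double-check the edge case $h=0$ (then $h^\ast=0$ and the bound is vacuous, consistent with the ideal modulo allowing arbitrarily close folds, cf.\ \fig{hysteresis}(a)), and the case $s_r$ undefined at the first step, which is why the bound is stated for the recursively generated sequence with $r\geq 1$.

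Finally, I close the loop with the mean value theorem and Bernstein's inequality: there exists $\xi\in(\tau_r,\tau_{r+1})$ with $\vb{\Delta} = \vb{g'\rb{\xi}}\,(\tau_{r+1}-\tau_r) \leq \Omega\, g_\infty\,(\tau_{r+1}-\tau_r)$. Combining with $\vb{\Delta}\geq h^\ast$ gives $\tau_{r+1}-\tau_r \geq h^\ast/(\Omega\, g_\infty)$, which is exactly \eqref{eq:deltatau}. The main obstacle I anticipate is not the analytic estimate but the bookkeeping in the case analysis: making sure the ``first nonzero multiple of $2\lambda$'' is correctly identified given the offset $h s_r$ and the sign conventions in $\MO$, and handling the definitional subtlety that $s_r$ depends on $\mathrm{sign}\rb{g\rb{\tau_r}-g\rb{\tau_{r-1}}}$ — one should verify this sign is consistent with the direction in which the argument of $\MO$ is driven toward its crossing, so that the two admissible values of $\Delta$ are indeed $-h s_r$ and $2 s_r\lambda_h$ and no other multiple of $2\lambda$ can be reached first. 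Once that is pinned down, the rest is immediate.
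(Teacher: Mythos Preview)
Your proposal is correct and follows essentially the same approach as the paper's proof: both combine the mean value theorem with Bern\v{s}te\u{\i}n's inequality $\norm{g'}_\infty\leq\Omega g_\infty$ after establishing the lower bound $\vb{g(\tau_{r+1})-g(\tau_r)}\geq h^\ast=\min\cb{h,2\lambda-h}$ from Definition~\ref{def:cont_modulo}. The only difference is expository: the paper asserts this increment bound directly ``from Definition~\ref{def:cont_modulo}'' in one line, whereas you spell out the case analysis on which multiple of $2\lambda$ is hit first (yielding $\vb{\Delta}=h$ or $\vb{\Delta}=2\lambda_h$), which is exactly the justification the paper leaves implicit.
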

\begin{proof}
From Bern\v{s}te\u{\i}n's inequality  $\norm{g^{\rb{n}}}_p\leq\Omega^n \norm{g}_p$  {(cf. pg. 116, \cite{Nikol:2012:B})} and Definition \ref{def:cont_modulo}, for $p=\infty$,  { $n=1$ and $r\geq1$}, 
\begin{equation*}
\tau_{r+1}-\tau_r  \geq \frac{\vb{g\rb{\tau_{r+1}}-g\rb{\tau_{r}}}}{\max \vb{g'\rb{t}}}  \geq \frac{\min\cb{h,2\lambda-h}}{\Omega\max \vb{g\rb{t}}}=\frac{h^*}{\Omega g_\infty}.
\end{equation*}
This proves the result.
\end{proof}

When the data is sampled with period $T$, it was shown that here $g\rb{kT}$ can still be recovered from $x\rb{kT}=\MOh g\rb{kT}$ with USF, given that $\Delta^N \varepsilon_g \rb{kT}$ is on a uniform grid with spacing $2\lambda-h$ \cite{Florescu:2021:J}. This property no longer holds when the data is sampled with an asynchronous sigma-delta modulator, as will be explained next.

\subsection{The Proposed Modulo Event-Driven Sampling Model}
\label{subsect:new_encoder}
The proposed model consists of a modulo-hysteresis in series with an ASDM model. Then,  {for $k\geq0$}, the trigger times $\cb{t_k}$ generated by the model satisfy \eqref{eq:ttransform}
\begin{equation}
\begin{gathered}
\int_{t_k}^{t_{k+1}} \MOh
g\rb{s}  ds=q_k,\\
q_k\triangleq \rb{-1}^k 2\delta-\rb{-1}^k b\rb{t_{k+1}-t_k}.
\end{gathered}
\label{eq:mod-IF}
\end{equation}
We assume $g\in\PW{\Omega}$ and $\lambda\leq b-\frac{2\delta \Omega}{\pi}$ such that if $\vb{g\rb{t}}<\lambda$ then \eqref{eq:IFdynrange} is satisfied, and thus $g$ is left unchanged by the modulo nonlinearity. This ensures that \eqref{eq:mod-IF} is equivalent to \eqref{eq:ttransform}, and thus the proposed encoder is backwards compatible with the existing results on event-driven sampling. When $\vb{g\rb{t}} > b-\frac{2\delta \Omega}{\pi}$, the asynchronous sigma-delta modulator leads to information loss, which is prevented by the proposed circuit by folding the input back into the range $\sqb{-\lambda,\lambda}$. According to Definition \ref{def:cont_modulo}, $x  \rb{t} = \MOh g\rb{t}$ satisfies
\begin{equation}
\label{eq:expanded_z}
x \rb{t}\EQc{eq:zt} 
g\rb{t}-2\lambda_h \sum\limits_{r\in\Z} s_r \ind_{\left[ \tau_r,\infty \right)}\rb{t}.
\end{equation}
Let $X\rb{t}=\int_{ {0}}^t x\rb{s} ds$. Then $ {X\rb{t_0}=X\rb{0}=0}$ and \eqref{eq:mod-IF}  {implies that}
\begin{equation}
\label{eq:nonunif_sampling}
    X\rb{t_k}=\sum\limits_{m= {0}}^{k-1} q_m,\quad k\geq {1}.
\end{equation}
Furthermore, let $ E_g\rb{t}\triangleq \int_0^t \varepsilon_g\rb{s}ds$ and $ G\rb{t}\triangleq \int_0^t g\rb{s}ds$, and thus, by integrating \eqref{eq:expanded_z} we get 
\begin{equation}
\label{eq:expanded_Z}
    X\rb{t_k}=G\rb{t_k}-E_g\rb{t_k}.
\end{equation}
 {We can then compute samples $X\rb{t_k}$ using just the model parameters and output trigger times via \eqref{eq:nonunif_sampling}. Further on, the input and  {residue} will be separated using \eqref{eq:expanded_Z}}.
Equation \eqref{eq:expanded_Z} has a similar form as \eqref{eq:ideal_mod_data}, but here $E_g\rb{t_k}$ no longer lie on an uniform grid as it involves integration on intervals of variable length. Therefore, a different recovery strategy than in the case of USF is required, which is explained in the following.

\section{Recovery from Modulo Event-Driven Data}
\label{sect:main_results}

Our goal is to recover $g\rb{t}$ from \eqref{eq:mod-IF} given time instances $\cb{t_k}_{k\in\Z}$ and design parameters $\cb{\lambda,h,b,\delta}$. 
The proposed encoding circuit guarantees the asynchronous sigma-delta modulator input is in a predefined range. 
Generally, the EDS input $x\rb{t}$ is not bandlimited, and does not even belong to a class of shift-invariant spaces, given that $ {\tau_r\geq0}$ do not lie on a uniform grid. Thus $x\rb{t}$ cannot be recovered with the traditional methods for EDS models, which assume that the input is bandlimited, or at least smooth \cite{Lazar:2008,Lazar:2010:J,Gontier:2014:J,Florescu:2021:J}. 

Given samples $X\rb{t_k}=G\rb{t_k}-E\rb{t_k}$, the main challenge is to recover both $G\rb{t_k}$ and $E\rb{t_k}$. By applying a generic filter $\NFD$ to the data $X\rb{t_k}$  {for $k\geq 2$} we get (\ref{eq:zt}, \ref{eq:nonunif_sampling})
\begin{equation}
\label{eq:Gtk}
\NFD X\rb{t_k}=\NFD G\rb{t_k}-\NFD E_g\rb{t_k}=\NFD \rb{\sum\limits_{m= {0}}^{k-1} q_m}.
\end{equation}

The input reconstruction is organised in two stages.
 {In the first stage, $\NFD G\rb{t_k}$ is interpreted as noise and an estimation $\widetilde{E}_g\rb{t_k}$ of $E_g\rb{t_k}$ is computed by thresholding $\NFD X\rb{t_k}$. This approach was used before to reconstruct modulo folded data in the context of uniform samples \cite{Rudresh:2018,Graf:2019, Florescu:2021:J}. In the second recovery stage, an estimation $\widetilde{G}\rb{t_k}$ of $G\rb{t_k}$ is computed from the integral form of the modulo-hysteresis input $X\rb{t_k}$ and $\widetilde{E}_g\rb{t_k}$ through \eqref{eq:expanded_Z}, and the respective estimate for the continuous input $\widetilde{g}\rb{t}$ is reconstructed from the estimated samples $\widetilde{G}\rb{t_k}$, such that the error norm $\norm{g-\widetilde{g}}_{L^2}$ is bounded.}

The section is organised as follows. To accommodate nonuniform samples, we define the nonuniform difference operator $\NFD^N$ in Section \ref{subsect:NonDiff}. We then describe the first recovery stage, and provide guarantees for computing $E_g\rb{t_k}$ in the particular case of one folding time in Section \ref{subsect:OneFold}, and extend it to multiple folding times in Section \ref{subsect:MultipleFolds}. Section \ref{subsect:InputRec} describes the second recovery stage and gives the input reconstruction algorithm and associated error bound.

\subsection{The Nonuniform Difference Operator}
\label{subsect:NonDiff}

Here we formulate a thresholding based recovery for the case of the ASDM model. Given samples $X\rb{t_k}=G\rb{t_k}-E\rb{t_k}$, the main challenge is to recover both $G\rb{t_k}$ and $E\rb{t_k}$. We threshold samples $\NFD X\rb{t_k}$ to recover the residual $E_g\rb{t}$, which requires identifying the folding times $\tau_r$ and signs $s_r$ \eqref{eq:expanded_z}.
Operator $\NFD$ is designed to enhance the effect of $E_g$ in \eqref{eq:Gtk}. In \cite{Florescu:2021:J} the data was processed with the uniform finite difference filter $\Delta^N$. Here we recover the input from nonuniform samples and, to this end, we choose $\NFD=\NFD^N$ defined recursively the nonuniform difference operator of order $N$, 
\begin{equation}
\label{eq:NFD}    
\NFD^N f\sqb{k}= \frac{\NFD^{N-1} f\sqb{k+1}-\NFD^{N-1} f\sqb{k}}{t_{k+N}-t_k},
\quad N>0,     
\end{equation}
where $\NFD^0 f\sqb{k} = f\sqb{k}$ and $f\sqb{k}$ is a generic sequence. Here, $\NFD^N$ 
is an extension of the difference operator $\Delta^N$ used for uniform samples. In \eqref{eq:Gtk}, the values $\NFD^N \sum_{m= {0}}^{k-1} q_m$ can be computed directly from the measurements. As is the case for USF, here the measurements lead to a mixture of the samples $\NFD^N G\rb{t_k}$ and $\NFD^N E_g\rb{t_k}$. However, unlike the uniform sampling scenario, values $\NFD^N E_g\rb{t_k}$ do not lie on an uniform grid.  {We will show that the separation of $\NFD^N G\rb{t_k}$ and $\NFD^N E_g\rb{t_k}$ is still possible via a thresholding approach. To this end, we provide the following bound, which will be used in our recovery approach}
(proof in Section \ref{sect:Proofs}).

\statepropsolid{0.95\textwidth}{
\begin{prop}
\label{prop:boundDG}
Let $\NFD^N$ be the nonuniform finite difference operator defined in \eqref{eq:NFD}. Then
\begin{equation}
\vb{ \NFD^N G\rb{t_k}}\leq \frac{1}{\Omega N!}\rb{\frac{T_\mathsf{max}}{T_\mathsf{min}}\Omega e}^N  g_\infty,
\label{eq:boundG}
\end{equation}
where $T_\mathsf{min}, T_\mathsf{max}>0$ are two constants satisfying $T_\mathsf{min} \leq  t_{k+1}-t_k \leq T_\mathsf{max}$.
\end{prop}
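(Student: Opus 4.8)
The plan is to interpret the nonuniform difference operator $\NFD^N$ acting on the sampled antiderivative $G(t_k)$ as a divided difference of the function $G$ at the nodes $t_k,\dots,t_{k+N}$. Indeed, comparing \eqref{eq:NFD} with the classical recursion for Newton divided differences, one sees that $\NFD^N G(t_k) = G[t_k,t_{k+1},\dots,t_{k+N}]$, the $N$-th order divided difference of $G$. The standard Hermite--Genocchi / mean value representation of divided differences then gives $G[t_k,\dots,t_{k+N}] = \frac{G^{(N)}(\xi)}{N!}$ for some $\xi$ in the convex hull of the nodes, so $\vb{\NFD^N G(t_k)} \leq \frac{1}{N!}\norm{G^{(N)}}_\infty$ on that interval. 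Since $G(t)=\int_0^t g(s)\,ds$, we have $G^{(N)} = g^{(N-1)}$, and Bern\v{s}te\u{\i}n's inequality (already invoked in the proof of Lemma \ref{lem:separation}) yields $\norm{g^{(N-1)}}_\infty \leq \Omega^{N-1} g_\infty$. This already gives a clean bound $\vb{\NFD^N G(t_k)} \leq \frac{\Omega^{N-1}}{N!} g_\infty = \frac{1}{\Omega N!}(\Omega)^N g_\infty$, which is the claimed bound with the factor $(T_\mathsf{max}/T_\mathsf{min}\, e)^N$ replaced by $1$ — i.e. a strictly stronger statement, so in particular the stated inequality follows.

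If instead the intended argument does not appeal to the divided-difference identity (perhaps to keep the derivation self-contained and to explain the appearance of the $T_\mathsf{max}/T_\mathsf{min}$ ratio), I would proceed by a direct induction on $N$. The base case $N=0$ is $\vb{G(t_k)} = \vb{\int_0^{t_k} g}\le$ \dots; more usefully one starts the induction at a level where a derivative bound is available. For the inductive step, write $\NFD^{N} f[k] = \frac{\NFD^{N-1}f[k+1]-\NFD^{N-1}f[k]}{t_{k+N}-t_k}$, bound the denominator below by $N\,T_\mathsf{min}$ (since $t_{k+N}-t_k = \sum_{j=0}^{N-1}(t_{k+j+1}-t_{k+j}) \geq N T_\mathsf{min}$), and bound the numerator above using the inductive hypothesis on the two shifted copies. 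This produces a recursion of the form $b_N \leq \frac{2 b_{N-1}}{N T_\mathsf{min}}$ for the bound $b_N$ on $\vb{\NFD^N G(t_k)}$; unrolling gives a $2^N/N!$ type factor, which combined with $T_\mathsf{max}$-type growth coming from the relation between finite differences and derivatives (via the mean value theorem applied once, converting $\NFD^1$ at the innermost level into $g$ evaluated somewhere and picking up factors $T_\mathsf{max}$) yields the stated $\frac{1}{\Omega N!}\bigl(\frac{T_\mathsf{max}}{T_\mathsf{min}}\Omega e\bigr)^N g_\infty$. The constant $e$ presumably enters through the bound $2^N \le$ something, or more precisely through a telescoping estimate where one wants a clean closed form; one can also obtain $e^N$ from $N^N/N! \le e^N$ if a factor $N^N$ appears from naively bounding products of gap ratios.

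The main obstacle, and the step requiring care, is relating the purely discrete object $\NFD^N G(t_k)$ to the smoothness of the continuous function $g$ in a way that the nonuniformity of the grid does not destroy. With the divided-difference viewpoint this is immediate and gives the sharp bound; with a hands-on induction one must be careful that the numerator estimate at level $N-1$ already ``knows'' it is a divided difference of a smooth function (otherwise the $\NFD^{N-1}$ of a generic sequence has no reason to be small), so one should really carry the hypothesis ``$\NFD^{N-1}G(t_k) = G[t_k,\dots,t_{k+N-1}]$ and hence is $\le \frac{1}{(N-1)!}\norm{g^{(N-2)}}_\infty$'' through the induction, rather than a bare numerical bound. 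I would therefore recommend the divided-difference route: state the identity $\NFD^N G(t_k) = G[t_k,\dots,t_{k+N}]$, invoke the Hermite--Genocchi mean-value form to get $\frac{1}{N!}\norm{G^{(N)}}_\infty$ over the hull of the nodes, apply $G^{(N)}=g^{(N-1)}$ and Bern\v{s}te\u{\i}n, and finally note that $\Omega^{N-1}/N! \le \frac{1}{\Omega N!}(\frac{T_\mathsf{max}}{T_\mathsf{min}}\Omega e)^N g_\infty$ since $T_\mathsf{max}/T_\mathsf{min}\ge 1$ and $e\ge 1$, which closes the proof of \eqref{eq:boundG}.
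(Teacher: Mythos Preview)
Your divided-difference argument is correct and in fact yields a strictly sharper inequality than \eqref{eq:boundG}. The identification $\NFD^N G(t_k)=G[t_k,\dots,t_{k+N}]$ is exact (the recursion \eqref{eq:NFD} is precisely Newton's recursion for divided differences), and since $g\in\PW{\Omega}$ is entire, the mean-value representation $G[t_k,\dots,t_{k+N}]=G^{(N)}(\xi)/N!=g^{(N-1)}(\xi)/N!$ is available; Bern\v{s}te\u{\i}n then gives $\vb{\NFD^N G(t_k)}\le \Omega^{N-1}g_\infty/N!$, which implies \eqref{eq:boundG} because $(T_\mathsf{max}/T_\mathsf{min})\,e\ge 1$.

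The paper does \emph{not} go this route. Instead it follows essentially the ``self-contained'' alternative you sketch: expand each $G(t_{k+l})$, $l=0,\dots,N$, in a Taylor series about the midpoint $\tau=(t_k+t_{k+N})/2$; use that $\NFD^N$ annihilates polynomials of degree $\le N-1$ so only the Lagrange remainder $r(k+l)=\frac{(t_{k+l}-\tau)^N}{N!}g^{(N-1)}(\zeta_{k+l})$ survives; then unroll the recursion \eqref{eq:NFD} on $\NFD^N r$, bounding each denominator below by $jT_\mathsf{min}$ and each numerator by $2\max$, to get $\vb{\NFD^N r(k)}\le \frac{2^N}{T_\mathsf{min}^N N!}\max_l\vb{r(k+l)}$. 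Combining with $\vb{t_{k+l}-\tau}\le N T_\mathsf{max}/2$ produces $(N T_\mathsf{max}/T_\mathsf{min})^N/(N!)^2$, and the factor $e^N$ enters exactly as you guessed, via Stirling in the form $N^N/N!\le e^N$; Bern\v{s}te\u{\i}n handles $\norm{g^{(N-1)}}_\infty$ at the end. Your approach is shorter, avoids the Taylor-plus-induction bookkeeping, and loses nothing: the extra factor $\bigl(\tfrac{T_\mathsf{max}}{T_\mathsf{min}}e\bigr)^N$ in the paper's bound is an artifact of this cruder chain of inequalities, not something intrinsic. The paper's derivation does have one cosmetic virtue: it makes transparent why the bound reduces, when $T_\mathsf{min}=T_\mathsf{max}=T$, to the uniform-sampling estimate $(T\Omega e)^N$ from the earlier USF literature, a connection the authors explicitly highlight.
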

}

This bound is an extension to nonuniform samples of the bound derived in \cite{Bhandari:2017} for $\vb{\Delta^N g\rb{kT}}$. Specifically, we note that in the limit case where $T_\mathsf{min}=T_\mathsf{max}=T$, and thus $t_k=kT$, we have that 
$$\NFD ^N G\rb{t_k} \EQc{eq:NFD} \frac{\Delta^N G\rb{kT}}{T^N N!}.$$ According to Bern\v{s}te\u{\i}n's we have $\norm{G'}_\infty\leq \Omega G_\infty \Rightarrow g_\infty \leq \Omega G_\infty $. Then, in this particular case, \eqref{eq:boundG} becomes $\vb{\Delta^N G\rb{kT}}\leq \rb{T \Omega e}^N G_\infty$, which is in accordance to the bound derived in \cite{Bhandari:2017,Bhandari:2020:J}.

\subsection{The Case of a Single Fold}
\label{subsect:OneFold}
For simplicity, let us first assume we have only one folding time $\tau_1$. Let $K_1$ be the index of the trigger time preceding the folding time, such that $\tau_1\in\left[ t_{K_1},t_{K_1+1} \right)$. The following result calculates the expression of $\NFD^N E_g\rb{t_k}$.

\statepropsolid{0.95\textwidth}{
\begin{prop}
\label{prop2:Eg_mu_beta}
 {For all $k\in\Z$,} sequence $\NFD^N E_g\rb{t_k}$ satisfies
\begin{equation}
\label{eq:mu_k_beta_k}
    \frac{\NFD^N E_g\rb{t_k}}{s_1\rb{2\lambda-h}}=\mu_{K_1}^N\sqb{k} p +\beta_{K_1}^N\sqb{k}\rb{1-p}, \forall N\geq 2,
\end{equation}
where $k\in\Z$, $p=\frac{t_{K_1+1}-\tau_1}{t_{K_1+1}-t_{K_1}}\in\sqb{0,1}$ and $\mu_{K_1}^N\sqb{k}$, $\beta_{K_1}^N\sqb{k}$ satisfy 
\begin{equation}
\mu_{K_1}^{N+1}\sqb{k}=\frac{\mu_{K_1}^N\sqb{{k+1}}-\mu_{K_1}^N\sqb{{k}}}{t_{k+N+1} -  t_{k}} \mbox{ and } 
\beta_{K_1}^{N+1}\sqb{k}=\frac{\beta_{K_1}^N\sqb{{k+1}}-\beta_{K_1}^N\sqb{{k}}}{t_{k+N+1} -  t_{k}},
\label{eq:mu_betak}
\end{equation}
for $N\geq2$, where $$\mu_{K_1}^2\sqb{k}=\frac{\ind_{\cb{K_1-1}}\rb{k}}{t_{k+2}-t_{k}} \mbox{ and } \beta_{K_1}^2\sqb{k}=\frac{\ind_{\cb{K_1}}\rb{k}}{t_{k+2} - t_{k}}.$$
\end{prop}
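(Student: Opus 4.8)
The plan is to compute $\NFD^N E_g\rb{t_k}$ directly from its definition, starting from the case $N=2$ and then propagating upward via the recursion \eqref{eq:NFD}. With a single fold at $\tau_1 \in [t_{K_1}, t_{K_1+1})$, the residue is $\varepsilon_g\rb{t} = 2\lambda_h s_1 \ind_{[\tau_1,\infty)}\rb{t}$, so $E_g\rb{t} = \int_0^t \varepsilon_g = 2\lambda_h s_1 (t - \tau_1)^+$, i.e.\ it is $0$ for $t < \tau_1$ and $2\lambda_h s_1 (t - \tau_1)$ for $t \geq \tau_1$. Recalling $\lambda_h = \lambda - h/2$ so that $2\lambda_h = 2\lambda - h$, I first normalise: $\widehat{E}\rb{t_k} \triangleq E_g\rb{t_k}/(s_1(2\lambda - h)) = (t_k - \tau_1)^+$. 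The whole proposition then reduces to showing $\NFD^N \widehat E\rb{t_k} = \mu_{K_1}^N\sqb{k} p + \beta_{K_1}^N\sqb{k}(1-p)$ for all $N \geq 2$, with the stated initialisations and recursions for $\mu$ and $\beta$.

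First I would handle the base case $N=2$. The sequence $\widehat E\rb{t_k}$ is the sampled ramp $(t_k - \tau_1)^+$; applying $\NFD^1$ gives a first difference that is $0$ for $k+1 \leq K_1$, equals $\frac{t_{K_1+1} - \tau_1}{t_{K_1+1}-t_{K_1}}$ at $k = K_1$ (the partial ramp on the interval containing the fold), and equals $1$ for $k \geq K_1+1$. Writing $p = \frac{t_{K_1+1}-\tau_1}{t_{K_1+1}-t_{K_1}}$, this first difference is $p\,\ind_{\{K_1\}}\sqb{k} + \ind_{\{k \geq K_1+1\}}\sqb{k}$, which I would then split as $p\,\ind_{\{K_1\}}\sqb{k} + (1-p)\,\ind_{\{k \geq K_1+1\}}\sqb{k} + p\,\ind_{\{k \geq K_1+1\}}\sqb{k} = p\,\ind_{\{k \geq K_1\}}\sqb{k} + (1-p)\,\ind_{\{k \geq K_1+1\}}\sqb{k}$. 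Applying $\NFD^1$ once more (division by $t_{k+2} - t_k$) turns each Heaviside-type step $\ind_{\{k \geq m\}}$ into $\frac{\ind_{\{m-1\}}\sqb{k}}{t_{k+2}-t_k}$, so the $p$-term yields $p\,\frac{\ind_{\{K_1-1\}}\sqb{k}}{t_{k+2}-t_k} = p\,\mu_{K_1}^2\sqb{k}$ and the $(1-p)$-term yields $(1-p)\,\frac{\ind_{\{K_1\}}\sqb{k}}{t_{k+2}-t_k} = (1-p)\,\beta_{K_1}^2\sqb{k}$, exactly matching \eqref{eq:mu_k_beta_k} at $N=2$ with the stated initial sequences.

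The induction step $N \to N+1$ is then essentially immediate: $\NFD^{N+1}$ acts on $\NFD^N$ by the linear recursion \eqref{eq:NFD}, namely $\NFD^{N+1}f\sqb{k} = \frac{\NFD^N f\sqb{k+1} - \NFD^N f\sqb{k}}{t_{k+N+1}-t_k}$. Since this map is linear in the sequence and $p$ is a constant independent of $k$, if $\NFD^N \widehat E\sqb{k} = \mu_{K_1}^N\sqb{k}p + \beta_{K_1}^N\sqb{k}(1-p)$ then applying the recursion term-by-term gives $\NFD^{N+1}\widehat E\sqb{k} = p\cdot\frac{\mu_{K_1}^N\sqb{k+1}-\mu_{K_1}^N\sqb{k}}{t_{k+N+1}-t_k} + (1-p)\cdot\frac{\beta_{K_1}^N\sqb{k+1}-\beta_{K_1}^N\sqb{k}}{t_{k+N+1}-t_k}$, which is precisely $\mu_{K_1}^{N+1}\sqb{k}p + \beta_{K_1}^{N+1}\sqb{k}(1-p)$ by the definitions \eqref{eq:mu_betak}. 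One should also note that the normalisation constant $s_1(2\lambda-h)$ is $k$-independent and pulls out of every application of $\NFD^N$, which is why it appears on the left-hand side of \eqref{eq:mu_k_beta_k}.

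The only genuinely delicate point — and the step I would be most careful about — is the bookkeeping in the base case: getting the index at which the partial ramp appears ($k = K_1$) right, correctly rewriting $p\,\ind_{\{K_1\}} + \ind_{\{k\geq K_1+1\}}$ as $p\,\ind_{\{k\geq K_1\}} + (1-p)\,\ind_{\{k\geq K_1+1\}}$, and tracking how a single further application of the nonuniform difference shifts the support index down by one while introducing the denominator $t_{k+2}-t_k$. Everything else is a routine linear induction. I would also remark that the restriction $N \geq 2$ is exactly what guarantees $\NFD^N E_g$ is a finitely-supported sequence (the ramp is killed only after two differences), so that $\mu_{K_1}^N$ and $\beta_{K_1}^N$ are well-defined sequences with the indicated support, and that the formula holds for all $k\in\Z$ with the convention that these sequences vanish outside a bounded index range.
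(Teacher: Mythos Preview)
Your proof is correct and follows essentially the same approach as the paper: compute $E_g$ explicitly as the integrated ramp, evaluate $\NFD^1$ and then $\NFD^2$ to obtain the base case, and invoke the linearity of the recursion \eqref{eq:NFD} to propagate to general $N$. Your intermediate step of rewriting $p\,\ind_{\{K_1\}} + \ind_{\{k\geq K_1+1\}}$ as $p\,\ind_{\{k\geq K_1\}} + (1-p)\,\ind_{\{k\geq K_1+1\}}$ and your explicit induction argument are a bit more detailed than the paper's terse ``the result follows via \eqref{eq:NFD} and \eqref{eq:mu_betak}'', but the substance is identical.
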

}

\begin{proof}
We compute $E_g\rb{t}$  {by integrating $\varepsilon_g\rb{t}$ defined in} \eqref{eq:zt}
\begin{equation}
E_g\rb{t}=\rb{2\lambda-h} s_1 \ind_{\left[\tau_1,\infty\right)}\rb{t}\cdot \rb{t-\tau_1}.
\end{equation}
We evaluate $\NFD^N E_g\rb{t_k}$ for $N=1,2$ as below
\begin{align*}
    \frac{\NFD^1 E_g\rb{t_k}}{s_1\rb{2\lambda-h}} &=  p\cdot  \ind_{\cb{K_1}}\rb{k}+\ind_{\cb{K_1+1,\dots,\infty}}\rb{k},\\
    \frac{\NFD^2 E_g\rb{t_k}}{s_1\rb{2\lambda-h}} &=  \frac{p}{t_{K_1+1} -  t_{K_1-1}}  \ind_{\cb{K_1-1}}\rb{k}+\frac{1-p}{t_{K_1+2} - t_{K_1}}\ind_{\cb{K_1}}\rb{k}  = \mu_{K_1}^{2}\sqb{k} p + \beta_{K_1}^{2}\sqb{k} \rb{1-p},
\end{align*}
and the result follows via \eqref{eq:NFD} and \eqref{eq:mu_betak}.
\end{proof}

 {Using \eqref{eq:mu_k_beta_k}}, the non-zero values of $\NFD^N E_g$ therefore satisfy 
\begin{equation}
\label{eq:supp_Eg}
\mathrm{supp} \rb{\NFD^N E_g\rb{t_k}} \subseteq  \mathbb{S}_N,    
\end{equation}
where $\mathbb{S}_N=\cb{{K_1-N+1},\dots,{K_1}}$.
We only have access to $\NFD^N X\rb{t_k}$, which also includes the effect of $\NFD^N G\rb{t_k}$. In the first recovery stage, we  compute the support of $\NFD^N E_g\rb{t_k}$ via thresholding. In the case of uniform sampling, a similar strategy was proposed with a fixed threshold $\th^N=\frac{\lambda_h}{2N}$  {\cite{Florescu:2021:J}}. In the present case of nonuniform sampling, we can achieve input reconstruction in a more general scenario by allowing a time-varying threshold $\th^N\sqb{k}$, which depends on the nonuniform trigger times $\cb{t_k}$. In order to constrain the effect of $\NFD^N G\rb{t_k}$ on the measurements $\NFD^N X\rb{t_k}$, the following lemma (proof in Section \ref{sect:Proofs}) derives a bound for $\NFD^N G\rb{t_k}$ and subsequently computes support of $\NFD^N E_g\rb{t_k}$.
\begin{lem}
\label{lem:km_kM}
Assume that 
\begin{equation}
    \label{eq:bound_G}
    \vb{\NFD^N G\rb{t_k}}<\th^N\sqb{k},
\end{equation}
where $\th^N\sqb{k}$ is defined as
\begin{equation*}
    \th^N\sqb{k}=\lambda_h\cdot \min\cb{\phi_{0,k+N-1},\phi_{0,k+N-2},\phi_{1,k-1},\phi_{1,k}}
\end{equation*}
and sequences $\phi_{i,l}, i=0,1,$  {$l \in \N$}, satisfy
\begin{align}
\label{eq:def_phi01}
    \begin{split}
        \phi_{0,l}&=
        \frac{\M{l}{N}{l-N+1}\cdot\B{l}{N}{l-N+2}}
        {\M{l}{N}{l-N+1}+\B{l}{N}{l-N+2}+\M{l}{N}{l-N+2}},\\
        \phi_{1,l}&=
        \frac{\M{l}{N}{l-1}\cdot\B{l}{N}{l}}
        {\M{l}{N}{l-1}+\B{l}{N}{l}+\B{l}{N}{l-1}},
    \end{split}
\end{align}
where $\mu_l^N\sqb{k}$ and $\beta_l^N\sqb{k}$ are given in \eqref{eq:mu_betak}. Furthermore, let $\KN=\cb{k_m,k_M}$ be such that
\begin{gather}
\label{eq:km_kM}
\begin{split}
k_m=\min \mathbb{M}_N, \ \  k_M=\max \mathbb{M}_N,\\
\mathbb{M}_N\triangleq\cb{k\in\Z\ \vert \  \vb{\NFD^N Z\rb{t_k}}\geq\th^N\sqb{k}}.
\end{split}
\end{gather}
Then $k_m\in\cb{K_1-N+1,K_1-N+2}$ and $k_M\in\cb{K_1-1,K_1}$.
\end{lem}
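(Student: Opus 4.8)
The plan is to combine the exact support information for $\NFD^N E_g$ from Proposition~\ref{prop2:Eg_mu_beta} with the assumed bound \eqref{eq:bound_G} on $\NFD^N G\rb{t_k}$, and show that the set $\mathbb{M}_N$ of indices where $\vb{\NFD^N X\rb{t_k}}$ exceeds the threshold is squeezed between the prescribed index ranges. First I would recall from \eqref{eq:supp_Eg} that $\NFD^N E_g\rb{t_k}$ is supported on $\Sz=\cb{K_1-N+1,\dots,K_1}$, so that for $k\notin\Sz$ we have $\NFD^N X\rb{t_k}=\NFD^N G\rb{t_k}$, and hence $\vb{\NFD^N X\rb{t_k}}<\th^N\sqb{k}$ by \eqref{eq:bound_G}; this immediately gives $\mathbb{M}_N\subseteq\Sz$, so $k_m\geq K_1-N+1$ and $k_M\leq K_1$. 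The remaining work is to rule out the two ``interior'' possibilities, i.e.\ to show $k_m\leq K_1-N+2$ and $k_M\geq K_1-1$, which is equivalent to showing that at least one of the indices $\cb{K_1-N+1,K_1-N+2}$ and at least one of $\cb{K_1-1,K_1}$ actually lies in $\mathbb{M}_N$.

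For the lower end, I would argue by contradiction: suppose neither $K_1-N+1$ nor $K_1-N+2$ belongs to $\mathbb{M}_N$. Then at both those indices $\vb{\NFD^N X\rb{t_k}}<\th^N\sqb{k}$, and since $\NFD^N X = \NFD^N G - \NFD^N E_g$, the reverse triangle inequality forces $\vb{\NFD^N E_g\rb{t_k}}\leq \vb{\NFD^N X\rb{t_k}} + \vb{\NFD^N G\rb{t_k}} < 2\th^N\sqb{k}$ at $k=K_1-N+1$ and $k=K_1-N+2$. Now I would plug in the closed form \eqref{eq:mu_k_beta_k}: at $k=K_1-N+1$ only the $\mu$-term can be nonzero (by the support structure of $\mu_{K_1}^N$ and $\beta_{K_1}^N$ built from the base cases $\mu_{K_1}^2\sqb{k}=\ind_{\cb{K_1-1}}/(t_{k+2}-t_k)$, $\beta_{K_1}^2\sqb{k}=\ind_{\cb{K_1}}/(t_{k+2}-t_k)$ and the recursion \eqref{eq:mu_betak}, whose support shifts left by one with each increment of $N$), giving $\vb{\NFD^N E_g\rb{t_{K_1-N+1}}} = \vb{2\lambda-h}\,\M{K_1}{N}{K_1-N+1}\,p$, and similarly at $k=K_1-N+2$ a combination $\vb{2\lambda-h}\,\big|\mu_{K_1}^N\sqb{K_1-N+2}\,p + \beta_{K_1}^N\sqb{K_1-N+2}\,(1-p)\big|$ in which a third $\mu$ or $\beta$ coefficient may also appear. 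The definition of $\phi_{0,l}$ in \eqref{eq:def_phi01}, evaluated at $l=k+N-1$ and $l=k+N-2$, is engineered precisely so that the two inequalities $\vb{\NFD^N E_g}<2\th^N\sqb{k}$ at these indices, taken together and using $\lambda_h=\lambda-h/2$ so $\vb{2\lambda-h}=2\lambda_h$, cannot both hold for any $p\in[0,1]$ — one isolates $p$ from one inequality and $1-p$ from the other, and the denominators in $\phi_{0,l}$ (sums of the relevant $\mu$ and $\beta$ magnitudes) make the two resulting constraints on $p$ incompatible. This contradiction yields $k_m\in\cb{K_1-N+1,K_1-N+2}$.

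For the upper end I would run the symmetric argument at the indices $K_1-1$ and $K_1$, where now the $\beta$-terms dominate: at $k=K_1$ only $\beta_{K_1}^N\sqb{K_1}$ survives, at $k=K_1-1$ one gets a combination of $\mu_{K_1}^N\sqb{K_1-1}$, $\beta_{K_1}^N\sqb{K_1-1}$ and $\beta_{K_1}^N\sqb{K_1}$, and $\phi_{1,l}$ evaluated at $l=k-1$ and $l=k$ is the quantity that makes the pair of hypothetical sub-threshold inequalities contradictory for every $p\in[0,1]$. Concluding, $\mathbb{M}_N\subseteq\Sz$ together with the two non-emptiness facts gives $k_m\in\cb{K_1-N+1,K_1-N+2}$ and $k_M\in\cb{K_1-1,K_1}$, as claimed. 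The main obstacle I anticipate is the bookkeeping in the second and third paragraphs: correctly tracking which $\mu$ and $\beta$ coefficients are nonzero at each of the four critical indices as $N$ varies (the base case $N=2$ is degenerate, since $\Sz$ then has only two elements $\cb{K_1-1,K_1}$ and the ``lower'' and ``upper'' index pairs overlap), and verifying that the algebraic form of $\phi_{0,l}$ and $\phi_{1,l}$ really is the sharp threshold that makes the two-inequality systems infeasible — this is where the specific shape of the denominators must be matched against the triangle-inequality estimates term by term.
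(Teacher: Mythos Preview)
Your proposal is essentially the paper's approach: the paper also first shows $\mathbb{M}_N\subseteq\Sz$ from \eqref{eq:supp_Eg} and \eqref{eq:bound_G}, then proves directly that $\max\{E^N_{K_1-N+1},E^N_{K_1-N+2}\}\geq\phi_{0,K_1}$ (and the symmetric statement with $\phi_{1,K_1}$), which is exactly the contrapositive of your contradiction argument. The paper carries out the step you flagged as the obstacle by an explicit minimax computation on the two piecewise-linear functions $f_1(x)=\vb{\bar\mu_1}x$ and $f_2(x)=\bigl\lvert\,\vb{\bar\mu_2}x-\vb{\bar\beta_2}(1-x)\,\bigr\rvert$, locating their crossing point $x^{**}$ and checking $f_1(x^{**})=\phi_{0,K_1}$.

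Two points to tighten. First, at each fixed $k$ the formula \eqref{eq:mu_k_beta_k} gives exactly $\mu_{K_1}^N[k]\,p+\beta_{K_1}^N[k](1-p)$; there is no ``third coefficient'' at a single index --- the third term in the denominator of $\phi_{0,l}$ (resp.\ $\phi_{1,l}$) enters because the bound couples the values of $E^N_k$ at \emph{two} adjacent indices. Second, and more substantively, the ingredient you have not named is the sign alternation of $\mu_l^N$ and $\beta_l^N$ (the paper records this as $\mathrm{sign}(\mu_l^N[k])\cdot\mathrm{sign}(\beta_l^N[k])\leq 0$ in a separate proposition): it is precisely this that lets you rewrite $\vb{\bar\mu_2 p+\bar\beta_2(1-p)}$ as $\bigl\lvert\,\vb{\bar\mu_2}p-\vb{\bar\beta_2}(1-p)\,\bigr\rvert$, without which the two sub-threshold inequalities need not be incompatible. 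Once you invoke that sign property your contradiction goes through cleanly and matches the paper's computation.
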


The next theorem is our main result, showing that $s_1\in\cb{\pm1}$ can be correctly identified and $\tau_1$ can be computed with an accuracy dependent on the sampling rate.

\statetheoremsolid{0.95\textwidth}{
\begin{theo}[Estimation of Folding Times]
\label{th:convergence_tau}
Let $g\in\PW{\Omega}$,
$$\cb{t_k}_{k\in\Z}=\mathrm{ASDM}_{\delta,b} \rb{\MOh g}$$ 
and assume $\vb{\NFD^N G\rb{t_k}}<\th^N\sqb{k}, \forall k \in \mathbb{Z}$. Let $\KN = \{k_m, k_M\}$ defined as in \eqref{eq:km_kM}, and let   $\widetilde{s}_1$ be the estimation of the folding sign given by
\begin{equation}
\label{eq:s_1}
    \widetilde{s}_1=-\mathrm{sign} \sqb{\NFD^N X \rb{t_{k_m}}}.
\end{equation}
Let $T_\mathsf{min}, T_\mathsf{max}>0$ be two constants satisfying $T_\mathsf{min} \leq  t_{k+1}-t_k \leq T_\mathsf{max}$. The folding time estimation $\widetilde{\tau}_1$ is defined as follows.
\begin{enumerate}
    \item If $k_M-k_m=N-1$, let $\widetilde{\tau}_1=\frac{t_{k_M}+t_{k_M+1}}{2}$ and thus
\end{enumerate} 
\begin{equation}
\label{eq:tau1_err1}
    \widetilde{s}_1=s_1, \quad\vb{\tau_1-\widetilde{\tau}_1}\leq T_\mathsf{max}/2.
\end{equation}
\begin{enumerate}[start=2]
    \item If $k_M-k_m=N-2$, let $\widetilde{\tau}_1=t_{k_M+1}$. Then
\begin{equation}
\label{eq:tau1_err2}
    \widetilde{s}_1=s_1, \quad\vb{\tau_1-\widetilde{\tau}_1}\leq T_\mathsf{max}.
\end{equation}
\end{enumerate}
\begin{enumerate}[start=3]
    \item If $k_M-k_m=N-3$, then $\widetilde{\tau}_1=\frac{t_{k_M+1}+t_{k_M+2}}{2}$ and \eqref{eq:tau1_err1} holds.
\end{enumerate}
\end{theo}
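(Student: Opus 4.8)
The plan is to fuse the structural description of the fold residual in Proposition~\ref{prop2:Eg_mu_beta} with the localisation of the threshold crossings in Lemma~\ref{lem:km_kM}, and then to treat the sign $\widetilde{s}_1$ and the location $\widetilde{\tau}_1$ separately. The first ingredient is an observation implicit in the setup: since $\NFD^N E_g\rb{t_k}$ vanishes off $\Sz=\cb{K_1-N+1,\dots,K_1}$ by \eqref{eq:supp_Eg}, at any $k$ with $\NFD^N E_g\rb{t_k}=0$ we get $\vb{\NFD^N X\rb{t_k}}=\vb{\NFD^N G\rb{t_k}}<\th^N\sqb{k}$, so $\mathbb{M}_N\subseteq\Sz$. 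Combined with Lemma~\ref{lem:km_kM} this forces $k_m\in\cb{K_1-N+1,K_1-N+2}$ and $k_M\in\cb{K_1-1,K_1}$, hence $k_M-k_m\in\cb{N-3,N-2,N-1}$, and each value determines the pair: $N-1$ gives $(k_m,k_M)=(K_1-N+1,K_1)$; $N-3$ gives $(k_m,k_M)=(K_1-N+2,K_1-1)$; $N-2$ leaves $(k_m,k_M)\in\cb{(K_1-N+1,K_1-1),(K_1-N+2,K_1)}$. This dictionary between $k_M-k_m$ and $K_1$ is what makes the three listed cases exhaustive.

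The location bound is then routine case bookkeeping. In each branch $k_M$ pins down $K_1$ and $\tau_1\in\sqb{t_{K_1},t_{K_1+1})$. For $k_M-k_m\in\cb{N-1,N-3}$ the prescribed $\widetilde{\tau}_1$ is the midpoint of $\sqb{t_{K_1},t_{K_1+1})$ (which is $\sqb{t_{k_M},t_{k_M+1})$ when $k_M-k_m=N-1$ and $\sqb{t_{k_M+1},t_{k_M+2})$ when $k_M-k_m=N-3$), so $\vb{\tau_1-\widetilde{\tau}_1}\leq\tfrac12\rb{t_{K_1+1}-t_{K_1}}\leq T_\mathsf{max}/2$, i.e.\ \eqref{eq:tau1_err1}. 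For $k_M-k_m=N-2$ the single formula $\widetilde{\tau}_1=t_{k_M+1}$ equals $t_{K_1}$ in the sub-branch $k_m=K_1-N+1$ and $t_{K_1+1}$ in the sub-branch $k_m=K_1-N+2$; in either case it is an endpoint of $\sqb{t_{K_1},t_{K_1+1})$, so $\vb{\tau_1-\widetilde{\tau}_1}\leq t_{K_1+1}-t_{K_1}\leq T_\mathsf{max}$, i.e.\ \eqref{eq:tau1_err2}.

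For the sign, since $\vb{\NFD^N G\rb{t_{k_m}}}<\th^N\sqb{k_m}$ by hypothesis while $\vb{\NFD^N X\rb{t_{k_m}}}\geq\th^N\sqb{k_m}$ because $k_m\in\mathbb{M}_N$, the identity $\NFD^N X=\NFD^N G-\NFD^N E_g$ forces $\NFD^N E_g\rb{t_{k_m}}\neq0$ and $\mathrm{sign}\rb{\NFD^N X\rb{t_{k_m}}}=-\mathrm{sign}\rb{\NFD^N E_g\rb{t_{k_m}}}$ (a summand bounded in modulus by $\th^N\sqb{k_m}$ cannot flip the sign of one exceeding $\th^N\sqb{k_m}$ in modulus). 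So $\widetilde{s}_1=s_1$ reduces to $\mathrm{sign}\rb{\NFD^N E_g\rb{t_{k_m}}}=s_1$. By Proposition~\ref{prop2:Eg_mu_beta}, $\NFD^N E_g\rb{t_{k_m}}=s_1\rb{2\lambda-h}\rb{\mu_{K_1}^N\sqb{k_m}p+\beta_{K_1}^N\sqb{k_m}\rb{1-p}}$ with $p=\frac{t_{K_1+1}-\tau_1}{t_{K_1+1}-t_{K_1}}\in(0,1]$ (positive because $\tau_1<t_{K_1+1}$), so it suffices that the parenthesised factor be strictly positive. A short induction on the recursion \eqref{eq:mu_betak} started from the single-spike sequences $\mu_{K_1}^2,\beta_{K_1}^2$ of Proposition~\ref{prop2:Eg_mu_beta} shows $\mathrm{supp}\,\mu_{K_1}^N=\cb{K_1-N+1,\dots,K_1-1}$ with $\mu_{K_1}^N\sqb{K_1-N+1}>0$ and $\mathrm{supp}\,\beta_{K_1}^N=\cb{K_1-N+2,\dots,K_1}$ with $\beta_{K_1}^N\sqb{K_1-N+2}>0$. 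When $k_m=K_1-N+1$ the $\beta$-term drops and the factor is $\mu_{K_1}^N\sqb{K_1-N+1}p>0$, so $\mathrm{sign}\rb{\NFD^N E_g\rb{t_{k_m}}}=s_1$ immediately.

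The remaining sub-branch $k_m=K_1-N+2$ --- which happens exactly when $K_1-N+1\notin\mathbb{M}_N$, i.e.\ $\vb{\NFD^N X\rb{t_{K_1-N+1}}}<\th^N\sqb{K_1-N+1}$ --- is the crux. Here I would combine this with $\vb{\NFD^N G\rb{t_{K_1-N+1}}}<\th^N\sqb{K_1-N+1}$ and the explicit value $\NFD^N E_g\rb{t_{K_1-N+1}}=s_1\rb{2\lambda-h}\mu_{K_1}^N\sqb{K_1-N+1}p$ to obtain, by the triangle inequality, $\rb{2\lambda-h}\M{K_1}{N}{K_1-N+1}p<2\,\th^N\sqb{K_1-N+1}\leq2\lambda_h\,\phi_{0,K_1}$, the last step because $\phi_{0,K_1}$ is one of the quantities minimised in $\th^N\sqb{K_1-N+1}$. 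Substituting \eqref{eq:def_phi01} for $\phi_{0,K_1}$, cancelling $\rb{2\lambda-h}\M{K_1}{N}{K_1-N+1}=2\lambda_h\M{K_1}{N}{K_1-N+1}$, and dropping the positive term $\M{K_1}{N}{K_1-N+1}$ from the denominator gives $p<\B{K_1}{N}{K_1-N+2}/\rb{\B{K_1}{N}{K_1-N+2}+\M{K_1}{N}{K_1-N+2}}$, whence $\mu_{K_1}^N\sqb{k_m}p+\beta_{K_1}^N\sqb{k_m}\rb{1-p}\geq\beta_{K_1}^N\sqb{K_1-N+2}\rb{1-p}-\M{K_1}{N}{K_1-N+2}p>0$. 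Extracting this bound on $p$ from the \emph{time-varying} threshold --- which is precisely why $\th^N\sqb{k}$ is built from the $\phi_{0,\cdot},\phi_{1,\cdot}$ quantities --- is the main obstacle; the remainder is bookkeeping.
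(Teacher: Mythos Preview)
Your proposal is correct and follows essentially the same approach as the paper's proof: both rely on Lemma~\ref{lem:km_kM} to enumerate the possible $(k_m,k_M)$ pairs, then handle the sign by showing the factor $\mu_{K_1}^N\sqb{k_m}p+\beta_{K_1}^N\sqb{k_m}(1-p)$ is positive, with the key sub-case $k_m=K_1-N+2$ resolved by extracting an upper bound on $p$ from the fact that $K_1-N+1\notin\mathbb{M}_N$ together with the definition of $\phi_{0,K_1}$. Your write-up is slightly more streamlined --- you treat the $k_m=K_1-N+2$ sub-case once (covering both cases 2(b) and 3) and you drop $\M{K_1}{N}{K_1-N+1}$ from the denominator to obtain a looser but still sufficient bound on $p$, whereas the paper works with the sharper $p<x^{\ast\ast}$ and verifies $f(x^{\ast\ast})>0$ using the sign alternation of $\mu,\beta$ --- but these are cosmetic differences, not a different route.
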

}

Given that $t_{k+1}-t_k\leq \frac{2\delta}{b-g_\infty}, \forall k \in \mathbb{Z}$, by choosing $T_\mathsf{max}=\frac{2\delta}{b-g_\infty}$, then $\tau_1$ can be recovered with any accuracy for a small enough $\delta$ given by (\ref{eq:tau1_err1},\ref{eq:tau1_err2}).
We note that in 
USF the folding times $\cb{\tau_r}$ are estimated as $\tau_r\approx K_r T$, leading to a similar error as here, bounded by the sampling period $T$.

\subsection{The Case of Multiple Folds}
\label{subsect:MultipleFolds}

In the general case there are $R$ folding times $\cb{\tau_r}_{r=1}^R$, and each has an associated set $\mathbb{S}_N^r=\cb{t_{K_r-N+1},\dots,t_{K_r}}$, such that the non-zero values of $\NFD^N E_g\rb{t_k}$ satisfy \eqref{eq:supp_Eg_multi}
\begin{equation}
\label{eq:supp_Eg_multi}
\mathrm{supp} \rb{\NFD^N E_g\rb{t_k}} \subseteq{\bigcup}_{r=1,\dots,R}\mathbb{S}_N^r.
\end{equation}

\begin{figure}[t]
\begin{center}
\includegraphics[width=0.75\textwidth]{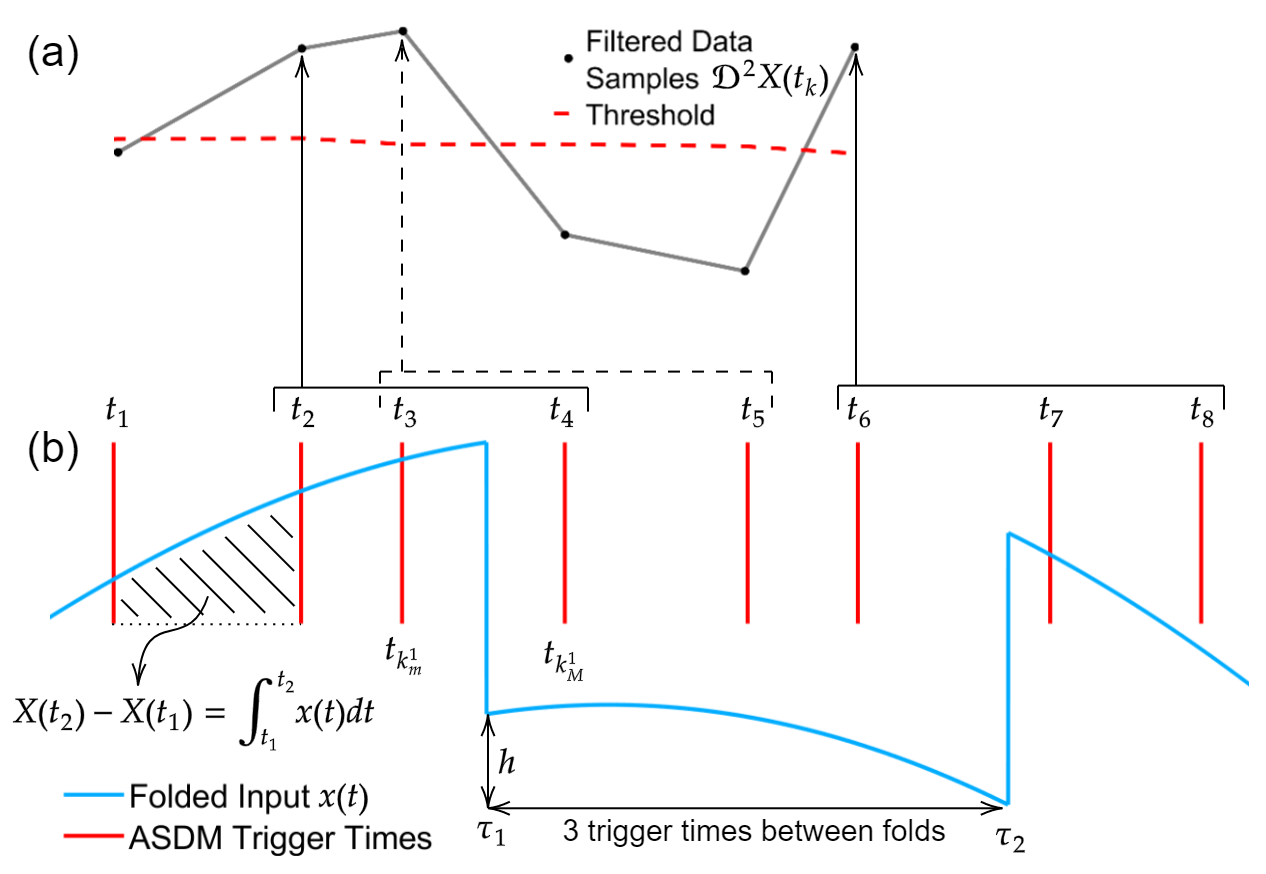} 
\caption{Illustration of the thresholding principle. (a) Each sample from the filtered data $\NFD^N X\rb{t_k}$ is computed from $N+1$ trigger times. For $N=2$, each folding time $\tau_r$ displaces up to two values of $\NFD^N X\rb{t_k}$ above threshold $\Psi\sqb{k}$. One can have a minimum of one value of $\NFD^N X\rb{t_k}$ above the threshold for each folding time. The minimum is achieved in this example because $\tau_2$ is close to $t_7$. (b) The separation of folding times $\tau_1$ and $\tau_2$ is enabled by the modulo-hysteresis parameter $h$. If at least $N+1=3$ trigger times are located between each two folds, the estimation can be performed sequentially to determine all folding times.}
\label{fig:thresholding_illustration}
\end{center}
\end{figure}

Here the results in Lemma \ref{lem:km_kM} and Theorem \ref{th:convergence_tau} can be applied sequentially provided that $\mathbb{S}_N^{r_1}\cap\mathbb{S}_N^{r_2}=\emptyset, \forall r_1,r_2=1\dots,R, r_1\neq r_2$. In other words, this means there are at least $N$ trigger times between each two consecutive folds $\tau_r, \tau_{r+1}$, which is guaranteed if $\tau_{r+1}-\tau_r\geq N T_\mathsf{max}$.
A sufficient condition can be obtained via Lemma \ref{lem:separation} 
\begin{equation}
    \label{eq:bound_separation}
    \frac{h^*}{\Omega g_\infty}\geq N T_\mathsf{max}.
\end{equation}
Then if (\ref{eq:bound_G},\ref{eq:bound_separation}) are true then $\cb{\tau_r}_{r=1}^R$ and $\cb{s_r}_{r=1}^R$ can be computed as follows. Let $\KN=\bigcup\nolimits_{r \in \cb{1,R}} {\cb{k_m^r, k_M^r}}$, where $k_m^r, k_M^r$ are computed as below. For $r=1$,
\begin{align}
\label{eq:k_m1_k_M1}
\begin{split}
k_m^1&=\min\cb{k\ \vert \ \vb{X_k^N} \geq \th^N\sqb{k}},\\
k_M^1&=\max\cb{k\leq k_m^1+\const{N-1} \ \vert \ \vb{X_k^N} \geq \th^N\sqb{k}}.
\end{split}
\end{align}
For $r=2,\dots,R$
\begin{align}
\label{eq:k_mr_k_Mr}
\begin{split}
k_m^r&=\min\cb{k> k_M^{r-1} \ \vert \ \vb{X_k^N} \geq \th^N\sqb{k}},\\
k_M^r&=\max\cb{k\leq k_m^r+\const{N-1} \ \vert \ \vb{X_k^N} \geq \th^N\sqb{k}},
\end{split}
\end{align}
where $X_k^N\triangleq\NFD^N X\rb{t_k}$. Then $\widetilde{s}_r, \widetilde{\tau}_r, r=1,\dots,R$ can be computed sequentially with Theorem \ref{th:convergence_tau} from $\KN$.
 {To better explain the proposed technique, we illustrated the thresholding strategy for multiple folds in \fig{fig:thresholding_illustration}. The diagram shows the role of $h$ in separating the folds which, in turn, enables an independent detection of the folding times. }
The following lemma provides sufficient recovery conditions.

\statelemsolid{0.95\textwidth}{
\begin{lem}[Sufficient Recovery Conditions] The recovery  {is possible when either of the conditions 1) and 2) below is true.}
\begin{enumerate}
    \item The values $\widetilde{s}_r, \widetilde{\tau}_r, r=1,\dots,R$ can be recovered from $\NFD^N X\rb{t_k}$ with an error given by (\ref{eq:tau1_err1},\ref{eq:tau1_err2}) if
\begin{enumerate}[leftmargin=1.5cm,label=$\mathrm{\cond}_{\arabic*}$)]
\item $\rb{\frac{C^2}{b-\lambda}2\delta\Omega e}^{N-1} g_\infty < \frac{\lambda_h}{CNe}$.
\item $N\frac{2\delta \Omega g_\infty}{b-\lambda}\leq h^*$,
\end{enumerate}
where $h^\ast=\min\cb{h,2\lambda-h}$ and  $C=\frac{b+\lambda}{b-\lambda}$.
\item Furthermore, $\mathrm{\cond}_1)$ and $\mathrm{\cond}_2)$ are satisfied if
\begin{equation}
    \delta<\frac{\rb{b-\lambda}h^*\kappa}{2N\Omega g_\infty} \mbox{ where } 
    \kappa  = \min \left\{ {1,\frac{{{\lambda _h}}}{e^2 h^*C^{2+\frac{1}{N-1}}}} \right\}.
\end{equation}
\end{enumerate}
\label{lem:sufficient_conditions}
\end{lem}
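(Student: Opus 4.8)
The plan is to reduce the lemma to two facts already known to imply recovery with the error in (\ref{eq:tau1_err1},\ref{eq:tau1_err2}): the bound \eqref{eq:bound_G} ($\vb{\NFD^N G\rb{t_k}}<\th^N\sqb{k}$ for all $k$, the hypothesis of Lemma~\ref{lem:km_kM} and Theorem~\ref{th:convergence_tau}) and the separation \eqref{eq:bound_separation} ($\frac{h^\ast}{\Omega g_\infty}\geq N T_\mathsf{max}$, which via Lemma~\ref{lem:separation} forces $\tau_{r+1}-\tau_r\geq N T_\mathsf{max}$, hence $\mathbb{S}_N^{r_1}\cap\mathbb{S}_N^{r_2}=\emptyset$ and lets Theorem~\ref{th:convergence_tau} be applied fold by fold). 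First I pin down $T_\mathsf{min},T_\mathsf{max}$: since the modulo--hysteresis output obeys $\vb{\MOh g\rb{t}}\leq\lambda<b$, the $t$-transform \eqref{eq:mod-IF} gives $\rb{b-\lambda}\rb{t_{k+1}-t_k}\leq 2\delta\leq\rb{b+\lambda}\rb{t_{k+1}-t_k}$, so I take $T_\mathsf{min}=\frac{2\delta}{b+\lambda}$ and $T_\mathsf{max}=\frac{2\delta}{b-\lambda}$, whence $T_\mathsf{max}/T_\mathsf{min}=\frac{b+\lambda}{b-\lambda}=C$. With this choice $\mathrm{\cond}_2)$ reads $N T_\mathsf{max}\Omega g_\infty\leq h^\ast$, which is exactly \eqref{eq:bound_separation}; so $\mathrm{\cond}_2)$ disposes of the separation requirement.

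It remains to show $\mathrm{\cond}_1)\Rightarrow\eqref{eq:bound_G}$. For the left side, Proposition~\ref{prop:boundDG} with these $T_\mathsf{min},T_\mathsf{max}$ gives $\vb{\NFD^N G\rb{t_k}}\leq\frac{(C\Omega e)^N}{\Omega N!}g_\infty$. For the right side I need the uniform bound $\th^N\sqb{k}\geq\frac{\lambda_h}{N\cdot N!\,C^{N-1}T_\mathsf{max}^{N-1}}$, and this is the crux. Since $\beta_l^N=\mu_{l+1}^N$ (same base case and recursion \eqref{eq:mu_betak}), it suffices to estimate $\mu_l^N$ on its support $\cb{l-N+1,\dots,l-1}$. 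The key structural fact, by induction on $N$, is that the nonzero entries of $\mu_l^N$ strictly alternate in sign; hence each difference $\mu_l^N\sqb{k+1}-\mu_l^N\sqb{k}$ has no magnitude cancellation, and unrolling the recursion shows $\vb{\mu_l^N\sqb{k}}$ is a sum of positive reciprocals of products of $N-1$ trigger-gaps, each gap in $[jT_\mathsf{min},jT_\mathsf{max}]$ for some $j\in\cb{2,\dots,N}$. A second induction then yields: (i) at the extreme support indices, $\M{l}{N}{l-N+1},\M{l}{N}{l-1}\in[\frac{1}{N!\,T_\mathsf{max}^{N-1}},\frac{1}{N!\,T_\mathsf{min}^{N-1}}]$ (a single reciprocal); (ii) at the next indices, $\M{l}{N}{l-N+2},\M{l}{N}{l-2}\leq\frac{N-2}{N!\,T_\mathsf{min}^{N-1}}$. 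Matching indices, each of $\phi_{0,l},\phi_{1,l}$ has the form $\frac{ab}{a+b+c}$ with $a,b$ of extreme type and $c$ of second type, so its numerator is $\geq(N!\,T_\mathsf{max}^{N-1})^{-2}$ and its denominator $\leq N\,(N!\,T_\mathsf{min}^{N-1})^{-1}$, giving $\phi_{0,l},\phi_{1,l}\geq\frac{1}{N\cdot N!\,C^{N-1}T_\mathsf{max}^{N-1}}$ uniformly in $l$, hence the claimed bound on $\th^N\sqb{k}$. Combining the two sides, \eqref{eq:bound_G} holds whenever $\frac{(C\Omega e)^N}{\Omega N!}g_\infty<\frac{\lambda_h}{N\cdot N!\,C^{N-1}T_\mathsf{max}^{N-1}}$; cancelling $N!$, dividing by $\Omega$, and using $C\Omega e\cdot CT_\mathsf{max}=C^2\Omega e T_\mathsf{max}$ and $T_\mathsf{max}=\frac{2\delta}{b-\lambda}$, this is precisely $NCe\,g_\infty\rb{\frac{C^2}{b-\lambda}2\delta\Omega e}^{N-1}<\lambda_h$, i.e. $\mathrm{\cond}_1)$. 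This proves part~1.

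For part~2, the hypothesis $\delta<\frac{(b-\lambda)h^\ast\kappa}{2N\Omega g_\infty}$ is equivalent to $T_\mathsf{max}<\frac{h^\ast\kappa}{N\Omega g_\infty}$; since $\kappa\leq1$ this already gives $\mathrm{\cond}_2)$. For $\mathrm{\cond}_1)$ I substitute $C^2\Omega e\,T_\mathsf{max}<\frac{C^2 e h^\ast\kappa}{N g_\infty}$ into its left side and use $\kappa^{N-1}\leq\rb{\frac{\lambda_h}{e^2 h^\ast C^{2+1/(N-1)}}}^{N-1}=\frac{\lambda_h^{N-1}}{e^{2(N-1)}(h^\ast)^{N-1}C^{2N-1}}$; a short computation in which all the powers of $C$, $e$, $N$, $h^\ast$ and $g_\infty$ cancel leaves the left side of $\mathrm{\cond}_1)$ bounded by $\lambda_h\rb{\frac{\lambda_h}{Ne\,g_\infty}}^{N-2}$. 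If at least one fold occurs then $g_\infty\geq\lambda>\lambda_h$, so this is $<\lambda_h$ and $\mathrm{\cond}_1)$ holds; if no fold occurs, $g$ is in the ASDM dynamic range and recovery is immediate from the standard reconstruction \eqref{eq:rec_local_av}. This proves part~2.

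The main obstacle is the uniform lower bound on $\th^N\sqb{k}$, and specifically item~(ii): one must show the ``second'' entries of $\mu_l^N$ are bounded by $\frac{N-2}{N!\,T_\mathsf{min}^{N-1}}$ rather than by the naive $O\rb{2^{N}/N!}T_\mathsf{min}^{1-N}$, because with the cruder estimate the constant in $\phi_{0,l},\phi_{1,l}$ would be exponential in $N$ and $\mathrm{\cond}_1)$ would no longer suffice. Securing item~(ii) requires carrying the sign-alternation property through the induction in tandem with the magnitude bounds, and this is the delicate part of the argument.
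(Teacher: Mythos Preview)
Your proposal is correct and follows the same route as the paper. For part~1 you make explicit what the paper's proof leaves implicit: the uniform lower bound $\th^N\sqb{k}\geq\lambda_h/\bigl(N\cdot N!\,C^{N-1}T_\mathsf{max}^{N-1}\bigr)$, whose ingredients---the sign alternation of $\mu_l^N,\beta_l^N$, the extremal bounds $\vb{\mu_l^N\sqb{l-N+1}},\vb{\beta_l^N\sqb{l}}\in\bigl[\tfrac{1}{N!\,T_\mathsf{max}^{N-1}},\tfrac{1}{N!\,T_\mathsf{min}^{N-1}}\bigr]$, and the second-entry bound $\leq\tfrac{N-2}{N!\,T_\mathsf{min}^{N-1}}$---the paper records in a separate proposition (Properties of $\mu_l^N$ and $\beta_l^N$) but does not cite explicitly in the lemma's proof. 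Combining this with Proposition~\ref{prop:boundDG} and the trigger-gap bounds $T_\mathsf{min}=2\delta/(b+\lambda)$, $T_\mathsf{max}=2\delta/(b-\lambda)$ yields $\mathrm{\cond}_1)$ exactly, and $\mathrm{\cond}_2)$ is \eqref{eq:bound_separation}. For part~2 the paper instead rewrites $\mathrm{\cond}_1),\mathrm{\cond}_2)$ as $2\delta<B_1(N),\,2\delta<B_2(N)$ and checks the single inequality $B_1(N)\,e^2h^\ast C^{2+1/(N-1)}/\lambda_h\geq B_2(N)$; this is algebraically the same computation as your direct substitution, and your explicit treatment of the no-fold case is a point the paper omits.

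One bookkeeping slip in your part~2: after the cancellation the left side of $\mathrm{\cond}_1)$ is bounded by $\tfrac{\lambda_h}{CNe}\bigl(\tfrac{\lambda_h}{Ne g_\infty}\bigr)^{N-2}$, not $\lambda_h\bigl(\cdot\bigr)^{N-2}$, and the target is $\lambda_h/(CNe)$ rather than $\lambda_h$; with that correction (and noting that for $N=2$ the exponent $N-2=0$ gives the conclusion directly from the strict inequality $\delta<\cdots$) the argument goes through as written.
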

}

\begin{proof}
\begin{enumerate}
    \item We use the upper bound for $\vb{\NFD^N G\rb{t_k}}$ in  {Proposition} \ref{prop:boundDG} to derive a sufficient condition for \eqref{eq:bound_G} as
\begin{equation}
\label{eq:th1}
\rb{\frac{T_\mathsf{max}^3}{T_\mathsf{min}^2}\Omega e}^N g_\infty <\frac{\lambda_h}{N e}\frac{T_\mathsf{min}}{T_\mathsf{max}}.    \end{equation}
 {To prove \eqref{eq:th1}}, note that, in the limit case, $T_\mathsf{max}=T_\mathsf{min}=T$ is always satisfied for large enough $N$ if $$\frac{T_\mathsf{max}^3}{T_\mathsf{min}^2}\Omega e=T \Omega e<1.$$
The latter is the sampling rate condition for uniform sampling \cite{Bhandari:2017}. Then $\left.{\mathrm{\cond}}_1\right)$ follows from \eqref{eq:th1} and $\left.{\mathrm{\cond}}_2\right)$ from \eqref{eq:bound_separation} where $T_\mathsf{min}$ and $T_\mathsf{max}$ are selected as the following bounds for the intervals between the trigger times \cite{Lazar:2004} \begin{equation}
T_\mathsf{min}=\frac{2\delta}{b+\lambda}  \leq \vb{t_{k+1}-t_k} \leq \frac{2\delta}{b-\lambda}=T_\mathsf{max}.
\label{eq:ISIbounds}
\end{equation}

\item From $\left.{\mathrm{\cond}}_1\right)$ and $\left.{\mathrm{\cond}}_2\right)$ it follows that $2\delta<\min\cb{B_1\rb{N}, B_2\rb{N}}$, where 
\begin{equation*}
  B_1\left( N \right) = \frac{b - \lambda }{C^2\Omega e}\left( \frac{\lambda _h}{g_\infty CNe} \right)^\frac{1}{N - 1}\ \text{and}\ B_2\left( N \right) = \frac{\left( {b - \lambda } \right)h^*}{N\Omega g_\infty }.
\end{equation*}
A direct calculation yields $B_1\rb{N} \frac{e^2 h^* C^2\sqrt[N-1]{C}}{\lambda_h }\geq B_2\rb{N}$, which completes the proof.
\end{enumerate}
\end{proof}

 {If $g_\infty$ is unknown yet an upper bound is known},  {the results in Lemma \ref{lem:sufficient_conditions} still hold true by replacing $g_\infty$ with the upper bound. Furthermore, we note that the conditions in Lemma \ref{lem:sufficient_conditions} are only sufficient, and can be conservative. Therefore, in practice, one can achieve good reconstructions with higher values of $N$, some of which that might not satisfy the lemma. }

\subsection{Input Reconstruction}
\label{subsect:InputRec}

Assuming $\cb{\widetilde{\tau}_r,\widetilde{s}_r}$ known, the estimation of $E_g\rb{t_k}$ is
\begin{equation}
\label{eq:Eps_tilde}
    \widetilde{E}_g\rb{t_k}=2\lambda_h \sum\limits_{r=1}^R \widetilde{s}_r \ind_{\left[\widetilde{\tau}_r,\infty\right)}\rb{t_k}\cdot \rb{t_k-\widetilde{\tau}_r}.
\end{equation}
Finally, $\widetilde{g}\rb{t}$ is recovered recursively using reconstruction from local averages,
 $$\Delta \widetilde{G}\rb{t_k}=\widetilde{G}\rb{t_{k+1}}-\widetilde{G}\rb{t_{k}}\EQc{eq:rec_local_av}\int_{t_k}^{t_{k+1}}g\rb{s} ds,$$ 
 where $\widetilde{G}\rb{t_{k}}=X\rb{t_{k}}+\widetilde{E}_g\rb{t_{k}}$. This recovery approach was proposed in the context of irregular sampling in \cite{Feichtinger:1994} and adapted for event-driven ASDM sampling in \cite{Lazar:2004}.  In our case, sequence $\widetilde{g}_n\rb{t}$ is computed as
\begin{align}
\begin{split}
    \widetilde{g}_0\rb{t}&=\syntharg{\Delta \widetilde{G}\rb{t_k}}\rb{t},\\
    \widetilde{g}_{n+1}\rb{t}&=\widetilde{g}_{n}\rb{t}+\widetilde{g}_{0}\rb{t}-\syntharg{\locav \widetilde{g}_{n}}\rb{t},
\end{split}    
\label{eq:local_av_rec}
\end{align}
where,
\begin{itemize}
  \item  $\synth:\ell^2\rightarrow L^2\rb{\mathbb{R}}$.
  \item $\syntharg{a_k}\triangleq\sum\limits_{k\in\mathbb{Z}} a_k \cdot \mathrm{sinc}_\Omega \rb{t-s_k}$, $s_k=\frac{t_k+t_{k+1}}{2}$, and, 
  \item $\locav:L^2\rb{\mathbb{R}}\rightarrow \ell^2, \rb{\locav f}_k=\int\limits_{t_k}^{t_{k+1}} f\rb{s} ds$.
\end{itemize}
We outline the recovery procedure in Algorithm \ref{alg:1}.

\begin{algorithm}[!t]
\SetAlgoLined
{\bf Data:} $\lambda,h,\delta,b\in \mathbb{R}_+, \{t_k\}, \Omega>0, N\geq2$.\\
\KwResult{ $\widetilde{g}\rb{t}$ }
 \begin{enumerate}
 \itemsep 2pt
\item Compute $X_k^N=\NFD^N \rb{\sum\limits_{m= {0}}^{k-1} q_m} $, where $q_m = \rb{-1}^m 2\delta-\rb{-1}^m b\rb{t_{m+1}-t_m}$.
\item Compute $R,\ \cb{k_m^r,k_M^r}_{r=1}^R$ with \eqref{eq:k_m1_k_M1}, \eqref{eq:k_mr_k_Mr}.
\item For $r=1,\dots,R$, compute $\widetilde{s}_r$ with \eqref{eq:s_1} and,
$$\widetilde{\tau}_r = \frac{1}{2}
\begin{cases}
t_{k_M^r}+t_{k_M^r+1} & \mbox{ if } \rb{k_M^r-k_m^r}=N-1 \\ 
2\rb{t_{k_M^r+1}} &  \mbox{ if } \rb{k_M^r-k_m^r}=N-2 \\ 
t_{k_M^r+1}+t_{k_M^r+2} &  \mbox{ if } \rb{k_M^r-k_m^r}=N-3
\end{cases}.$$
\item Compute $\widetilde{E}_g\rb{t_k}$ using \eqref{eq:Eps_tilde} and $\widetilde{G}\rb{t_k}$ using \eqref{eq:expanded_Z}.
\item Compute $\widetilde{g}_n\rb{t}$ from samples $ \Delta \widetilde{G}\rb{t_{k}}$ with \eqref{eq:local_av_rec}.
\end{enumerate}
\caption{Recovery Algorithm.}
\label{alg:1}
\end{algorithm}
The following result proven in Section \ref{sect:Proofs} gives a bound for the input reconstruction error.

\statepropsolid{0.95\textwidth}{
\begin{prop}
\label{prop:final_error_bound}
Assuming that the conditions in Lemma \ref{lem:sufficient_conditions} are satisfied and  {that} the data consists of $R$ folding times, then the input can be recovered recursively as $\widetilde{g}_n\rb{t}$, where $n$ is the iteration number. The reconstruction error satisfies:
\begin{equation}
\label{eq:final_error_bound}
    \norm{g-\widetilde{g}_n}_{L^2} \leq  \frac{4\lambda_h 2\delta R \sqrt{\Omega\pi}}{\pi\rb{b-\lambda}-2\delta\Omega}+\rb{\frac{2\delta\Omega}{\pi\rb{b-\lambda}}}^{n+1}\norm{g}_{L^2}.
\end{equation}
\end{prop}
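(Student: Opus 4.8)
The plan is to decompose the total error $\norm{g-\widetilde g_n}_{L^2}$ into two parts: (i) the error arising from the inexact estimate $\widetilde E_g$ of the residue $E_g$ (equivalently, the error in the reconstructed local-average data $\Delta\widetilde G(t_k)$ versus the true $\Delta G(t_k)$), and (ii) the error inherent to the iterative reconstruction-from-local-averages scheme \eqref{eq:local_av_rec} converging to the signal consistent with the (perturbed) data. The second term is the one that carries the geometric factor $\rb{2\delta\Omega/(\pi(b-\lambda))}^{n+1}\norm{g}_{L^2}$, so I would first recall that the local-average operator iteration is a fixed-point (Neumann-series) iteration whose contraction constant is governed by the frame bound / oversampling ratio: with trigger-time spacing bounded by $T_\mathsf{max}=2\delta/(b-\lambda)$ from \eqref{eq:ISIbounds} and the Nyquist-type margin, the relevant operator norm is at most $2\delta\Omega/(\pi(b-\lambda))$, which is $<1$ under Lemma~\ref{lem:sufficient_conditions}. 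Thus after $n$ iterations the contribution of the initialization (which is $\mathcal{O}(\norm g_{L^2})$) is damped by that factor to the power $n+1$, giving the second summand; this is essentially the convergence analysis of \cite{Feichtinger:1994,Lazar:2004} applied to our interval structure.

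For the first term, I would proceed as follows. By Theorem~\ref{th:convergence_tau} (applied sequentially via Lemma~\ref{lem:sufficient_conditions}), each sign is recovered exactly, $\widetilde s_r=s_r$, and each folding time satisfies $\vb{\tau_r-\widetilde\tau_r}\le T_\mathsf{max}$. Plugging these into \eqref{eq:Eps_tilde} and comparing with the true $E_g(t_k)=2\lambda_h\sum_r s_r\ind_{[\tau_r,\infty)}(t_k)(t_k-\tau_r)$, the per-fold discrepancy $\vb{E_g(t_k)-\widetilde E_g(t_k)}$ is at most $2\lambda_h\vb{\tau_r-\widetilde\tau_r}\le 2\lambda_h T_\mathsf{max}$ on the (at most one) interval straddling the fold, and it is exactly the constant $2\lambda_h\vb{\tau_r-\widetilde\tau_r}$ for $t_k$ beyond the fold — but crucially, when we pass to the \emph{differences} $\Delta\widetilde G(t_k)-\Delta G(t_k)$ fed into \eqref{eq:local_av_rec}, this constant offset telescopes away and only the single straddling-interval term survives per fold, contributing $\vb{\Delta\widetilde G(t_k)-\Delta G(t_k)}\le 2\lambda_h T_\mathsf{max}=4\lambda_h\delta/(b-\lambda)$ for at most $R$ indices $k$. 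Feeding such an $\ell^2$-bounded perturbation (of $\ell^2$-norm $\le 4\lambda_h\delta\sqrt{R}/(b-\lambda)$, or with the cruder $R$ in place of $\sqrt R$) through the bounded synthesis map $\synth$ and using its operator-norm bound $\le\sqrt{\Omega\pi}\cdot(1-2\delta\Omega/(\pi(b-\lambda)))^{-1}$-type estimate (again from the frame theory of local averages) yields the first summand $4\lambda_h\cdot 2\delta R\sqrt{\Omega\pi}/(\pi(b-\lambda)-2\delta\Omega)$.

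The main obstacle I anticipate is bookkeeping the constants precisely enough to land on exactly the stated bound — in particular (a) justifying that the synthesis/reconstruction operator from perturbed local averages has the claimed norm $\sqrt{\Omega\pi}/(\pi(b-\lambda)-2\delta\Omega)$ after clearing denominators, which requires carefully invoking the irregular-sampling frame inequalities of \cite{Feichtinger:1994} with the explicit spacing bounds \eqref{eq:ISIbounds}; and (b) making the telescoping argument airtight, i.e. confirming that the constant post-fold offsets in $\widetilde E_g-E_g$ genuinely cancel in the differenced data and that only $R$ intervals (one per fold) carry a nonzero perturbation, so that the error scales linearly in $R$ and $\delta$ rather than accumulating. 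A secondary technical point is cleanly separating the two error sources so that they simply add (triangle inequality on $\norm{g-\widetilde g_n}_{L^2}$ via an intermediate $\widetilde g_\infty$, the fixed point of the perturbed iteration), rather than interacting multiplicatively; this is routine once the contraction constant $<1$ is in hand.
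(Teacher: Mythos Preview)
Your proposal is correct and essentially matches the paper's proof: the same telescoping argument localizing $\Delta\widetilde G - \Delta G$ to the fold-straddling intervals with per-fold bound $2\lambda_h T_\mathsf{max}$, the same triangle-inequality bound $\norm{e_0}_{L^2}\le (\text{const})\cdot R\cdot 2\lambda_h T_\mathsf{max}\,\norm{\mathrm{sinc}_\Omega}_{L^2}$, and the same geometric series yielding $\norm{e_n}_{L^2}\le\norm{e_0}_{L^2}/(1-\alpha)$ with $\alpha=T_\mathsf{max}\Omega/\pi$. The one small deviation is the choice of intermediate: the paper splits via the \emph{ideal} iterate $g_n$ (run on the true $\Delta G(t_k)$) rather than your fixed point $\widetilde g_\infty$, so that $\norm{g-g_n}_{L^2}\le\alpha^{n+1}\norm{g}_{L^2}$ comes directly from the standard local-average convergence result and the perturbation piece is $e_n=\widetilde g_n-g_n$ --- your $\widetilde g_\infty$ route would need an extra step to express the convergence term with exactly $\norm{g}_{L^2}$ as the coefficient.
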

}

We note that the second term in the error bound \eqref{eq:final_error_bound} is guaranteed to vanish for $n\rightarrow\infty$ if $\frac{2\delta \Omega}{\pi\rb{b-\lambda}}<1$, which is satisfied if condition $\mathrm{\cond}_2)$  in Lemma \ref{lem:sufficient_conditions} is true. Furthermore, the error in \eqref{eq:final_error_bound} can be made arbitrarily small for a small enough ASDM threshold $\delta$.  {Furthermore, we note that Algorithm \ref{alg:1} has an intrinsic method to indicate that the data does not verify the recovery conditions, if $k_M^r-k_m^r\not\in\cb{N-3,N-2,N-1}$ for any $r\in\cb{1,\dots,R}$. Although theoretically there could be situations where the algorithm runs with no error for an input not satisfying the recovery conditions, we  found this to be very unlikely in practice. }

\section{Numerical and Hardware Experiments}
\label{sect:NumericalDemo}

We  {validate} the proposed event-driven model using both synthetic and real data measurements. We generate a bandlimited function $g\in\PW{\Omega}$, which is encoded directly using the standalone ASDM into trigger times $t_k^\mathsf{ASDM}$. The same input is then processed with the proposed encoding model, which triggers times $t_k^\mathsf{MEDS}$. The direct reconstruction from ASDM samples is denoted $\widetilde{g}_\mathsf{ASDM}\rb{t}$, and the proposed reconstruction from the MEDS samples is denoted $\widetilde{g}_\mathsf{MEDS}\rb{t}$.
We measure the relative recovery error via the ratio $\mathsf{Err}\rb{\widetilde{g},g}$, defined as
\begin{equation}
	\mathsf{Err}\rb{\widetilde{g},g}=100\cdot\frac{\norm{g-\widetilde{g}}_{L^2}}{\norm{g}_{L^2}} \rb{\%}.
\end{equation}
We denote the recovery errors with each method by $\mathsf{Err}_{\mathsf{ASDM}}=\mathsf{Err}\rb{\widetilde{g}_\mathsf{ASDM},g}$ and $\mathsf{Err}_{\mathsf{MEDS}}=\mathsf{Err}\rb{\widetilde{g}_\mathsf{MEDS},g}$, respectively.

The recovery results from synthetic data are in \ref{sect:experiments_synthetic}, and the ones from experimental data are presented in \ref{sect:experiments_measured}.

\subsection{Reconstruction using Synthetic Data}
\label{sect:experiments_synthetic}

Here we compare the proposed model with the ASDM encoder. The input is $g\rb{t}=Ag_0\rb{t}$, where
\begin{equation*}
    g_0\rb{t}=\sum\nolimits_{n\in\mathbb{Z}} c_n \mathrm{sinc}_{\Omega} \rb{ t-n\pi/\Omega},
\end{equation*}
where $\Omega=150\ \mathrm{rad/s}$ and $c_n$ are drawn from the uniform distribution $U\rb{\sqb{-1,1}}$.
The signal is truncated such that $t\in\sqb{0,0.13\ \mathrm{s}}$, and normalised such that $\mathrm{max}_t\vb{g_0\rb{t}}=1$. The ASDM parameters are $\delta = 2.5\times 10^{-3}, b=9$. 
The Nyquist rate condition is $\vb{g\rb{t}}<g_\textsf{MAX}$, where  {$g_\textsf{MAX}$  {denotes} the ASDM dynamic range} $g_\textsf{MAX}=\frac{\pi b-2\delta \Omega}{\pi}=8.76$.

The proposed encoder consists of a modulo model with threshold $\lambda = g_\textsf{MAX}/2$ and hysteresis parameter $h=\lambda/2$ in series with an ASDM whose parameters are the same as the standalone ASDM encoder. Therefore the models are equivalent for $\vb{g\rb{t}}<\lambda$. For $A=34.6$, the ASDM generates trigger times $\cb{t_k^\textsf{ASDM}}_{k=1}^{14}$ in response to input $g\rb{t}$ when its amplitude is within the ASDM dynamic range $\sqb{-8.76,8.76}$, and sends no information when the input amplitude is outside the range. However, the proposed model generates trigger times $\cb{t_k^\textsf{MEDS}}_{k=1}^{220}$, which cover the whole duration of $g\rb{t}$, as depicted in \fig{fig:synth_data} (a). The input is first recovered from the ASDM samples $\cb{t_k^\textsf{ASDM}}_{k=1}^{14}$  \cite{Lazar:2004}, which leads to a high reconstruction error especially where the input has a large amplitude, as depicted in \fig{fig:synth_data}(d). For the proposed model, we used the $\NFD^3$ to identify correctly the $21$ folding times. The filtered data $\NFD^3 X\rb{t_k^\textsf{MEDS}}$, the threshold $\Psi_N\sqb{k}$ and the estimated folding times are  in \fig{fig:synth_data} (b,c). We then recovered $\widetilde{g}_{\mathsf{MEDS}}\rb{t}$ using Algorithm \ref{alg:1}. The resulted errors for the two methods are $\mathsf{Err}_{\textsf{ASDM}}=6.11\cdot10^3\%$ and $\mathsf{Err}_{\textsf{MEDS}}=0.25\%$.

 {To see the effect of changing the ASDM threshold $\delta$ on the recovery performance, we repeated the recovery for $10$ uniformly spaced values of $\delta$ in interval $\sqb{10^{-3},3\cdot 10^{-3}}$. We evaluated the input error as before, the number of spikes generated, and the folding time error evaluated as $\mathsf{Err}_{\tau}=100\cdot\tfrac{\norm{\tau-\widetilde{\tau}}_{\ell^2}}{\norm{\tau}_{\ell^2}} \rb{\%}$. The results, depicted in \fig{fig:varying_Delta}, show that when $\delta>2.6\cdot 10^{-3}$ the recovery algorithm does not identify the folds correctly leading to a large error. Choosing $\delta<1.8$ leads to an exponential increase in sample size without a significant decrease in error. Moreover, we remark that, in this example, the input error is around $6$ times larger than the folding time error. This is in line with the theoretical derivation that the error for computing $\Delta G\rb{t_k}$ is $2\lambda_h=6.57$ times larger than the folding time error as shown in \eqref{eq:Gtk_err}.

 \begin{figure*}[!t]
\centering
\begin{tabular}{cc}
\includegraphics[width=0.5\textwidth]{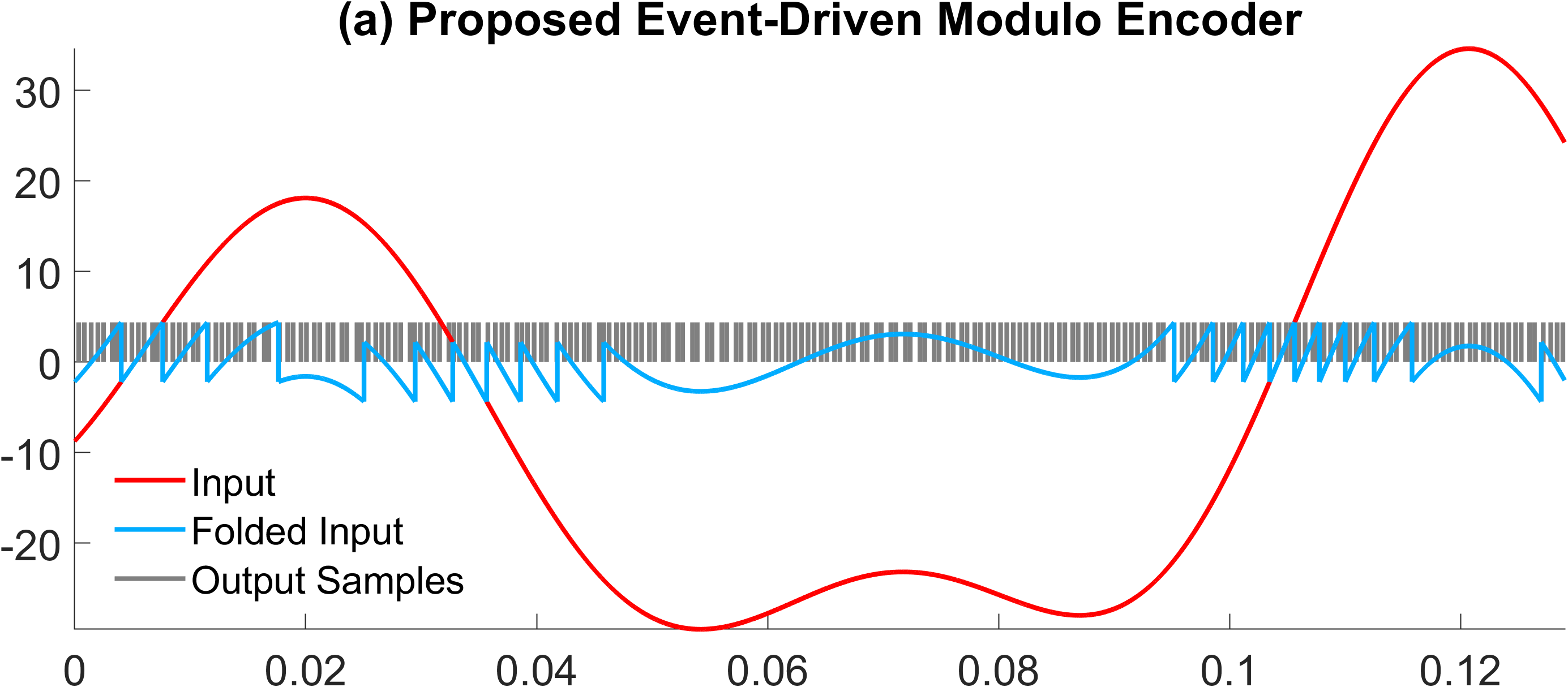} & \includegraphics[width=0.5\textwidth]{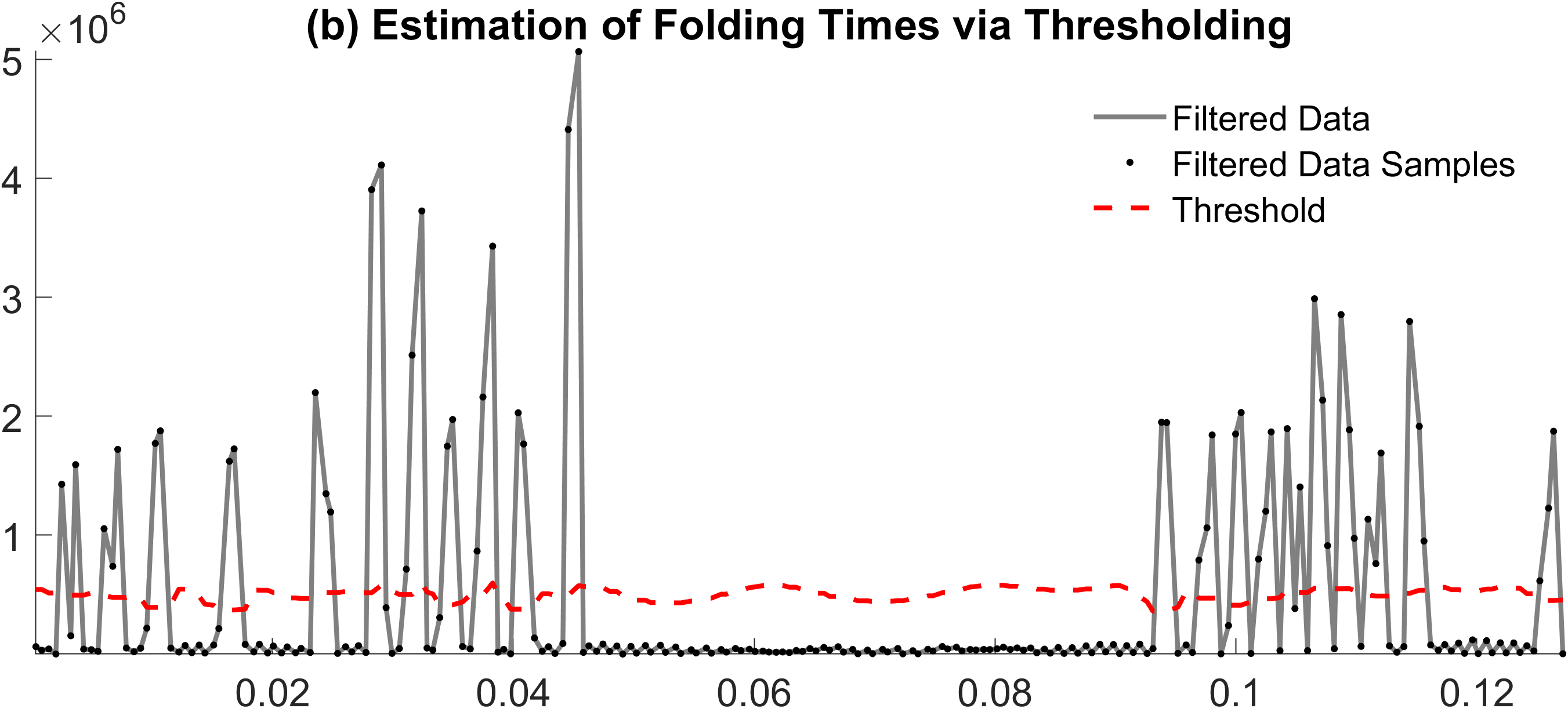} \\
\includegraphics[width=0.5\textwidth]{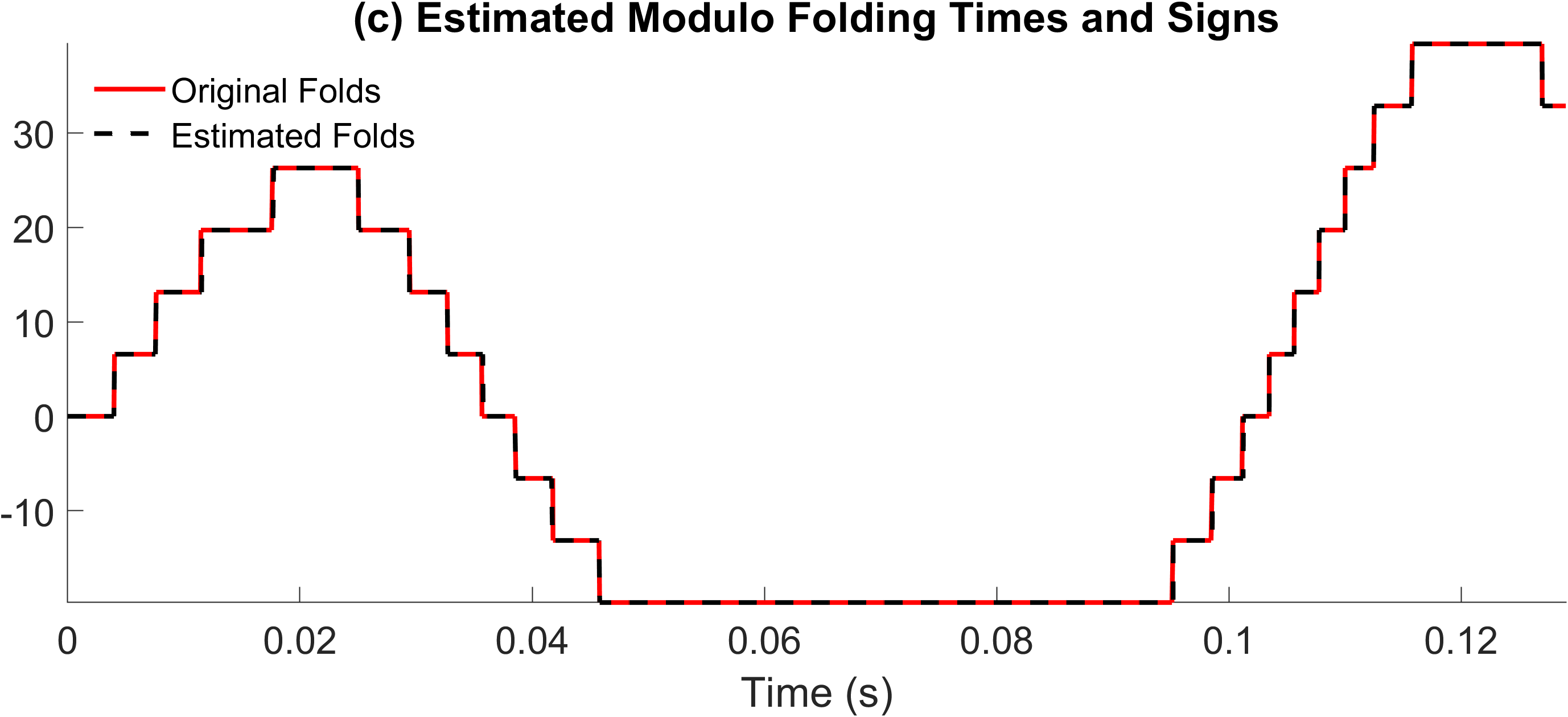} & \includegraphics[width=0.5\textwidth]{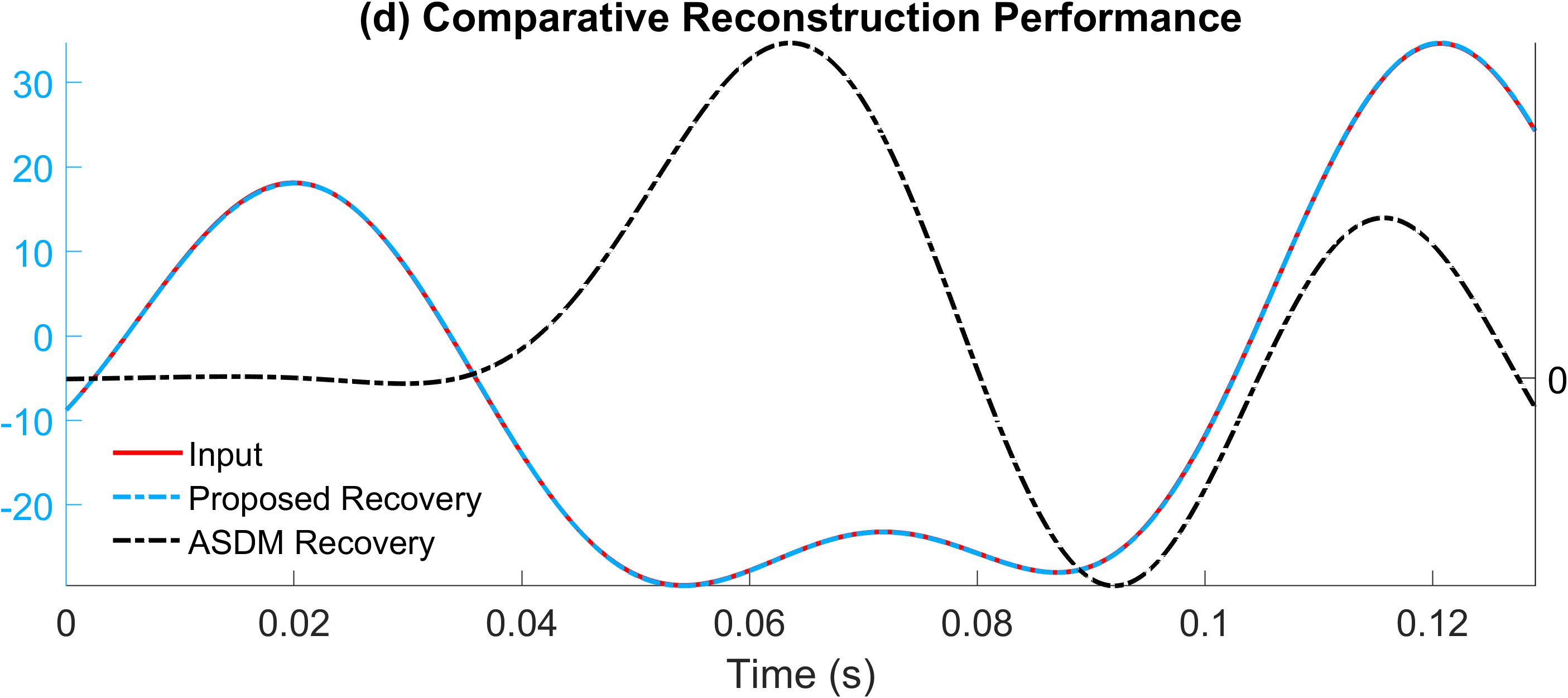}
\end{tabular}
\caption{Simulations with synthetic data. (a) Input, output of the modulo nonlinearity and corresponding output trigger times. (b) Estimating the folding times by thresholding the filtered data. (c) Reconstruction of the residual function. (d) Input, proposed reconstruction and conventional IF based recovery.}	
\label{fig:synth_data}
\end{figure*}

\begin{figure}[!t]
\begin{center}
    \includegraphics[width=.30\textwidth]{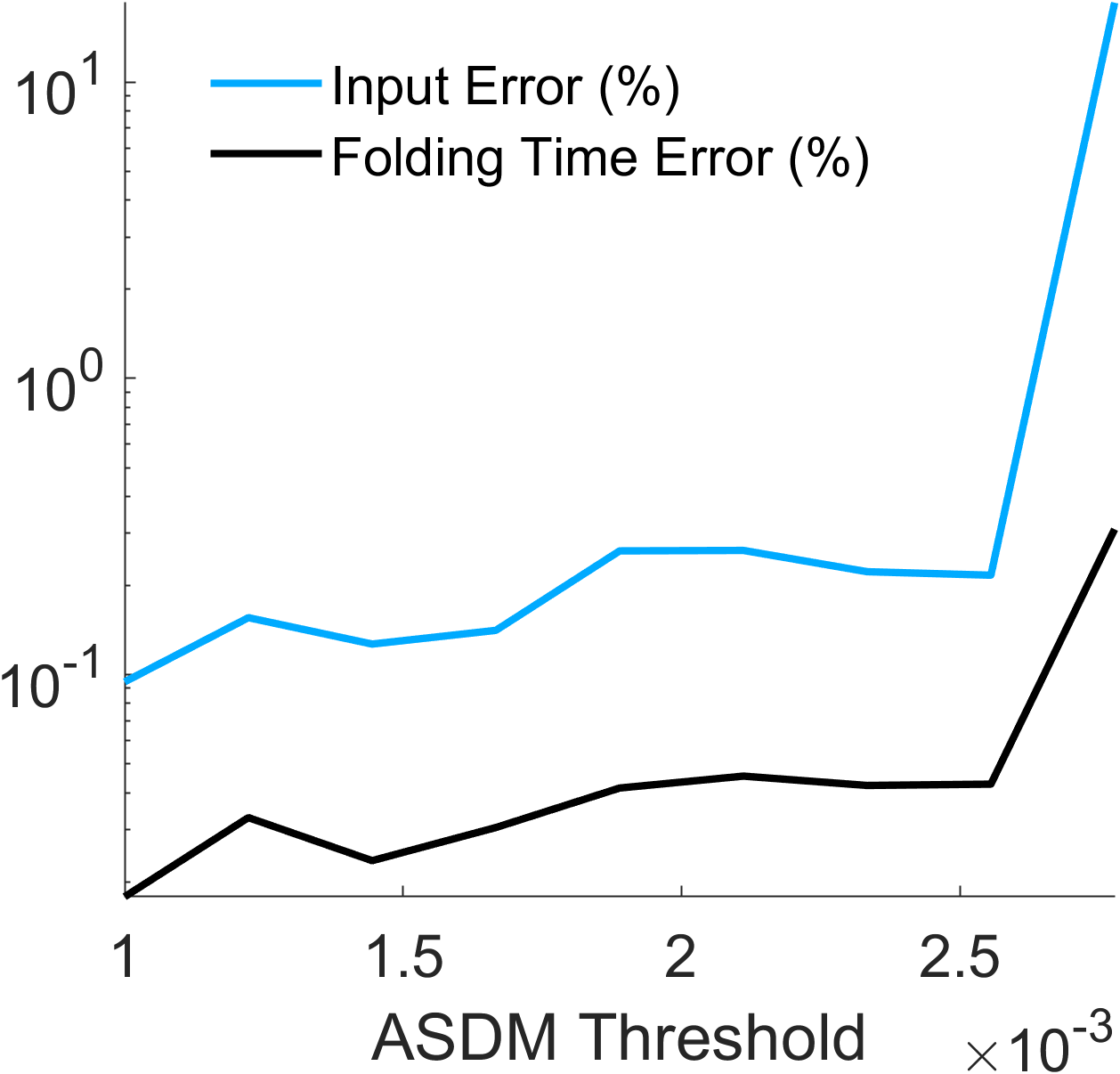}
    \includegraphics[width=.30\textwidth]{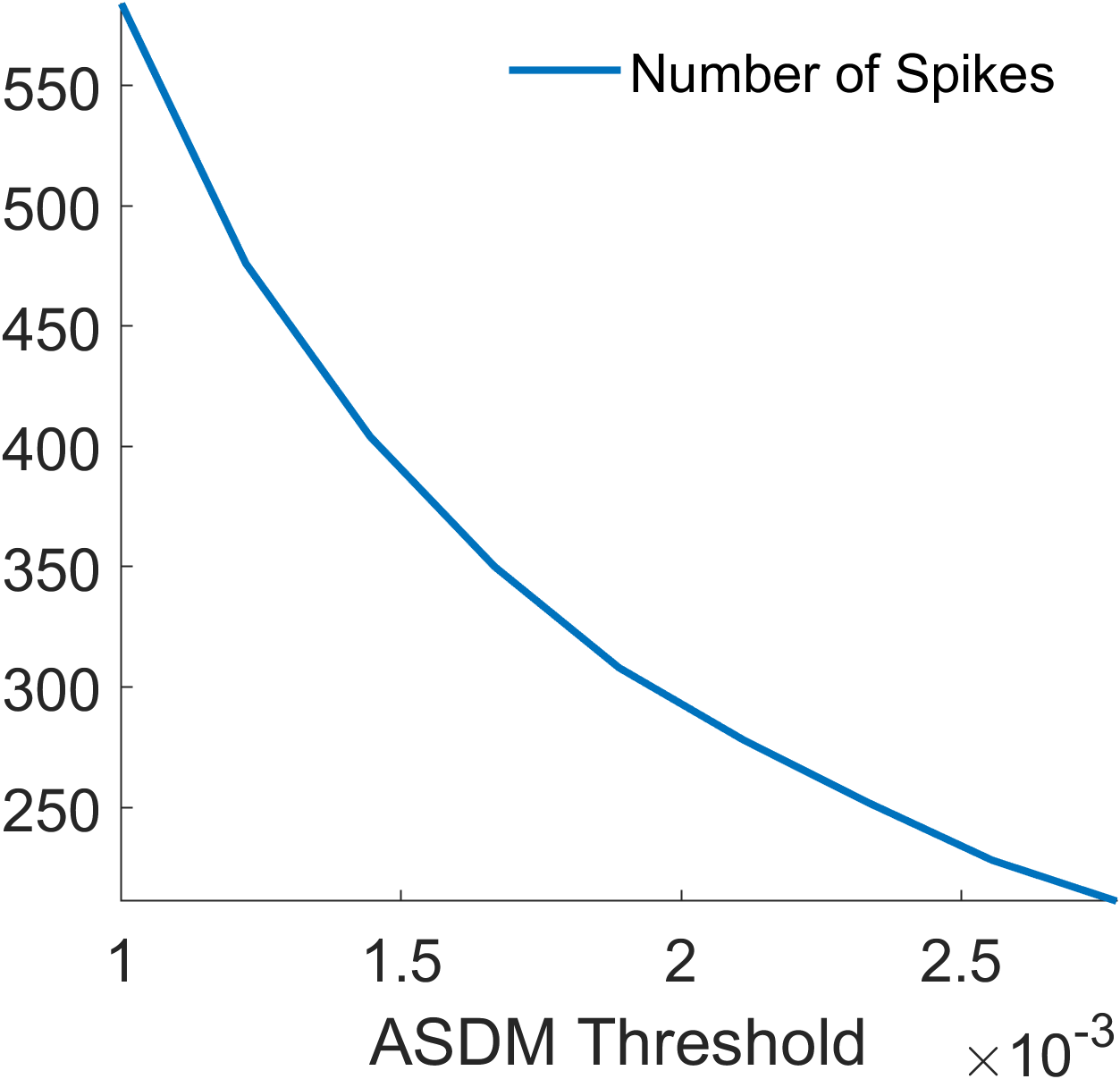}
\end{center}
\caption{The effect of the ASDM threshold on recovery accuracy and number of samples.}	
\label{fig:varying_Delta}
\end{figure}
}

\subsection{Reconstruction using Experimental Data}
\label{sect:experiments_measured}

In this example, the input is $g\rb{t}=4.51\cdot\sin\rb{\Omega \rb{t-\tau_0}}$ where $\Omega=125\ \mathrm{rad/s}$ and $\tau_0=1.6\cdot10^{-2}$. The output data was generated using a MEDS hardware prototype shown in \fig{fig:hardware} that is coarsely calibrated such that the modulo output is within the ASDM dynamic range. The corresponding acquisition pipeline is shown in \fig{fig:diagram_hardware}. The MEDS parameters, estimated using line search based optimization \cite{Florescu:2022:Ca}, are $\lambda = 1.53$, $h=1.51$, $\delta = 2.07\times 10^{-4}$, and $b=2.22$. Therefore the ASDM dynamic range is $g_\mathsf{MAX}=\frac{\pi b - 2\delta \Omega}{\pi}=2.206$.
The output of the prototype in response to $g\rb{t}$ is  $\cb{t_k^\textsf{MEDS}}_{k=1}^{214}$ depicted in \fig{fig:real_data} (a). 

\begin{figure}[!t]
\begin{center}
\includegraphics[width=.65\textwidth]{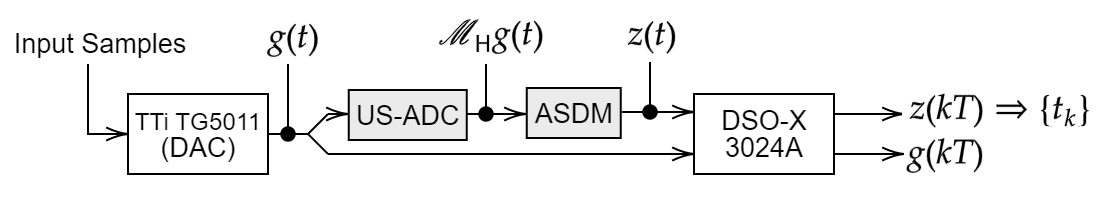}
\caption{The diagram of the hardware setup used for experiments. The input $g\rb{t}$ is generated via a digital-to-analog converter (DAC) from digital samples. The MEDS hardware prototype is implemented using an US-ADC in series with an ASDM circuit. The continuous MEDS output $z\rb{t}$ and input $g\rb{t}$ are then sampled with an oscilloscope to generate uniform samples $z\rb{kT}$ and $g\rb{kT}$. The ASDM trigger times $\cb{t_k}$ are subsequently computed from the zero-crossings of $z\rb{kT}$.}	
\label{fig:diagram_hardware}
\end{center}
\end{figure}

\begin{figure}[!t]
\centering
\includegraphics[width=0.75\textwidth]{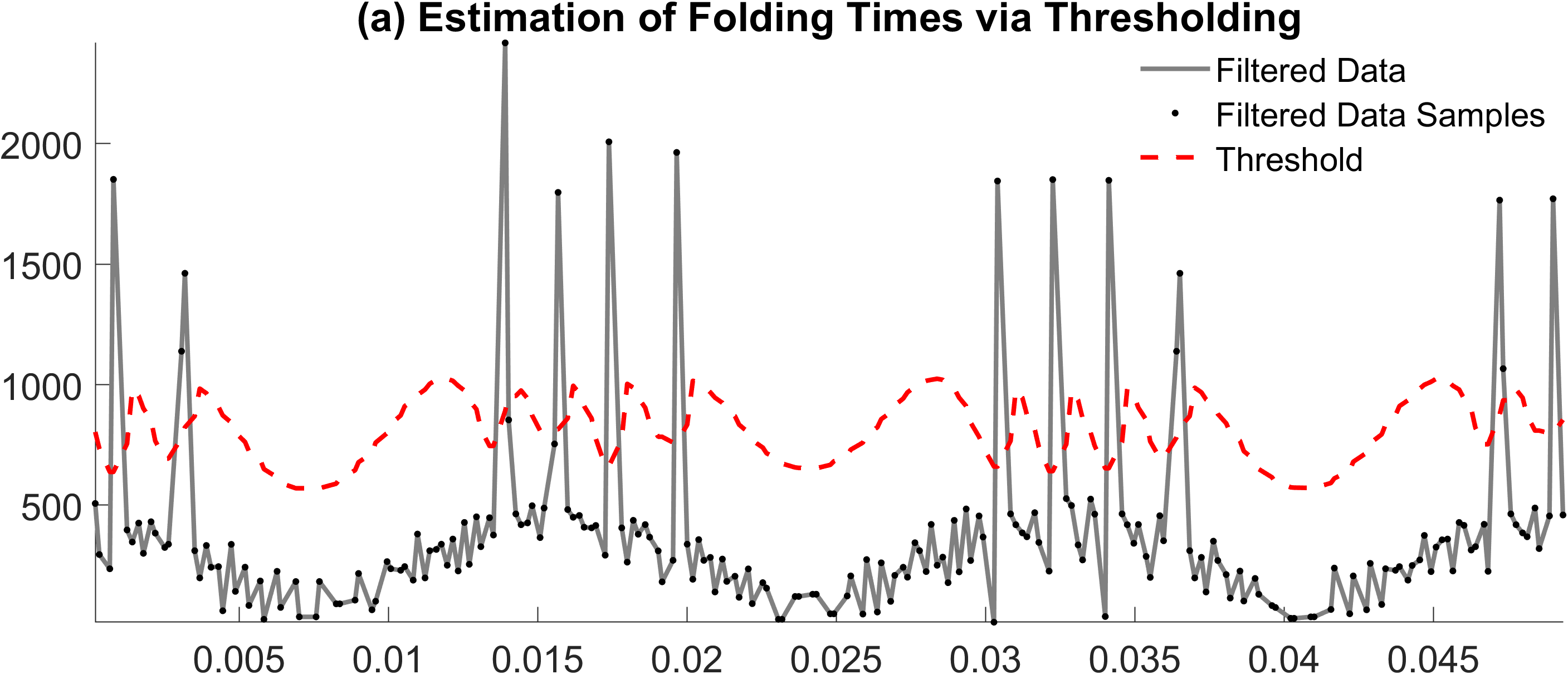}\vspace{0.35cm}\\
\includegraphics[width=0.7\textwidth]{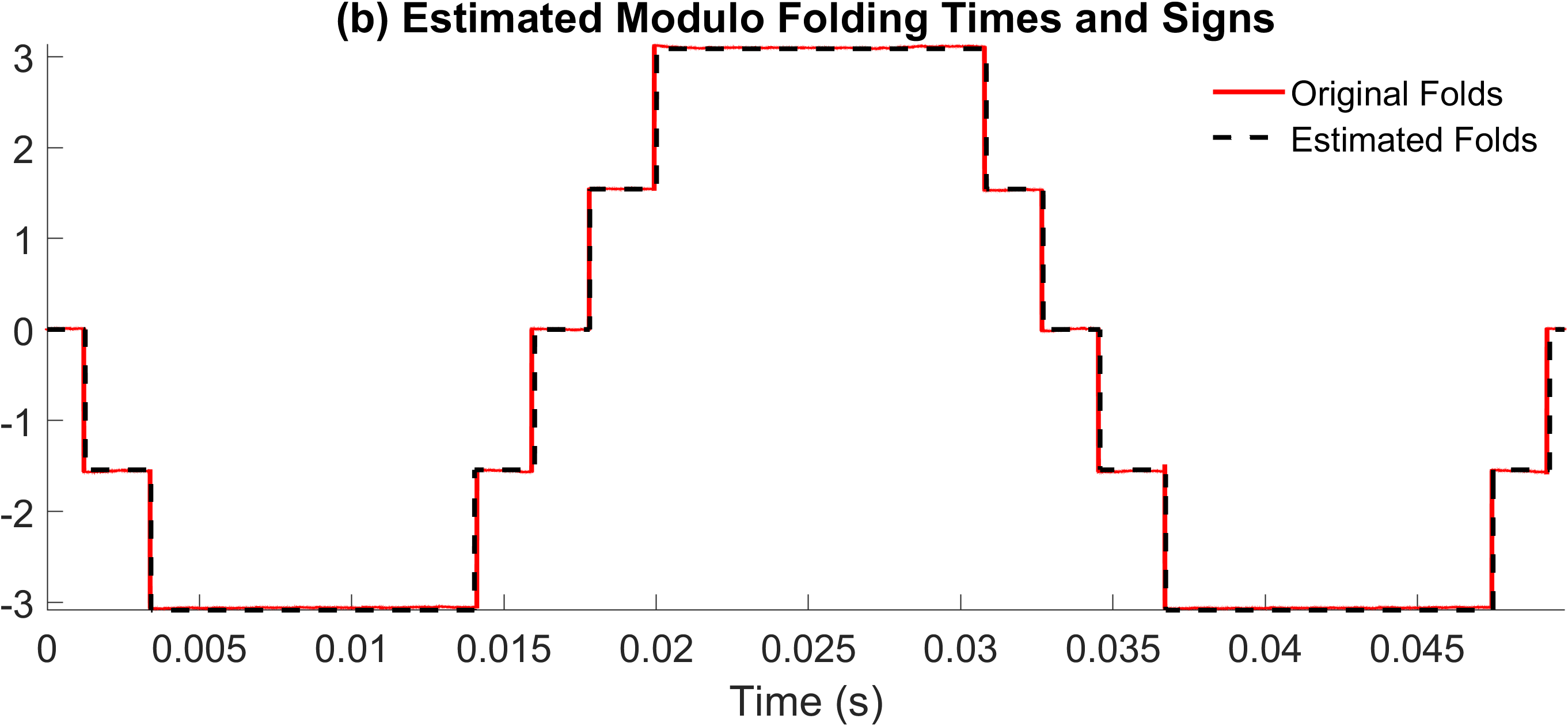}\\
\includegraphics[width=0.7\textwidth]{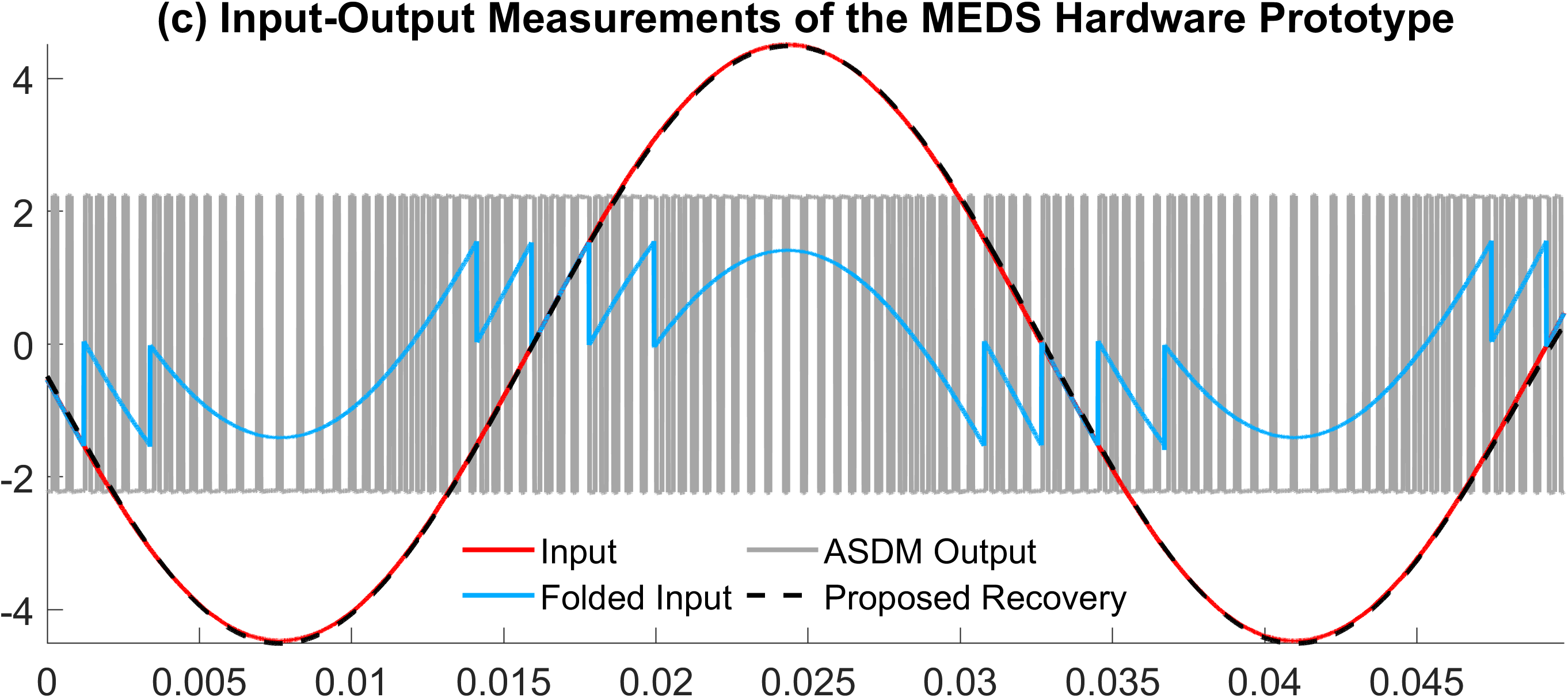}\vspace{0.35cm}
\caption{Input reconstruction from measurements based on our MEDS hardware (\fig{fig:hardware}). (a) Estimating the folding times by thresholding the filtered data. (b) Reconstruction of the residual function. (c) The proposed reconstruction superimposed on the input data.}	
\label{fig:real_data}
\end{figure}

We used the $\NFD^2$ to identify correctly the $12$ folding times. The filtered data $\NFD^2 X\rb{t_k^\textsf{MEDS}}$, the threshold $\Psi_N\sqb{k}$ and the estimated folding times are  in \fig{fig:real_data} (b,c). The input is recovered with Algorithm \ref{alg:1}. 
The resulted error is $\mathsf{Err}_{\textsf{MEDS}}=0.68\%$.

\section{Proofs}
\label{sect:Proofs}
Here we prove Proposition \ref{prop:boundDG}, Lemma \ref{lem:km_kM}, Theorem \ref{th:convergence_tau}, and Proposition \ref{prop:final_error_bound}.

\begin{proof}[Proof of Proposition \ref{prop:boundDG}]
We fix an arbitrary value $k\in\Z$ and expand $G(t_{k+l}),\quad l\in$  {$\cb{0,\dots,N}$} in Taylor series around $\tau=\frac{t_{k+N}+t_k}{2}$
\begin{align}
    G\rb{t_{k+l}}=G(\tau)+&\sum\limits_{n=1}^{N-1} \rb{t_{k+l}-\tau}^n g^{\rb{n-1}}\rb{\tau}+\frac{\rb{t_{k+l}-\tau}^N}{N!}g^{\rb{N-1}}\rb{\zeta_{k+l}},    
\end{align}
where $\zeta_{k+l}$ is between $t_{k+l}$ and $\tau$. The equation above uses that $G^{\rb{1}}(t)=g(t).$
Let $r\rb{k+l}$ denote the Taylor series remainder $r\rb{k+l}=\frac{\rb{t_{k+l}-\tau}^N}{N!}g^{\rb{N-1}}\rb{\zeta_{k+l}}$. We use the fact that  {the} operator $\NFD^N$ cancels out polynomials of degree up to $N-1$, and thus
\begin{align*}
 \vb{\NFD^N G\rb{t_k}}=\vb{\NFD^N r\rb{k}}&=\vb{\frac{\NFD^{N-1} r\rb{k+1}-\NFD^{N-1}r\rb{k}}{t_{k+N}-t_k}}\\ 
 &\leq\frac{\vb{\NFD^{N-1}r\rb{k+1}}+\vb{\NFD^{N-1}r\rb{k}}}{t_{k+N}-t_k}\\
 & \leq \frac{2 \cdot \underset{l=0,1}{\max}\vb{\NFD^{N-1}r\rb{k+l}}}{N T_\mathsf{min}}.
\end{align*}

Similarly, via induction, it can be shown that
\begingroup
\allowdisplaybreaks
\begin{align*}
\begin{split}
    \vb{\NFD^N G\rb{t_k}}&\leq\frac{2^N}{T_\mathsf{min}^N\cdot N!}\cdot\underset{l=0,\dots,N}{\max}\vb{\NFD^{0}r\rb{k+l}}\\
    &=\frac{2^N}{T_\mathsf{min}^N\cdot N!}\cdot\underset{l=0,\dots,N}{\max}\vb{\frac{\rb{t_{k+l}-\tau}^N}{N!}g^{\rb{N-1}}\rb{\zeta_{k+l}}}\\
    &=\frac{2^N}{T_\mathsf{min}^N\cdot N!} \cdot \norm{g^{\rb{N-1}}}_\infty \underset{l=0,\dots,N}{\max}\vb{\frac{\rb{t_{k+l}-\tau}^N}{N!}}\\
    &\leq\frac{2^N}{T_\mathsf{min}^N\cdot N!}\cdot \frac{\norm{g^{\rb{N-1}}}_\infty}{N!}\cdot \rb{\frac{t_{k+N}-t_k}{2}}^N\\
    &\leq \rb{\frac{N\cdot T_\mathsf{max}}{T_\mathsf{min}}}^N \cdot \frac{\norm{g^{\rb{N-1}}}_\infty}{\rb{N!}^2}.
\end{split}
\end{align*}
\endgroup
The following two inequalities follow using the Stirling and Bern\v{s}te\u{\i}n  {(cf. pg. 116, \cite{Nikol:2012:B})} inequalities, respectively.
\begin{align*}
    \vb{\NFD^N G\rb{t_k}}&\leq \rb{\frac{e\cdot T_\mathsf{max}}{T_\mathsf{min}}}^N \cdot \frac{\norm{g^{\rb{N-1}}}_\infty}{N!}\\
    &\leq \frac{1}{\Omega N!}\rb{\frac{T_\mathsf{max}}{T_\mathsf{min}}\Omega e}^N  \norm{g}_\infty.    
\end{align*}
\end{proof}

To prove Lemma \ref{lem:km_kM}, we first provide a few properties of $\mu_{l}^N\sqb{k}$ and $\beta_{l}^N\sqb{k}$ in the following proposition.

\statepropsolid{0.95\textwidth}{
\begin{prop}[Properties of $\mu_{l}^N$ and $\beta_{l}^N$]
The following hold,
\begin{equation}
\label{eq:supp_mu_beta}
\mathrm{supp}\ \mu_{l}^N\sqb{k}=\cb{l-N+1,\dots,l-1} \quad \mbox{and} \quad \mathrm{supp}\ \beta_{l}^N\sqb{k}=\cb{l-N+2,\dots,l}. 
\end{equation}
For {all} $ k \in \mathbb{Z}, \forall N \geq 2$ it follows that
\begin{align}
\label{eq:signs_mu_beta}
    \begin{split}        
        \mathrm{sign} \rb{\mu_{l}^N\sqb{k}} \cdot \mathrm{sign} \rb{\mu_{l}^N\sqb{{k+1}}} &\leq 0, \\
        \mathrm{sign} \rb{\beta_{l}^N\sqb{k}} \cdot \mathrm{sign} \rb{\beta_{l}^N\sqb{{k+1}}} &\leq 0, \\
        \mathrm{sign} \rb{\mu_{l}^N\sqb{k}} \cdot \mathrm{sign} \rb{\beta_{l}^N\sqb{{k}}} &\leq 0.
    \end{split}
\end{align}
The following bounds hold true.
\begin{align}
\label{eq:edge_bound_mu_beta}
    \begin{split}
        \frac{1}{N!\hspace{0.1em}  T_\mathsf{max}^{N-1}}&\leq \vb{\beta_{l}^N\hspace{-0.1em}\sqb{k}} \leq \frac{1}{N!\hspace{0.1em} T_\mathsf{min}^{N-1}},\ k\hspace{-0.1em}\in\hspace{-0.1em}\cb{l\hspace{-0.1em}-\hspace{-0.1em}N\hspace{-0.1em}+\hspace{-0.1em}2,l},\\
        \frac{1}{N!\hspace{0.1em} T_\mathsf{max}^{N-1}}&\leq \vb{\mu_{l}^N\hspace{-0.1em}\sqb{k}} \leq \frac{1}{N!\hspace{0.1em}  T_\mathsf{min}^{N-1}},\  k\hspace{-0.1em}\in\hspace{-0.1em}\cb{l\hspace{-0.1em}-\hspace{-0.1em}N\hspace{-0.1em}+\hspace{-0.1em}1,l\hspace{-0.1em}-\hspace{-0.1em}1},
    \end{split}
\end{align}
\begin{equation}
\label{eq:mid_bound_mu}
    \vb{\mu_{l}^N\sqb{{l-N+2}}}\leq \frac{N-2}{N!\ T_\mathsf{min}^{N-1}},
\end{equation}
\begin{equation}
\label{eq:mid_bound_beta}
    \vb{\beta_{l}^N\sqb{{l-1}}}\leq \frac{N-2}{N!\ T_\mathsf{min}^{N-1}}.
\end{equation}
\end{prop}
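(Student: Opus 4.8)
The plan is to prove \eqref{eq:supp_mu_beta}, \eqref{eq:signs_mu_beta}, \eqref{eq:edge_bound_mu_beta}, \eqref{eq:mid_bound_mu} and \eqref{eq:mid_bound_beta} \emph{simultaneously} by induction on $N\geq2$, starting from the closed forms $\mu_l^2\sqb{k}=\ind_{\cb{l-1}}\rb{k}/\rb{t_{k+2}-t_k}$ and $\beta_l^2\sqb{k}=\ind_{\cb{l}}\rb{k}/\rb{t_{k+2}-t_k}$ of Proposition~\ref{prop2:Eg_mu_beta}, and using only the recursions \eqref{eq:mu_betak}. I would strengthen the induction hypothesis with the structural claim that, on its (contiguous) support, each sequence has the form $\mu_l^N\sqb{k}=\epsilon_\mu^N\rb{-1}^k\vb{\mu_l^N\sqb{k}}$ and $\beta_l^N\sqb{k}=\epsilon_\beta^N\rb{-1}^k\vb{\beta_l^N\sqb{k}}$ for fixed signs $\epsilon_\mu^N,\epsilon_\beta^N\in\cb{\pm1}$ with $\epsilon_\mu^N\epsilon_\beta^N=-1$. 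This holds at $N=2$ by inspection, and it is exactly a compact restatement of the three sign inequalities in \eqref{eq:signs_mu_beta}.

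For the inductive step I would set $m_k\triangleq\vb{\mu_l^N\sqb{k}}\geq0$, $n_k\triangleq\vb{\beta_l^N\sqb{k}}\geq0$ and substitute the structural form into \eqref{eq:mu_betak}, which gives $\mu_l^{N+1}\sqb{k}=-\epsilon_\mu^N\rb{-1}^k\rb{m_{k+1}+m_k}/\rb{t_{k+N+1}-t_k}$ and the analogue for $\beta$. Since the denominators are positive and $m_{k+1}+m_k\geq0$, this both propagates the structural form with $\epsilon_\mu^{N+1}=-\epsilon_\mu^N$ and $\epsilon_\beta^{N+1}=-\epsilon_\beta^N$ (so $\epsilon_\mu^{N+1}\epsilon_\beta^{N+1}=-1$ is preserved, which yields \eqref{eq:signs_mu_beta}), and it reduces the whole problem to the \emph{nonnegative} recursion $m_k^{N+1}=\rb{m_{k+1}^N+m_k^N}/\rb{t_{k+N+1}-t_k}$ together with its $n$-analogue. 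Because a sum of nonnegative numbers vanishes only when both summands do, $\mathrm{supp}\,m^{N+1}$ is the union of $\mathrm{supp}\,m^N=\cb{l-N+1,\dots,l-1}$ with its left translate, i.e.\ $\cb{l-\rb{N+1}+1,\dots,l-1}$, and symmetrically $\mathrm{supp}\,n^{N+1}=\cb{l-\rb{N+1}+2,\dots,l}$; this gives \eqref{eq:supp_mu_beta}.

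The quantitative bounds then follow by tracking a few distinguished entries through the nonnegative recursion. For the right endpoint of $m^N$ the neighbour $m_l^N$ is zero, so the recursion collapses to $m_{l-1}^{N+1}=m_{l-1}^N/\rb{t_{l+N}-t_{l-1}}$; unrolling from $m_{l-1}^2=1/\rb{t_{l+1}-t_{l-1}}$ gives $m_{l-1}^N=\prod_{j=1}^{N-1}\rb{t_{l+j}-t_{l-1}}^{-1}$, and since $t_{l+j}-t_{l-1}$ is a sum of $j+1$ consecutive inter-trigger gaps it lies between $\rb{j+1}T_\mathsf{min}$ and $\rb{j+1}T_\mathsf{max}$, so the product lies between $N!\,T_\mathsf{min}^{N-1}$ and $N!\,T_\mathsf{max}^{N-1}$ (using $\prod_{j=1}^{N-1}\rb{j+1}=N!$); this is \eqref{eq:edge_bound_mu_beta} for that entry, and the left endpoint of $m^N$ and the two endpoints of $n^N$ follow by the same collapse-and-unroll. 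For the near-endpoint bound \eqref{eq:mid_bound_mu}, set $c_N\triangleq\vb{\mu_l^N\sqb{l-N+2}}$; the recursion at index $l-N+1$ gives $c_{N+1}=\rb{c_N+a_N}/\rb{t_{l+2}-t_{l-N+1}}$ with $a_N\triangleq\vb{\mu_l^N\sqb{l-N+1}}$ the left endpoint, and combining $a_N\leq1/\rb{N!\,T_\mathsf{min}^{N-1}}$ from \eqref{eq:edge_bound_mu_beta}, the inductive bound $c_N\leq\rb{N-2}/\rb{N!\,T_\mathsf{min}^{N-1}}$, and $t_{l+2}-t_{l-N+1}\geq\rb{N+1}T_\mathsf{min}$ yields $c_{N+1}\leq\rb{N-1}/\rb{\rb{N+1}!\,T_\mathsf{min}^{N}}$, which is precisely the claim at level $N+1$; \eqref{eq:mid_bound_beta} is the mirror statement on the right endpoint side.

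The hard part is purely organisational: one has to carry the strengthened hypothesis (the $\rb{-1}^k$-times-nonnegative sign structure, contiguity of the supports, the identity $\epsilon_\mu^N\epsilon_\beta^N=-1$, and all four bounds) jointly through the induction so that each ingredient feeds the next, and keep the sliding index windows straight. Once the reduction to the nonnegative recursion $m_k^{N+1}=\rb{m_{k+1}^N+m_k^N}/\rb{t_{k+N+1}-t_k}$ is in hand, everything else is a bounded amount of telescoping, the elementary bound $jT_\mathsf{min}\leq t_{k+j}-t_k\leq jT_\mathsf{max}$, and the identity $\prod_{j=2}^{N}j=N!$.
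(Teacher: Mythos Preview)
Your proposal is correct and follows essentially the same route as the paper: both argue by induction on $N$ from the base case $N=2$, using the recursion \eqref{eq:mu_betak} together with $jT_\mathsf{min}\leq t_{k+j}-t_k\leq jT_\mathsf{max}$, and both derive \eqref{eq:mid_bound_mu} via the one-step recursion $c_{N+1}=\rb{c_N+a_N}/\rb{t_{l+2}-t_{l-N+1}}$ with $a_N$ the left-edge value bounded by \eqref{eq:edge_bound_mu_beta}. Your explicit strengthened hypothesis $\mu_l^N\sqb{k}=\epsilon_\mu^N\rb{-1}^k\vb{\mu_l^N\sqb{k}}$ with $\epsilon_\mu^N\epsilon_\beta^N=-1$ is a clean way to package the sign alternation that the paper simply asserts ``can be shown directly via induction''; it makes the reduction to the nonnegative recursion $m_k^{N+1}=\rb{m_{k+1}^N+m_k^N}/\rb{t_{k+N+1}-t_k}$ transparent, whereas the paper invokes \eqref{eq:signs_mu_beta} each time it turns a difference of signed values into a sum of absolute values.
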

}

\begin{proof}
Equation \eqref{eq:supp_mu_beta} is derived directly from the definitions of $\mu_{l}^N$ and $\beta_{l}^N$ in Proposition \ref{prop2:Eg_mu_beta}. Moreover, \eqref{eq:signs_mu_beta} can be shown directly via induction for $N\geq 2$. Bounds \eqref{eq:edge_bound_mu_beta} are shown recursively by using $n T_\mathsf{min}\leq t_{k+n}-t_k \leq n T_\mathsf{max}$.

For (\ref{eq:mid_bound_mu}), we compute recursively an upper bound for $m_N\triangleq\vb{\mu_{l}^N\sqb{{l-N+2}}}$, where $m_2=0$ \eqref{eq:supp_mu_beta} and, for $N>2$,
\begin{align*}
m_{N+1}=\frac{\vb{\mu_{l}^N\sqb{{l-N+2}}-\mu_{l}^N\sqb{{l-N+3}}}}{t_{l+2}-t_{l-N+1}} & \EQc{eq:signs_mu_beta}\frac{m_N+\vb{\mu_{l}^N\sqb{{l-N+1}}}}{t_{l+2}-t_{l-N+1}}\\
&\leq\frac{m_N+\vb{\mu_{l}^N\sqb{{l-N+1}}}}{\rb{N+1} T_\mathsf{min}} \\
&\EQc{eq:edge_bound_mu_beta}{\leq}\frac{m_N}{\rb{N+1} T_\mathsf{min}}+\frac{1}{\rb{N+1}!\hspace{0.1em} T_\mathsf{min}^N}.
\end{align*}
Via induction, we derive \eqref{eq:mid_bound_mu}. The same procedure is used for the derivation of \eqref{eq:mid_bound_beta}.
\end{proof}

\begin{proof}[Proof of Lemma \ref{lem:km_kM}]
Due to (\ref{eq:supp_Eg}, \ref{eq:bound_G}) it directly follows that
\begin{equation}
    \vb{\NFD^N X \rb{t_k}}=\vb{\NFD^N G \rb{t_k}-\NFD^N E_g \rb{t_k}}< \th^N\sqb{k},
\end{equation}
for {all} $k \notin \mathbb{S}_N\Rightarrow \KN\subseteq \mathbb{M}_N\subseteq\mathbb{S}_N$. Furthermore, we note that 
\begin{align}
\label{eq:lower_bound_Eg}
\begin{split}
    \vb{\NFD^N X \rb{t_k}}
    &\geq \vb{\NFD^N E_g \rb{t_k}} - \vb{\NFD^N G \rb{t_k}}\\
    &> \vb{\NFD^N E_g \rb{t_k}} - \th^N\sqb{k}, \forall k\in{\Z_+}.
\end{split}    
\end{align}

Specifically, we will identify the values of $k$ for which $\vb{\NFD^N E_g \rb{t_k}}\geq 2 \th^N\sqb{k}$, which are contained in $\mathbb{M}_N$ according to \eqref{eq:lower_bound_Eg}. This yields upper bounds for $k_m$ and lower bounds for $k_M$. With $$E_k^N\triangleq \frac{\vb{\NFD^N E_g\rb{t_k}}}{2\lambda_h},$$ the values $k$ of interest satisfy 
\begin{equation}
    E_k^N\geq\frac{\th^N\sqb{k}}{\lambda_h}=\min\cb{\phi_{0,k+N-1},\phi_{0,k+N-2},\phi_{1,k-1},\phi_{1,k}}.
\end{equation}
An important step in the proof is to show that
\begin{gather}
\max\cb{E_{l-N+1}^N,E_{K_1-N+2}^N}\geq \phi_{0,K_1},\label{eq:lower_bounds_E1}\\
\max\cb{E_{K_1-1}^N,E_{K_1}^N}\geq \phi_{1,K_1}.
\label{eq:lower_bounds_E2}        
\end{gather}
To show \eqref{eq:lower_bounds_E1}, we denote by $f_1,f_2:\sqb{0,1} \rightarrow \mathbb{R}$ two functions 
\begin{align*}
    f_1\rb{x}&=\vb{\bar{\mu}_1 x+\bar{\beta}_1\rb{1-x}}\overset{\eqref{eq:supp_mu_beta}}{=}\vb{\bar{\mu}_1}x,\\
    f_2\rb{x}&=\vb{\bar{\mu}_2 x+\bar{\beta}_2\rb{1-x}}\overset{\eqref{eq:signs_mu_beta}}{=}\vb{\vb{\bar{\mu}_2} x-\vb{\bar{\beta}_2}\rb{1-x}},
\end{align*}
where $\bar{\mu}_i=\mu_{K_1}^N\sqb{K_1-N+i}, \bar{\beta}_i=\beta_{K_1}^N\sqb{K_1-N+i}, i=1,2$.
Then we have that  $E_{K_1-N+1}^N=f_1\rb{p}$ and $E_{K_1-N+2}^N=f_2\rb{p}$ \eqref{eq:mu_k_beta_k}, which gives
\begin{equation}
\label{eq:partial_bound}
    \max\cb{E_{K_1-N+1}^N,E_{K_1-N+2}^N}\geq\min_{x\in\sqb{0,1}} \rb{\max_{i=1,2} f_i\rb{x}}.
\end{equation}
Function $f_1\rb{x}$ is strictly increasing, whereas $f_2\rb{x}$ has one zero $f\rb{x^\ast}=0$, where $$x^\ast=\frac{\vb{\bar{\beta}_2}}{\vb{\bar{\mu}_2}+\vb{\bar{\beta}_2}}.$$ Then $f_2\rb{x}$ is strictly decreasing for $x\in\sqb{0,x^\ast}$, and thus, given that $f_1(0)=0$ and $f_2(0)>0$, it follows that there exists a unique point $$x^{\ast\ast}\in\left(0,x^\ast\right)$$ such that $f_1\rb{x^{\ast\ast}}=f_2\rb{x^{\ast\ast}}$. Indeed, when solving $f_1\rb{x}=f_2\rb{x}$ for $x$, there are two solutions 
$$
x_k=\frac{\vb{\bar{\beta}_2}}{\vb{\bar{\beta}_2}+\vb{\bar{\mu}_2}+\rb{-1}^k\vb{\bar{\mu}_1}}, \quad k=1,2,
$$ 
such that $x_2<x^\ast<x_1$, and thus 
$$x^{\ast\ast}=\frac{\vb{\bar{\beta}_2}}{\vb{\bar{\beta}_2}+\vb{\bar{\mu}_2}+\vb{\bar{\mu}_1}}.$$ It follows that $ \max_{i=1,2}f_i\rb{x}=f_2\rb{x}, x\in\sqb{0,x^{\ast\ast}}$ and $ \max_{i=1,2}f_i\rb{x}=f_1\rb{x}, x\in\sqb{x^{\ast\ast},x^\ast}$. Given that for $x>x^\ast$ $f_2\rb{x}$ is strictly increasing, then $f_2\rb{x}>f_2\rb{x^\ast}, \forall x>x^\ast$, and thus
\begin{equation*}
    \max_{i=1,2}f_i\rb{x}\geq \max_{\underset{x\in\sqb{0,x^\ast}}{i=1,2}}f_i\rb{x}\geq f_1\rb{x^{\ast\ast}}={\vb{\bar{\mu}_1}}x^{\ast\ast}=\phi_{0,K_1},
\end{equation*}
which yields \eqref{eq:lower_bounds_E1} via \eqref{eq:partial_bound}.

The bound \eqref{eq:lower_bounds_E2} is proven following the same steps as for \eqref{eq:lower_bounds_E1}. Specifically, we denote by $f_3,f_4:\sqb{0,1} \rightarrow \mathbb{R}$ 
\begin{align*}
    f_3\rb{x}&=\vb{\bar{\mu}_3 x+\bar{\beta}_3\rb{1-x}}\overset{\eqref{eq:signs_mu_beta}}{=}\vb{\vb{\bar{\mu}_3} x-\vb{\bar{\beta}_3}\rb{1-x}},\\
    f_4\rb{x}&=\vb{\bar{\mu}_4 x+\bar{\beta}_4\rb{1-x}}\overset{\eqref{eq:supp_mu_beta}}{=}\vb{\bar{\beta}_4}\rb{1-x},
\end{align*}
where $$\bar{\mu}_i=\mu_{K_1}^N\sqb{K_1-4+i}, \bar{\beta}_i=\beta_{K_1}^N\sqb{K_1-4+i}, i=3,4.$$ Using $y=1-x$ we get the same problem as before, which yields \eqref{eq:lower_bounds_E2} via direct calculation. Finally, (\ref{eq:lower_bounds_E1},\ref{eq:lower_bounds_E2}) together with \eqref{eq:lower_bound_Eg} imply that $k_m\leq K_1-N+2$ and $k_M\geq K_1-1$. This completes the proof of the lemma via $\KN\subseteq\mathbb{S}_N$.
\end{proof}

\begin{proof}[Proof of Theorem \ref{th:convergence_tau}]
To show that $\widetilde{s}_1=s_1$, we observe that $\NFD^N G\rb{t_{k_m}}<\th^N\sqb{k_m}$ and $\NFD^N X\rb{t_{k_m}}\geq\th^N\sqb{k_m}$ which implies that $\mathrm{sign} \sqb{\NFD^N X\rb{t_{k_m}}}=-\mathrm{sign} \sqb{\NFD^N E_g\rb{t_{k_m}}}$ and thus \eqref{eq:mu_k_beta_k}
\begin{equation}
    \label{eq:s1}
    \widetilde{s}_1=\mathrm{sign} \sqb{\mu_{K_1}^N\sqb{{k_m}} p + \beta_{K_1}^N\sqb{{k_m}}\rb{1-p}}\cdot s_1.
\end{equation}
\begin{enumerate}
    \item If $k_M-k_m=N-1$, then, according to Lemma \ref{lem:km_kM}, we know that $k_m=K_1-N+1, k_M=K_1$. Then, due to (\ref{eq:supp_mu_beta}), $$\mu_{K_1}^N\sqb{{k_m}} p + \beta_{K_1}^N\sqb{{k_m}}\rb{1-p}= \mu_{K_1}^N\sqb{{k_m}} p.$$ We know that
    $\mu_{K_1}^2\sqb{{K_1-1}}=\frac{1}{t_{K_1+1}-t_{K_1-1}} >0 $. It can be shown  via induction that $\mathrm{sign}\rb{\mu_{K_1}^N\sqb{{K_1-N+1}}}=1, \forall N\geq2$, and thus $\widetilde{s}_1=s_1$ via \eqref{eq:s1}. Moreover, $\widetilde{\tau}_1=\frac{t_{K_1}+t_{K_1+1}}{2}$ which satisfies \eqref{eq:tau1_err1} given that $\tau_1\in\sqb{t_{K_1},t_{K_1+1}}$.
\end{enumerate}

\begin{enumerate}[start=2]
    \item If $k_M-k_m=N-2$, via Lemma \ref{lem:km_kM} there are two options
    \begin{enumerate}
        \item $k_m=K_1-N+1, k_M=K_1-1$. Here \eqref{eq:tau1_err2} based on the same reasoning as in 1).
        \item $k_m=K_1-N+2, k_M=K_1$. Here we have that
        \begin{equation}
        \label{eq:signs_updated}
            \widetilde{s}_1={\mathrm{sign} \rb{f(p)}}\cdot s_1,
        \end{equation}
        where,
        \begin{itemize}
  \item $f\rb{p}=\bar{\mu}_2 p+\bar{\beta}_2\rb{1-p}$. 
  \item $\bar{\mu}_2=\mu_{K_1}^{N}\sqb{K_1-N+2}$, and,
  \item $\bar{\beta}_2=\beta_{K_1}^{N}\sqb{K_1-N+2}$.
\end{itemize}
We know that $\vb{\NFD^N {Z} \rb{t_{K_1-N+1}}}<\th^N\sqb{K_1-N+1}$, which implies that  $\vb{\NFD^N E_g\rb{t_{K_1-N+1}}}<2\th^N\sqb{K_1-N+1}$ via \eqref{eq:lower_bound_Eg}. This can be {rewritten} as
        \begin{align*}
            \frac{\vb{\NFD^N E_g\rb{t_{K_1-N+1}}}}{2\lambda_h}=\vb{\bar{\mu}_1}\cdot p&< \frac{\th^N\sqb{K_1-N+1}}{\lambda_h}\leq\phi_{0,K_1},
        \end{align*}      
        where $\bar{\mu}_1=\mu_{K_1}^{N}\sqb{K_1-N+1}$.
        Using \eqref{eq:def_phi01}, this gives us $p<x^{\ast\ast}$, where $x^{\ast\ast}=\tfrac{\vb{\bar{\beta}_2}}{\vb{\bar{\beta}_2}+\vb{\bar{\mu}_2}+\vb{\bar{\mu}_1}}$. Moreover, using that $\mathrm{sign}\rb{\bar{\mu}_1}=1$ from 1) and \eqref{eq:signs_mu_beta}, it can be derived that $f\rb{p}$ is a strictly decreasing line and $f\rb{x^{\ast\ast}}{>}0$. Given that $p<x^{\ast\ast}$ we get $f(p)>0$ which implies $\widetilde{s}_1=s_1$ via \eqref{eq:signs_updated}. Moreover, $\widetilde{\tau}_1=t_{k_M+1}=t_{K_1+1}$ which satisfies \eqref{eq:tau1_err2}.
        \end{enumerate}        
    \item If $k_M-k_m=N-3$, Lemma \ref{lem:km_kM} implies $k_m=K_1-N+2, k_M=K_1-1$. Then $\widetilde{s}_1=s_1$ via the same derivation as in 2) {b}). Moreover, $\widetilde{\tau}_1=\frac{t_{K_1}+t_{K_1+1}}{2}$ which satisfies \eqref{eq:tau1_err1}. 
\end{enumerate}
\end{proof}

\begin{proof}[Proof of Proposition \ref{prop:final_error_bound}]
Assuming the values $\Delta G\rb{t_k} = \rb{\locav g}_k$ known, $g$ can be recovered recursively as \cite{Florescu:2017:B,Lazar:2004:J}
\begin{align}
\begin{split}
    {g}_0\rb{t}&=\syntharg{\Delta {G}\rb{t_k}}\rb{t},\\
    {g}_{n+1}\rb{t}&={g}_{n}\rb{t}+{g}_{0}\rb{t}-\syntharg{\locav {g}_{n}}\rb{t}, n>0.
\end{split}    
\label{eq:ideal_local_av_rec}
\end{align}
Then, provided that $T_\mathsf{max}<\frac{\pi}{\Omega}$ the following hold \cite{Florescu:2015}
\begin{gather}
    \norm{g-g_n}_{L^2}\leqslant\rb{\frac{T_\mathsf{max}\Omega}{\pi}}^{n+1}\norm{g}_{L^2},\label{eq:boundgp}\\
    \norm{\mathcal{I}-\synth\locav}_{L^2}\leqslant\frac{T_\mathsf{max}\Omega}{\pi},
    \label{eq:op_bound}
\end{gather}
where $\mathcal{I}$ is the identity operator, and thus $\lim_{n\rightarrow\infty}g_n = g$.

When comparing \eqref{eq:ideal_local_av_rec} with \eqref{eq:local_av_rec}, the main difference is in the initial condition $g_0$, which is determined by samples $\Delta {G}\rb{t_k}$.
In the following, we evaluate the error in computing these samples. 
We begin by making two observations.

\emph{Observation 1.} As derived at the end of the Proof for Theorem \ref{th:convergence_tau},  $$\widetilde{\tau}_r\in\cb{t_{K_r+1},\frac{t_{K_r}+t_{K_r+1}}{2}},$$ and thus, via the definition of $K_r$, we get ${\tau}_r,\widetilde{\tau}_r\in\sqb{t_{K_r},t_{K_r+1}}$, i.e., there are no spike times in between the true and estimated fold. 

\emph{Observation 2.} Because  $N\frac{2\delta \Omega g_\infty}{b-\lambda}\leq h^*$ (condition $S_2$ in Lemma \ref{lem:sufficient_conditions}) we are guaranteed to have at least two spike times between each two consecutive folding times $\tau_r,\tau_{r+1}$.

Using \eqref{eq:expanded_Z} and that $X\rb{t_k}$ is known, we get that $$\vb{\Delta \widetilde{G}\rb{t_{k}}-\Delta G\rb{t_k}}=\vb{\Delta \widetilde{E}_g\rb{t_{k}}-\Delta E_g\rb{t_k}}.$$
We will derive a bound for the latter by considering two cases.
\begin{enumerate}
    \item $k\in \mathbb{Z}_+^* \setminus \cb{K_1,\dots,K_R}$. We first consider the trivial case where $k<K_1 \Rightarrow E_g\rb{t_k}=\widetilde{E}_g\rb{t_k}=0$ (Observation 1). If $k>K_1$ then $\exists r_0\in\cb{1,\dots,R}$ s.t. $k\in\cb{K_{r_0}+1,\dots,K_{r_0+1}-1}$ (Observation 2). Using the expression of $E_g$ we get that
    \begin{align}
        \Delta E_g \rb{t_k}&=\sum\limits_{r=1}^{r_0} 2\lambda_h s_r \Delta \rb{\ind_{\left[\tau_r,\infty\right)}\rb{t_k} \label{eq:Eg_expr} \rb{t_k-\tau_r}}\\
        &=\sum\limits_{r=1}^{r_0} 2\lambda_h s_r \rb{t_{k+1}-t_k}.
    \end{align}
    The last equality uses that there are no folds in $\sqb{t_k,t_{k+1}}$ given that $k\neq K_{r_0}$. Using the same derivation for $\Delta \widetilde{E}_g$ and that $s_r=\widetilde{s}_r, \forall r \in\cb{1,\dots,R}$, we get that $\Delta \widetilde{E}_g \rb{t_k}=\Delta E_g \rb{t_k}$.

    \item $k\in\cb{K_1,\dots,K_R}$. Let $k=K_{r_0}, r_0\in\cb{1,\dots,R}$. Due to Observation 1 we have $\tau_{r_0},\widetilde{\tau}_{r_0}\in\sqb{t_{K_{r_0}},t_{K_{r_0}+1}}$. Then, using that $\ind_{\left[\tau_{r_0},\infty\right)}\rb{t_{k}}=0$ in \eqref{eq:Eg_expr} we get
\begin{equation}
         \Delta E_g \rb{t_k}=\sum_{r=1}^{r_0-1} 2\lambda_h s_r \rb{t_{k+1}-t_k}+2\lambda_h s_{r_0} \rb{t_{k+1}-\tau_{r_0}}
    \end{equation}
    Therefore $\vb{\Delta E_g\rb{t_k}-\Delta \widetilde{E}_g\rb{t_k}}=2\lambda_h \vb{\tau_{r_0}-\widetilde{\tau}_{r_0}}$.
\end{enumerate}
By combining 1) and 2), we get:
\begin{equation}
    \vb{\Delta \widetilde{G}\rb{t_{k}}-\Delta G\rb{t_k}}\leq 2\lambda_h \max_r \vb{\tau_r-\widetilde{\tau}_r}.
    \label{eq:Gtk_err}
\end{equation}
    
To bound the estimation of $g_n$, we define $e_n\triangleq \widetilde{g}_n-g_n \Rightarrow e_0=\syntharg{\Delta \widetilde{G}\rb{t_k}-\Delta G \rb{t_k}}$ and thus
\begin{equation}
    e_0\rb{t}=\sum_{k\in\mathbb{K}_r} \sqb{\Delta \widetilde{G}\rb{t_k}-\Delta G \rb{t_k}} \cdot \mathrm{sinc}_\Omega \rb{t-s_k}.
\end{equation}
Given that $\vb{\mathbb{K}_r}\leq 2R$, via the triangle inequality and \eqref{eq:Gtk_err} we have that
\begin{align}
\begin{split}
    \norm{e_0}_{L^2}&\leq 2R \max_k\vb{\Delta \widetilde{G}\rb{t_k}-\Delta G \rb{t_k}} \cdot \norm{\mathrm{sinc}_\Omega}_{L^2}\\
    &\leq 4 \lambda_h R\cdot T_\mathsf{max} \sqrt{\frac{\Omega}{\pi}}.
\end{split}    
    \label{eq:bound_e0}
\end{align}
Furthermore, we have that (\ref{eq:local_av_rec}, \ref{eq:ideal_local_av_rec})
\begin{align}
    \norm{e_n}_{L^2}&=\norm{e_{n-1}+e_0-\synth\locav e_{n-1}}_{L^2}\\
    &\leq\norm{e_0}_{L^2}+\norm{\rb{\mathcal{I}-\synth\locav}e_{n-1}}_{L^2}\\
    &\overset{\eqref{eq:op_bound}}{\leq}\norm{e_0}_{L^2}+\norm{e_{n-1}}_{L^2}\cdot \frac{T_\mathsf{max}\Omega}{\pi}.  
\end{align}
Via induction, given that $T_\mathsf{max}<\frac{\pi}{\Omega}$, we get
\begin{align}
    \norm{e_n}_{L^2}&\leq \norm{e_0}_{L^2} \sum_{l=0}^n \rb{\frac{T_\mathsf{max}\Omega}{\pi}}^l =\norm{e_0}_{L^2}\frac{1-\rb{\frac{T_\mathsf{max}\Omega}{\pi}}^{n+1}}{1-\frac{T_\mathsf{max}\Omega}{\pi}}\leq \frac{\norm{e_0}_{L^2}}{1-\frac{T_\mathsf{max}\Omega}{\pi}}.\label{eq:bound_ep}
\end{align}
Finally, we bound the reconstruction error as
\begin{equation}
    \norm{\widetilde{g}_n-g}_{L^2}\leq \norm{\widetilde{g}_n-g_n}_{L^2}+\norm{g-g_n}_{L^2},
\end{equation}
which yields the desired result via (\ref{eq:bound_e0},\ref{eq:bound_ep},\ref{eq:boundgp}), where $T_\mathsf{max}$ satisfies \eqref{eq:ISIbounds}.
\end{proof}

\section{Conclusions and Future Work}
\label{sect:conclusions}

\bpara{Summary of Results.} We proposed a new encoding model based on the modulo event-driven sampling (MEDS) pipeline consisting of a modulo-hysteresis nonlinearity in series with an asynchronous sigma-delta modulator (ASDM) model. Our result contributes to the existing efforts of alleviating the dynamic range restrictions of existing EDS sensors. We provide  mathematical guarantees for the input reconstruction, and validate the MEDS model with synthetic and hardware experiments. Numerical simulations show that the proposed method can recover inputs with a dynamic range up to $20$ times the modulo threshold.

\bpara{Future Work.} We plan on following up the results in several directions,
\begin{enumerate}
\item  The method can be extended to modulo models in series with other event-driven encoders, such as level-crossing encoders and biphasic integrate-and-fire neurons.
\item The nonuniform finite difference could be replaced by a more general operator for achieving a better recovery.
\item The current methodology can be extended to inputs in multiple dimensions.
\end{enumerate}

New exciting hardware applications such as event-driven cameras show the promise that this encoding scheme holds in the field of signal processing. Despite this, the sensor dynamic range restriction remains a fundamental bottleneck in the field. Recently, it was shown that a high dynamic range image can be recovered from uniform samples in a single capture \cite{Bhandari:2020:C}. Given that event-driven signals have an inherently low dynamic range, this line of research represents a natural step in the field that could lead to promising new encoding strategies.

\section*{Acknowledgements}
 {The authors thank AB's undergraduate mentee Jerry Chen for help with the experiments, specially during the pandemic and the anonymous reviewers whose comments have improved the readability of our work.}

\ifCLASSOPTIONcaptionsoff
\newpage
\fi

\bibliographystyle{IEEEtran}


\end{document}